\newcommand{\smalltitle}{}
\newcommand{\doctitle}{On the Existence of Dynamics of Wheeler-Feynman Electromagnetism}

\def\arxiv{1}
\def\kindle{0}
\def\draftmark{0}
\newcommand{\ifarxiv}[2]{{\if\arxiv 1 #1 \else #2 \fi}}

%=============================================================================
%= arxiv/journal version

\if\arxiv 1 %% arxiv version?
  \documentclass[oneside,12pt]{article}

  \if\kindle 1
    \usepackage[paperwidth=16cm, paperheight=22cm, hmargin=0.2cm, vmargin=0.2cm]
      {geometry}
  \else
    \usepackage[letterpaper,left=1in,right=1in,top=1in,bottom=1in]{geometry}
  \fi
  \usepackage{fancyhdr}
  \usepackage{ifpdf}

  \usepackage{amsfonts,amsmath, amssymb, amsthm, bbm, subfigure, wasysym}

  \fancypagestyle{myheadsfoots}{
    \fancyhead[L]{\small\textit{\doctitle}}
    \fancyhead[R]{\small\thepage}
    \fancyfoot{}
  }
\else %% not the arxiv version?
  \documentclass[oneside,10pt]{article}

  \usepackage[letterpaper,left=1in,right=1in,top=1in,bottom=1in]{geometry}
  \usepackage{fancyhdr}
  \usepackage{ifpdf}

  \usepackage{amsfonts,amsmath, amssymb, amsthm, bbm, subfigure, wasysym}

  \fancypagestyle{myheadsfoots}{
    \fancyhead[L]{\small\textit{\doctitle}}
    \fancyhead[R]{\small\thepage}
    \fancyfoot{}
  }
\fi

\usepackage{multicol,enumerate,bbm,txfonts,ifpdf,xrf,ifthen,refcount}

\usepackage[disable]{todonotes}

  \if\draftmark 1
    \usepackage{graphicx}
    \usepackage{type1cm}
    \usepackage{eso-pic}
    \usepackage{color}

    \makeatletter
    \AddToShipoutPicture{%
                \setlength{\@tempdimb}{.5\paperwidth}%
                \setlength{\@tempdimc}{.5\paperheight}%
                \setlength{\unitlength}{1pt}%
                \put(\strip@pt\@tempdimb,\strip@pt\@tempdimc){%
            \makebox(0,0){\rotatebox{45}{\textcolor[gray]{0.85}%
            {\fontsize{6cm}{6cm}\selectfont{DRAFT}}}}%
                }%
    }
    \makeatother

  \fi

%=============================================================================
%= PDF and hyperlink support

\ifpdf
  \usepackage{pdfsync}
  \usepackage[pdftitle={\doctitle},
              pdfauthor={G. Bauer, D.-A. Deckert, D. Duerr},
              pdfpagemode=UseOutlines,
              pdfstartview=FitH,
              bookmarks=true,
              bookmarksopen=true,
              bookmarksnumbered=true,
              bookmarkstype=toc,
              colorlinks=true,
              linkcolor=blue,
              citecolor=blue,
              urlcolor=blue]{hyperref}
\fi

%=============================================================================
%= definitions

\newtheorem{theorem}{Theorem}[section]
\newtheorem{lemma}[theorem]{Lemma}
\newtheorem{corollary}[theorem]{Corollary}

\newtheorem{definition}[theorem]{Definition}

\newtheorem{notation}[theorem]{Notation}

\newtheorem{remark}[theorem]{REMARK}

\newcommand{\bounds}{\mathtt{Bounds}}                   % functions which are used
														% as upper bounds
\if\arxiv 1
  \renewcommand{\cal}[1]{{\mathcal{#1}}}                    % caligraphic letters
\else
  \renewcommand{\cal}[1]{{\mathcal{#1}}}                    % caligraphic letters
\fi
                 % kern
\newcommand{\Ran}{{\operatorname{Range}\,}}             	% range
				% sign
\newcommand{\supp}{{\operatorname{supp}\,}}             % support of a function
\newcommand{\vect}[1]{{\mathbf{#1}}}                    % vectors
\newcommand{\bb}[1]{{\mathbb{#1}}}                      % board block letters
\newcommand{\intdv}[1]{{\int d^3#1\;}}                  % integrate d^3?
\newcommand{\charf}[1]{\mathbbm{1}_{#1}}                % characteristic function
                    % short dash for formulas

            % integrate d^3/(2pi)^3/2?
\newcommand{\WF}{{\operatorname{WF}}}                   % WF index
                   % WF index
                   % WF index
                   % WF index
\newcommand{\bigoh}{\operatorname{O}}                   % big oh notation
\newcommand{\braket}[1]{\left\langle #1 \right\rangle}  % braket notation

\newcommand{\id}[0]{{\operatorname{id}}}            % identity
\newcommand{\pQ}{\mathtt{Q}}
\newcommand{\pP}{\mathtt{P}}
\newcommand{\pQP}{\mathtt{(Q+P)}}
\newcommand{\pF}{\mathtt{F}}
\newcommand{\mA}[0]{{\mathsf A}}
\newcommand{\mW}[0]{{\mathsf W}}
\newcommand{\mJ}[0]{{\mathsf J}}

\newcommand{\Lip}[0]{{\operatorname{Lip}}}

%=============================================================================
%= head

%=============================================================================
%= document
\begin{document}

\pagestyle{empty}

\title{\textbf{\doctitle}\\\large{\smalltitle}}

\author{G. Bauer\thanks{gernot.bauer@fh-muenster.de}, D.-A. Deckert\thanks{deckert@math.ucdavis.edu}, D. D\"urr\thanks{duerr@math.lmu.de}}
  
\date{September 16, 2010, rev. \today}

% place title
\maketitle

% abstract
\begin{abstract} We study the equations of Wheeler-Feynman electrodynamics which is an action-at-a-distance theory about world-lines of charges that interact through their corresponding advanced and retarded  Li\'enard-Wiechert field terms. 
The equations are non-linear, neutral, and involve time-like advanced as well as retarded arguments of unbounded delay. Using a reformulation in terms of  Maxwell-Lorentz electrodynamics without self-interaction,  which we have introduced in  a preceding work, we are able to establish the existence of conditional solutions. These are solutions that solve the Wheeler-Feynman equations on any finite time interval with prescribed continuations outside of this interval.  As a byproduct we also prove existence and uniqueness of solutions to the Synge equations on the time half-line for a given history of charge trajectories.\\

\small
\noindent\textbf{Keywords:} Maxwell-Lorentz Equations, Maxwell Solutions, Li\'enard-Wiechert fields, Synge Equations, Wheeler-Feynman Equations, Absorber Electrodynamics, Radiation Damping, Self-Interaction.\\

\noindent\textbf{Acknowledgments:} The authors want to thank Martin Kolb for his valuable comments. D.-A.D. gratefully acknowledges financial support from the \emph{BayEFG} of the \emph{Freistaat Bayern} and the \emph{Universi\"at Bayern e.V.} as well as  from the post-doc program of the DAAD.
\end{abstract}

\if\arxiv 1
  \tableofcontents
\fi
  \pagestyle{myheadsfoots}

\section{Introduction}

%\todo[inline]{Only one $m$, $e$ and $\varrho$ instead of $m_i$, $e_i$, $\varrho$!
%  For the rest of this work we fix the charge distributions $\varrho\in\cal C^\infty_c(\bb R^3,\bb R)$ such that $\supp\varrho\subset B_R(0)\subset\bb R^3$ for one finite $R>0$, any masses $m\neq0$, and any choice of coupling constants $e_+,e_-\in\{0,1\}$, $1\leq i\leq N$. Furthermore, we choose a weight $w\in\cal W^\infty$ for which Theorem \ref{thm:WF initial conditions} (Characterization of WF Solutions) holds.}
%

 Wheeler-Feynman electrodynamics (WF) describes the classical, electromagnetic interaction of a number of $N$ charges by action-at-a-distance \cite{wheeler_classical_1949}. In contrast to Maxwell-Lorentz electrodynamics the theory contains no fields and is free from ultraviolet divergences originating from ill-defined self-fields. Electrodynamics without fields was considered as early as 1845 by Gauss \cite{gau_letter_1845}  and continued to be of interest, e.g. \cite{schwarzschild_zur_1903, tetrode_ber_1922, fokker_ein_1929}. In particular, it led Wheeler and Feynman \cite{wheeler_interaction_1945,wheeler_classical_1949} to a satisfactory description of radiation damping: Accelerated charges are supposed to radiate and to loose energy thereby. How can this be accounted for in a theory without fields?  To answer this question Wheeler and Feynman introduced a so-called \emph{absorber condition} which needs to be satisfied by the world-lines of all charges, and they argue that it is  satisfied on thermodynamic scales. Under the absorber condition it is straightforwardly seen that the motion of any selected charge is described effectively by the Lorentz-Dirac equation, an equation Dirac derived \cite{dirac_classical_1938} to explain the phenomena of radiation damping; see our short discussion in \cite{bauer_maxwell_2010}. The advantage in Wheeler and Feynman's derivation of the Lorentz-Dirac equation is that it bares no divergences  in the defining equations which provoke unphysical, so-called \emph{run-away}, solutions. At the same time Wheeler and Feynman's argument is able to explain the irreversible nature of radiation phenomena. These features make WF the most promising candidate for arriving at a mathematically well-defined theory of relativistic, classical electromagnetism.
 
However, mathematically WF is completely open. 
%The reason for this is not the lack of mathematical interest
%(see e.g.  \cite{driver_backwards_1969, driver_canfuture_1979, angelov_synge_1990, bauer_ein_1997})
%but the difficulties involved in the study of its equations.
It is not an initial value problem for differential equations because the WF equations contain time-like advanced and retarded state-dependent arguments for which no theory of existence or uniqueness of solutions is available. Apart from two exceptions discussed below, it is not even known whether in general there are solutions at all.
%The exception are the Schild solutions \cite{schild_electromagnetic_1963}, specially positioned charges revolving around each other on stable orbits, and the scattering solutions of two like charges constrained on the straight line \cite{bauer_ein_1997}. 
 In tensor notation, WF is defined by: 
\begin{align}\label{eqn:cea wf}
  m\ddot z^\mu_i(\tau)=e_i \sum_{\stackrel{k=1,\ldots, N}{k\neq i}}\frac{1}{2}\left[F[z_k]_{+}^{\mu\nu}(z_i(\tau))+F[z_k]_{-}^{\mu\nu}(z_i(\tau))\right]\dot z_{i,\nu}(\tau), 
\end{align}
where
\begin{align}\label{eqn:cea lienard wiechert fields}
  F^{\mu\nu}=\partial^\mu A^{\nu}-\partial^\nu A^{\mu}, \qquad A[z_k]^\mu_{\pm}(x) &:= e_k\frac{\dot{ z}_k^\mu(\tau_{k,\pm}(x))}{(x- z_{k}(\tau_{k,\pm}(x)))_\nu\;\dot{ z}_k^\nu(\tau_{i,\pm}(x))},
\end{align}
and the world line parameters $\tau_{k,+},\tau_{k,-}:\bb M\to\bb R$ are implicitly defined through
\begin{align}\label{eqn:time ret}
   z_k^0(\tau_{k,+}(x)) = x^0+\|\vect x- \vect z_{k}(\tau_{k,+}(x))\|, \qquad
   z_k^0(\tau_{k,-}(x)) = x^0-\|\vect x- \vect z_{k}(\tau_{k,-}(x))\|.
\end{align}
Here, the world lines of the charges $z_i:\tau\mapsto z_i^\mu(\tau)$ for $1\leq i\leq N$  are parametrized by proper time $\tau\in\bb R$ and take values in Minkowski space $\bb M:=(\bb R^4,g)$ equipped with the metric tensor $g=\operatorname{diag}(1,-1,-1,-1)$.  We use Einstein's summation convention for Greek indices, i.e. $x_\mu y^\mu:=\sum_{\mu=0}^3g_{\mu\nu}x^\mu y^\nu$, and the notation $x=(x^0,\vect x)$ for an $x\in\bb M$ in order to distinguish the time component $x^0\in\bb R$ from the spatial components $\vect x\in\bb R^3$.  
The overset dot denotes a differentiation with respect to the world-line parametrization $\tau$. For simplicity each particle has the same inertial mass $m\neq 0$ (all presented results however hold for charges having different masses, too). The coupling constant $e_i$ denotes the charge of the $i$-th particle. 

If one were to insist on using field theoretic language then one may also say that
equations (\ref{eqn:cea wf}) describe the interaction between the charges via their advanced and retarded Li\'enard-Wiechert fields $F[z_k]_{+},F[z_k]_{-}$, $1\leq k\leq N$. These fields are special solutions of the Maxwell equations of classical electrodynamics corresponding to a prescribed world-line $z_k$. The functional dependence on $\tau\mapsto z_k(\tau)$ is emphasized by the square bracket notation $[z_k]$.
%These fields are are antisymmetric second rank tensor fields on $\bb M$ and functionals of the world line of the $k$th charge. They are given explicitly by
%\begin{align}\label{eqn:cea lienard wiechert fields}
%  F^{\mu\nu}_{k,\pm}=\partial^\mu A^{\nu}_{k,\pm}-\partial^\nu A^{\mu}_{k,\pm} && \text{and} && A^\mu_{k,\pm}(x) &= e_k\frac{\dot{ z}_k^\mu(\tau_{k,\pm}(x))}{(x- z_{k}(\tau_{k,\pm}(x)))_\nu\;\dot{ z}_k^\nu(\tau_{i,\pm}(x))}.
%\end{align}
%The world line parameters $\tau_{k,\pm}:\bb M\to\bb R$ are implicitly defined through
%\begin{align}\label{eqn:time ret}
%   z_k^0(\tau_{k,+}(x)) = x^0+\|\vect x- \vect z_{k}(\tau_{k,+}(x))\| && \text{and} &&
%   z_k^0(\tau_{k,-}(x)) = x^0-\|\vect x- \vect z_{k}(\tau_{k,-}(x))\|
%\end{align}
%where $\|\cdot\|$ denotes the euclidean norm. 
Given an $x\in\bb M$ and a time-like world line $\tau\mapsto z_k(\tau)$, i.e. one fulfilling $\dot z_{k,\mu}\dot z_{k}^\mu>0$, the solutions $\tau_{k,+}(x)$, $\tau_{k,-}(x)$, are unique and given by the intersection of the forward and backward light-cone of space-time point $x$ and the world-line $z_k$, respectively; see Figure \ref{fig:advance_delay}. The acceleration on the left-hand side of the WF equations depends through (\ref{eqn:time ret}) on time-like advanced as well as retarded data (with respect to $z_{i}^{0}(\tau)$) of all the other world lines; see Figure \ref{fig:wf ladder}. The delay is unbounded, and by (\ref{eqn:cea lienard wiechert fields})  the right-hand side of (\ref{eqn:cea wf}) again depends on the acceleration.

\begin{figure}[ht]
  \begin{center}
    \subfigure[\label{fig:advance_delay}]{
      \if\arxiv 1
        \includegraphics[scale=1]{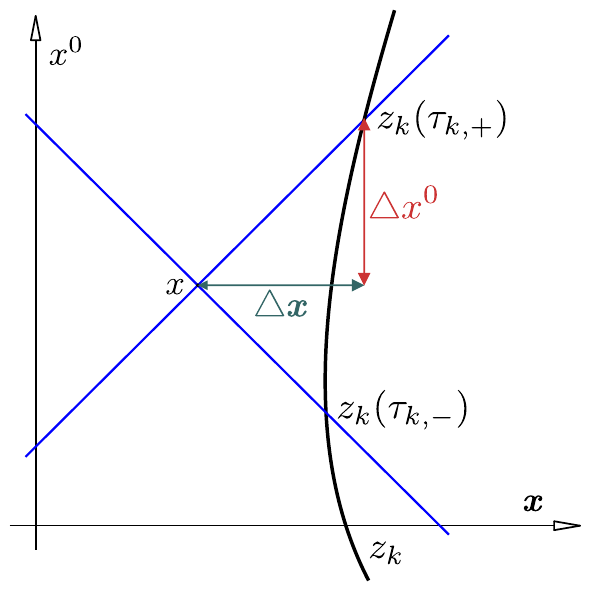}
      \else
        \includegraphics[scale=.7]{figures/advance_delay}
      \fi
    }
    \subfigure[\label{fig:wf ladder}]{
      \if\arxiv 1
        \includegraphics[scale=1]{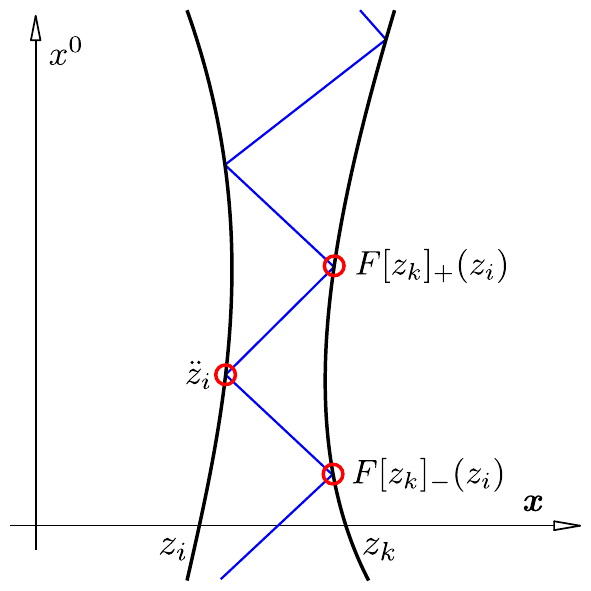}
      \else
        \includegraphics[scale=.7]{figures/wf_ladder}
       \fi
    }
    \caption{(a) Solutions of equations (\ref{eqn:time ret}) for $\triangle x^0:=z_k^0(\tau_{k,+}(x))-x^0$ and $\triangle x:=\|\vect x- \vect z_{k}(\tau_{k,+}(x))\|$. (b) Two WF world lines $z_i$ and $z_k$ interacting via a ladder of light-cones ($45^\circ$ lines since in our units the speed of light equals one). Hence, the value of $\ddot z_i$ depends on both advanced and retarded data $F[z_k]_{+}(z_i)$ and $F[z_k]_{+}(z_i)$, respectively.}
  \end{center}
\end{figure}
It is noteworthy that in 
early 1900 the mathematician and philosopher A.N. Whitehead developed a philosophical view on nature which rejects ``initial value problems''   as fundamental descriptions of nature \cite{whitehead_concept_1920}. He developed his own gravitational theory and motivated Synge's study of what is now referred to as Synge equations \cite{pauli_relativitltstheorie_1921,synge_electromagnetic_1940}, i.e.
\begin{align}\label{eqn:synge}
  m\ddot z^\mu_i(\tau)=e_i \sum_{\stackrel{k=1,\ldots, N}{k\neq i}}F[z_k]_{-}^{\mu\nu}(z_i(\tau))\dot z_{i,\nu}(\tau).
\end{align}
 The Synge equations share many difficulties with the WF equations but, as we shall show, are simpler to handle because they only depend on time-like retarded arguments. We would like to remark that independent of Whitehead's philosophy it seems to be the case that often fields are introduced to formulate a physical law, even though it may have a delay character, as initial value problem. Maxwell-Lorentz electrodynamics is a prime example. However, these very fields are then often the source of singularities of the theory, quantum or classical. Whitehead's idea might therefore point towards a fruitful new reflection about the character of physical laws.

 The books \cite{driver_ordinary_1977,diekmann_delay_1995} provide a beautiful overview on the topic of delay differential equations. However, for the WF equations as well as similar types of delay differential equations with advanced and retarded arguments of unbounded delay there are almost no mathematical results available. The problem  one usually deals with in the field of differential equations without delay is extension of  local solutions to a maximal domain and avoiding critical points by  introducing a notion of typicality of initial conditions. For WF the situation is dramatic. Because of the unbounded delay, the notion of local solutions does not make sense, so that the issue is not local versus global existence and also not explosion or running into singular points of the vector field.  The issue is simply: {\em Do solutions exist?} and  { \em What kind of data of the solutions is necessary and/or sufficient to characterize solutions uniquely?} 
 
 To put our work in perspective we call attention to the following literature: Angelov studied existence of Synge solutions in the case of two equal point-like charges and three dimensions \cite{angelov_synge_1990}. Under the assumption of an extra condition on the minimal distance between the charges to prevent collisions, he proved existence of Synge solutions on the positive time half-line. Uniqueness is only known in a special case in one dimension for two equal charges initially having sufficiently large opposite velocities and sufficiently large space-like separation. Under these conditions Driver has shown \cite{driver_backwards_1969} that the Synge solutions are uniquely characterized by initial positions and momenta. With regards to WF two types of special solutions are known to exist: First, the explicitly known Schild solutions \cite{schild_electromagnetic_1963} composed out of specially positioned charges revolving around each other on stable orbits, and second, the scattering solutions of two equal charges constrained on the straight line \cite{bauer_ein_1997}. The latter result rests on the fact that the asymptotic behavior of  world-lines on the straight line is well controllable (due to this special geometry the acceleration dependent term on the right-hand side of (\ref{eqn:cea wf}) vanishes). Uniqueness of WF solutions was proven in one dimension with zero initial velocity and sufficiently large separation of two equal charges  \cite{driver_canfuture_1979}. In a recent work \cite{de_luca_variational_2009} a well-defined analogue of the formal Fokker variational principle for two charges restricted to finite intervals was proposed. It is shown that its minima, if they exist, fulfill the WF equations on these finite times intervals. Furthermore, there are conjectures about uniqueness of WF solutions e.g. \cite{wheeler_classical_1949, van_dam_classical_1965, anderson_principles_1967, synge_relativity_1976}. While Driver's result \cite{driver_canfuture_1979} points to the possibility of uniqueness by initial positions and momenta, Bauer's \cite{bauer_ein_1997} work suggests to specify asymptotic positions and momenta. Furthermore, a WF toy model for two charges in three dimensions was given in \cite{deckert_electrodynamic_2010} for which a sufficient condition for a unique characterization of all its (sufficiently regular) solutions is the prescription of strips of time-like world lines long enough such that at least for one point on each strip the right-hand side of the WF equation is well-defined and the WF equation is fulfilled.

\section{Our Setup and Results}\label{sec:main}

 Our focus is on the bare existence of
solutions of WF, i.e. on the question: {\em Do solutions exist?}  For that question the issue that in a dynamical evolution of a system of point-like charges
catastrophic events may happen is secondary (compare the famous $n$-body problem of classical gravitation \cite{siegel_lectures_1971}).  More on target,  such 
considerations would have to invoke a notion of typicality of trajectories, so that catastrophic events can be shown to be atypical. 
But that would require not only existence of solutions but also a classification of solutions. We are far from that.
To avoid such issues at this early state of research we regard $\text{WF}_\varrho$ as introduced  in  \cite{bauer_maxwell_2010} instead of WF, i.e. we consider extended rigid charges described by the charge distributions $\varrho_{i}$, $1\leq i\leq N$, where singularities do not even occur when charges pass through each other.

For our mathematical analysis it is convenient  to express $\text{WF}_\varrho$  in coordinates where it takes the form
\begin{align}\label{eqn:WF equation written out}
  \begin{split}
    \partial_t\vect q_{i,t}&=\vect v(\vect p_{i,t}):=\frac{\vect p_{i,t}}{\sqrt{m^2+\vect p_{i,t}^2}}\\
    \partial_t\vect p_{i,t}&=\sum_{\stackrel{k=1,\ldots, N}{k\neq i}}\intdv x\varrho_i(\vect x-\vect q_{i,t})\left(\vect E_t[\vect q_k,\vect p_k](\vect x)+\vect v(\vect q_{i,t})\wedge\vect B_t[\vect q_k,\vect p_k](\vect x)\right)
  \end{split}
\end{align}
for $1\leq i\leq N$ and
\begin{align}\label{eqn:WF fields def}
  \begin{pmatrix}
    \vect E_t^{(e_{+},e_{-})}[\vect q_i,\vect p_i](\vect x)\\
    \vect B_t^{(e_{+},e_{-})}[\vect q_i,\vect p_i](\vect x)
  \end{pmatrix}=\sum_\pm 4\pi e_\pm \int ds\intdv y K_{t-s}^\pm(\vect x-\vect y)\begin{pmatrix}
    -\nabla\varrho_i(\vect y-\vect q_{i,s})-\partial_s\left[\vect v(\vect p_{i,s})\varrho_i(\vect y-\vect q_{i,s})\right]\\
    \nabla\wedge\left[\vect v(\vect p_{i,s})\varrho_i(\vect y-\vect q_{i,s})\right]
  \end{pmatrix}
\end{align}
where as in (\ref{eqn:WF equation written out}) most of the time we drop the superscript $^{(e_{+},e_{-})}$. Here, $K^\pm_t(\vect x):=\frac{\delta(\|\vect x\|\pm t)}{4\pi\|\vect x\|}$ are the advanced and retarded Green's functions of the d'Alembert operator. The partial derivative with respect to time $t$ is denoted by $\partial_t$, the gradient by $\nabla$, the divergence by $\nabla\cdot$, and the curl by $\nabla\wedge$.  At time $t$ the $i$th charge for $1\leq i\leq N$ is situated at position $\vect q_{i,t}$ in space $\bb R^3$, has momentum $\vect p_{i,t}\in\bb R^3$ and carries the classical mass $m\in\bb R\setminus\{0\}$. The geometry of the rigid charges are described by the smooth charge densities $\varrho_{i}$ of compact support , i.e. $\varrho_i\in\cal C^\infty_c(\bb R^3,\bb R)$, for $1\leq i\leq N$.

Using the notation $\vect E_{t}:=(F^{0i}(t,\cdot))_{1\leq i\leq 3}$ and $\vect B_{t}:=(F^{23}(t,\cdot), F^{31}(t,\cdot),F^{12}(t,\cdot))$ and replacing $\varrho_{i}$ by the three dimensional Dirac delta distribution $\delta^{(3)}$ times $e_{i}$ one retrieves from (\ref{eqn:WF equation written out}) the WF equations (\ref{eqn:cea wf}) for $e_+=\frac{1}{2}=e_-$ and the Synge equations (\ref{eqn:synge}) for $e_+=0,e_-=1$.  As discussed in Theorem \ref{thm:LWfields}, the expression (\ref{eqn:WF fields def}) for the choices for $e_+=1,e_-=0$ and $e_+=0,e_-=1$ is the advanced and retarded Li\'enard-Wiechert field, respectively. The square brackets $[\vect q_i,\vect p_i]$ emphasize that these fields are functionals of the charge trajectory $t\mapsto(\vect q_{i,t},\vect p_{i,t})$ and no dynamical degrees of freedom in their own.\\

The first idea to come to grips with existence of solutions is to adapt   fixed point arguments  from ordinary differential 
equations. That is not practical because of two difficulties.
The first difficulty is that in general in WF one cannot separate the second order derivative from lower order derivatives; see (\ref{eqn:LW_fields}), (\ref{eqn:LW_E_integrand}) and (\ref{eqn:LW_B_integrand}) for a more explicit expression of (\ref{eqn:WF fields def}).  Therefore one cannot rewrite the
WF equations in terms of an integral equation which is normally employed in the fixed points arguments.
%The mathematical results take up on the aforementioned two questions. Although, as often in young fields in mathematics and physics, we will not be able to answer these questions to full satisfaction, we feel that the presented results advance the understanding of the WF equations and may trigger further results in this field. Before we provide the formal proofs in the succeeding sections we summarize the key ideas for the proofs informally:
The second difficulty is that the time-like advanced and retarded arguments   introduced by (\ref{eqn:WF fields def}) are of unbounded delay so that WF dynamics makes only sense for charge trajectories which are globally defined in time. One would thus have to find an appropriately normed space of functions on $\mathbb{R}$ on which the fixed point map can be defined -- which has not been found yet. 
One may circumvent this problem by introducing a  notion of conditional solution where outside a chosen time interval $[-T,T]$
the trajectories are prescribed by hand. The fixed point argument - if that could be formulated - would then run on the time interval $[-T,T]$ only. If successful, one may 
then try to construct a bonafide global solution by letting $T\to \infty$.  
In this work we show how one can formulate a fixed point procedure on intervals $[-T,T]$ for arbitrary large $T>0$, i.e. we show how one can circumvent
the first difficulty albeit gaining conditionally solutions only. The extension to global solutions would require good control
on the asymptotic behavior (as e.g. in \cite{bauer_ein_1997} in the case of the motion on the straight line), which we do not pursue here. We stress, however, that the extension to infinite time intervals is an interesting and worthwhile task, joining the results of this paper with the removal technique for $T\to\infty$  introduced in \cite{bauer_ein_1997}.

The key idea to define a fixed point map on time intervals $[-T,T]$ is a reformulation of the WF functional differential equations into a system of non-linear partial differential equations without delay, namely the Maxwell-Lorentz equations without self-interaction (abbrev. ML-SI) introduced in \cite[(4)-(7)]{bauer_maxwell_2010}\todo{check equations references}. Relying on the notation in \cite[(13)]{bauer_maxwell_2010} \todo{check equation references} the relation between WF and ML-SI can be expressed as an equality of sets of charge trajectories:
\begin{align}\label{eqn:crucial}
\text{WF}=\big\{\text{world lines of ML-SI}\upharpoonright\{F_0\equiv 0\} \big\}.
\end{align}
On the left-hand side we consider the set of trajectories of the charges that fulfill WF. On the right we have the set charge trajectories corresponding to solutions of ML-SI restricted to the subset for which there is no homogeneous field $F_0=F-\frac{1}{2}(F_{+}+F_{-})$, i.e. the actual fields $F$ coincide with the Wheeler-Feynman fields (\ref{eqn:WF fields def}). 

%
%\paragraph{Central Observation:} Our study is based on the observation that there is an intimate connection between WF and ML-SI dynamics (ML-SI stands for Maxwell-Lorentz without self-interaction) which has been studied in Part I \cite{bauer_maxwell_2010}. 

In the case of rigid charges, we shall use the relation (\ref{eqn:crucial}) in the following way:  Consider charge trajectories $t\mapsto(\vect q_{i,t},\vect p_{i,t})_{1\leq i\leq N}$  which solve $\text{WF}_\varrho$. By definition, the fields (\ref{eqn:WF fields def}) fulfill the Maxwell equations which implies that the map
\begin{align*}
 t\mapsto(\vect q_{i,t},\vect p_{i,t},\vect E_{i,t},\vect B_{i,t})_{1\leq i\leq N}:=(\vect q_{i,t},\vect p_{i,t},\vect E_t[\vect q_i,\vect p_i],\vect B_t[\vect q_i,\vect p_i])_{1\leq i\leq N}
\end{align*}
is  a solution of   $\text{ML-SI}_\varrho$, i.e. the \emph{Maxwell} equations:
\begin{align}\label{eqn:maxwell equations}
  \begin{split}
    \partial_t\vect E_{i,t} &= \nabla\wedge \vect B_{i,t} - 4\pi \vect v(\vect p_{i,t})\varrho_{i}(\cdot-\vect q_{i,t})\\
    \partial_t\vect B_{i,t} &= -\nabla\wedge \vect E_{i,t}
  \end{split}
  \begin{split}
    \nabla\cdot \vect E_{i,t} &= 4\pi \varrho_{i}(\cdot-\vect q_{t,i})\\
    \nabla\cdot\vect B_{i,t}&=0
  \end{split}
\end{align}
together with the \emph{Lorentz} equations (without self-interaction):
\begin{align}\label{eqn:lorentz force}
  \begin{split}
    \partial_t\vect q_{i,t} &= \vect v(\vect p_{i,t}) := \frac{\vect p_{i,t}}{\sqrt{m^2+\vect p_{i,t}^2}}\\
    \partial_t\vect p_{i,t} &= \sum_{\stackrel{k=1,\ldots, N}{k\neq i}}\intdv x\varrho_{i}(\vect x-\vect q_{i,t})\left[ \vect E_{k,t}(\vect x) + \vect v_{i,t} \wedge \vect B_{k,t}(\vect x) \right].
  \end{split}
\end{align}

On the other hand, global existence and uniqueness of solutions of  $\text{ML-SI}_\varrho$  for  initial  data $p:=(\vect q_i^0,\vect p_i^0)_{1\leq i\leq N}\in\bb R^{6N}$ and sufficiently regular initial fields $F:=(\vect E_i^0,\vect B_i^0)_{1\leq i\leq N}$, e.g. at time $t_0\in\bb R$, has been shown in \cite{bauer_maxwell_2010}; the needed definitions and results are summarized in the Section \ref{sec:summary of part I}. For any $(p,F)\in D_w(A^{\infty})$ the particular solution is then denoted by
\begin{align}\label{eqn:ML solutions}
t\mapsto M_L[p,F](t,t_0):=(\vect q_{i,t},\vect p_{i,t},\vect E_{i,t},\vect B_{i,t})_{1\leq i\leq N} .
\end{align}
In this sense we say that sufficiently regular $\text{WF}_{\varrho}$ charge trajectories give rise to $\text{ML-SI}_{\varrho}$ solutions.

%For once continuously differentiable charge trajectories $(\vect q_i,\vect p_i)_{1\leq i\leq N}$ with uniformly bounded momenta and accelerations, in particular the known Schild orbits and scattering solutions on the straight line, the fields $(\vect E_t[\vect q_i,\vect p_i],\vect B_t[\vect q_i,\vect p_i])_{1\leq i\leq N}$ are sufficient regular to qualify as initial conditions of $\text{ML-SI}_\varrho$ for all times $t\in\bb R$. We may thus read  (\ref{eqn:crucial}) as meaning  that, if $\text{WF}_\varrho$ trajectories exist then they are also trajectories
%of  $\text{ML-SI}_\varrho$.

Changing the point of view we now
 fix some Newtonian Cauchy data $p$ and ask our \\{\bf Crucial Question}: Do fields $F$ exist such that the corresponding  $\text{ML-SI}_\varrho$ solution $$t\mapsto(\vect q_{i,t},\vect p_{i,t},\vect E_{i,t},\vect B_{i,t})_{1\leq i\leq N}=:M_L[p,F](t,t_0)$$ fulfills
\begin{align}\label{eqn:special intial fields}
F=(\vect E_t[\vect q_i,\vect p_i],\vect B_t[\vect q_i,\vect p_i])_{1\leq i\leq N}|_{t=t_0}\quad?
\end{align}
Condition (\ref{eqn:special intial fields}) expresses that the initial fields $F$ equal the $\text{WF}_\varrho$ fields (\ref{eqn:WF fields def}) at initial time $t=t_0$. Equivalently, it ensures that the time-evolved fields $t\mapsto(\vect E_{i,t},\vect B_{i,t})_{1\leq i\leq N}$ of the $\text{ML-SI}_\varrho$  solution equal the $\text{WF}_\varrho$ fields $t\mapsto(\vect E_t[\vect q_i,\vect p_i],\vect B_t[\vect q_i,\vect p_i])_{1\leq i\leq N}$ for all times $t$ because their difference is a solution to the homogeneous Maxwell equations (i.e. (\ref{eqn:maxwell equations}) for $\varrho_i=0$) which is zero; compare (\ref{eqn:crucial}). Given the equality of fields for all times, equations (\ref{eqn:lorentz force}) turn into the $\text{WF}_\varrho$ equations (\ref{eqn:WF equation written out}), and hence, the charge trajectories of the  $\text{ML-SI}_\varrho$ solution fulfilling (\ref{eqn:special intial fields}) solve the $\text{WF}_\varrho$ equations. In other words, the subset of sufficiently regular solutions of $\text{ML-SI}_{\varrho}$ that correspond to initial conditions fulfilling (\ref{eqn:special intial fields}) have $\text{WF}_{\varrho}$ charge trajectories. We shall show that any once differentiable charge trajectory $t\mapsto(\vect p_{t},\vect q_{t})$ with bounded momenta and accelerations produces fields $\text{WF}_{\varrho}$ fields (\ref{eqn:WF fields def}) that are regular enough to serve as initial conditions for $\text{ML-SI}_{\varrho}$. This covers all physically interesting $\text{WF}_{\varrho}$ solutions, including the known Schild solutions. The advantage gained from this change of viewpoint is that $\text{ML-SI}_{\varrho}$ is given in terms of  an initial value problem. Therefore, instead of working directly with the $\text{WF}_{\varrho}$ functional equations it will be more convenient to formulate a fixed point procedure for $\text{ML-SI}_\varrho$ to find initial fields for which (\ref{eqn:special intial fields}) holds.

%In this way the question about existence of solutions to the WF equations can be rephrased in a question about the existence of such special initial values for the ML-SI equations that fulfill $(p,F)\in D_w(A)$ and condition (\ref{eqn:special intial fields}). Note, however, that condition (\ref{eqn:special intial fields}) involves a complicated functional of the charge trajectories $t\mapsto(\vect q_{i,t},\vect p_{i,t})$ on the right-hand side (cf. (\ref{eqn:WF fields def})) and it is completely unclear whether if there exist any solutions of this equation or if they are unique in any sense. In particular, condition (\ref{eqn:special intial fields}) is empty if there are no solutions to the WF equations. It is therefore at first sight unclear how this observation can be of any relevance. We shall, however, show that any once differentiable WF solution with uniformly bounded momenta and accelerations gives rise to a ML-SI solution in the above sense with initial values $(p,F)\in D_w(A)$, and further, that condition (\ref{eqn:special intial fields}) appears as a limit of an iterative procedure so that the question of existence becomes the question whether the iterative procedure does have a limit within the admitted class of initial fields for the ML-SI equations, i.e. fields $F$ that fulfill $(p,F)\in D_w(A)$.\\

We now give an overview of our main results. Let $\cal T^{N}_{(e_+,e_-)}$ be the set of once differentiable charges trajectories $t\mapsto(\vect q_{i,t},\vect p_{i,t})_{1\leq i\leq N}$ having bounded momenta and accelerations and fulfilling $\text{WF}_{\varrho}$ equations (\ref{eqn:WF equation written out})-(\ref{eqn:WF fields def}); see Definition \ref{def:WF sols} below. Our first results is:
\begin{theorem}[Weak Uniqueness of Solutions]\label{thm:WF initial conditions}
For $e_{+},e_{-}\in\bb R$, $({\vect q}_i, {\vect p}_i)_{1\leq i\leq N}\in\cal T^{N}_{(e_+,e_-)}$ and $t\in\bb R$ we define
\begin{align}\label{eqn:WFsol}
 \varphi_t^{(e_+,e_-)}[({\vect q}_i, {\vect p}_i)_{1\leq i\leq N}]=(\vect q_{i,t},\vect p_{i,t},\vect E^{(e_+,e_-)}_{t}[\vect q_i,\vect p_i],\vect B^{(e_+,e_-)}_{t}[\vect q_i,\vect p_i])_{1\leq i\leq N}.
\end{align}
The following statements are true:
\begin{enumerate}[(i)]
 \item 
   For any $t_0\in\bb R$ we have $\varphi_{t_0}^{(e_+,e_-)}[({\vect q}_i, {\vect p}_i)_{1\leq i\leq N}]\in D_w(A^\infty)$.
   \item For all $t,t_0\in\bb R$ also $\varphi^{(e_+,e_-)}_t[({\vect q}_i, {\vect p}_i)_{1\leq i\leq N}]=M_L\left[\varphi_{t_0}^{(e_+,e_-)}[({\vect q}_i, {\vect p}_i)_{1\leq i\leq N}]\right](t,t_0)$ holds.
  \item For any $t_0\in\bb R$ the following map is injective:
  \begin{align*}
    i_{t_0}^{(e_+,e_-)}:\cal T^{N}_{(e_+,e_-)}\to D_w(A^\infty),\; ({\vect q}_i, {\vect p}_i)_{1\leq i\leq N}\mapsto\varphi^{(e_+,e_-)}_{t_0}[({\vect q}_i, {\vect p}_i)_{1\leq i\leq N}].
  \end{align*}
\end{enumerate}
\end{theorem}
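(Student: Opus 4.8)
The plan is to reduce all three assertions to the existence-and-uniqueness theory for $\text{ML-SI}_\varrho$ recalled in Section~\ref{sec:summary of part I}, together with the fact (Theorem~\ref{thm:LWfields}) that the expression (\ref{eqn:WF fields def}) is, for each particle separately, a genuine solution of the Maxwell equations (\ref{eqn:maxwell equations}) with source $\varrho_i(\cdot-\vect q_{i,t})$. For part~(i) I would fix $(\vect q_i,\vect p_i)_{1\le i\le N}\in\cal T^N_{(e_+,e_-)}$ and a time $t_0$ and verify that the state $\varphi_{t_0}^{(e_+,e_-)}[(\vect q_i,\vect p_i)_{1\le i\le N}]$ of (\ref{eqn:WFsol}) satisfies the conditions defining $D_w(A^\infty)$. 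The particle part $(\vect q_{i,t_0},\vect p_{i,t_0})\in\bb R^{6N}$ carries no constraint, so the whole content sits in the field part: the Maxwell constraints $\nabla\cdot\vect E_{i,t_0}=4\pi\varrho_i(\cdot-\vect q_{i,t_0})$ and $\nabla\cdot\vect B_{i,t_0}=0$ hold by Theorem~\ref{thm:LWfields}, and for regularity and decay I would work from the explicit integral form (\ref{eqn:LW_fields}) of (\ref{eqn:WF fields def}). Since $\varrho_i\in\cal C^\infty_c$, every spatial derivative falling on the kernel $K^\pm_{t-s}(\vect x-\vect y)$ can be transferred onto $\varrho_i$ by integration by parts, so the fields are $C^\infty$ in $\vect x$ and, with the weight built into the state space, lie in each of the weighted Sobolev norms entering $D_w(A^n)$. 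The one point requiring care is that a merely $C^1$ trajectory with bounded momenta and accelerations is enough: every additional power of $A$ demands one more time derivative of the fields, but using the identity $\partial_s\varrho_i(\vect y-\vect q_{i,s})=-\vect v(\vect p_{i,s})\cdot\nabla\varrho_i(\vect y-\vect q_{i,s})$ one trades each such time derivative for a spatial derivative on $\varrho_i$, at the cost of finitely many bounded factors $\vect v(\vect p_{i,s})$ (which satisfy $\|\vect v\|<1$) and bounded accelerations; iterating yields $\varphi_{t_0}^{(e_+,e_-)}[(\vect q_i,\vect p_i)_{1\le i\le N}]\in D_w(A^\infty)$.

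For part~(ii) I would observe that $t\mapsto\varphi_t^{(e_+,e_-)}[(\vect q_i,\vect p_i)_{1\le i\le N}]$ is itself an $\text{ML-SI}_\varrho$ solution: its field components solve (\ref{eqn:maxwell equations}) with the correct sources by Theorem~\ref{thm:LWfields}, while its particle components solve the Lorentz equations (\ref{eqn:lorentz force}) precisely because $(\vect q_i,\vect p_i)_{1\le i\le N}$ solves (\ref{eqn:WF equation written out}) and (\ref{eqn:WF fields def}) and the $k$-th ML-SI field is by construction $\vect E_t[\vect q_k,\vect p_k]$ resp.\ $\vect B_t[\vect q_k,\vect p_k]$. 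By part~(i) this curve lies in $D_w(A^\infty)$ at every instant, and by the same device as above it is regular enough in $t$ as a curve in the state space to qualify as an admissible $\text{ML-SI}_\varrho$ solution with datum $\varphi_{t_0}^{(e_+,e_-)}[(\vect q_i,\vect p_i)_{1\le i\le N}]$ at time $t_0$; uniqueness of $\text{ML-SI}_\varrho$ solutions then gives $\varphi_t^{(e_+,e_-)}[(\vect q_i,\vect p_i)_{1\le i\le N}]=M_L\big[\varphi_{t_0}^{(e_+,e_-)}[(\vect q_i,\vect p_i)_{1\le i\le N}]\big](t,t_0)$ for all $t,t_0$.

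Assertion~(iii) then follows immediately: if two trajectories in $\cal T^N_{(e_+,e_-)}$ have the same image under $i_{t_0}^{(e_+,e_-)}$, then by part~(ii) both coincide, as curves in the state space, with the single $\text{ML-SI}_\varrho$ solution $t\mapsto M_L[\varphi_{t_0}^{(e_+,e_-)}[(\vect q_i,\vect p_i)_{1\le i\le N}]](t,t_0)$, so reading off the particle components $(\vect q_{i,t},\vect p_{i,t})_{1\le i\le N}$ at every $t$ forces the two trajectories to agree and $i_{t_0}^{(e_+,e_-)}$ is injective. I expect the real work to lie in part~(i), namely in matching the $\varrho$-smeared Li\'enard-Wiechert fields and all their $A$-iterates to the precise norms defining $D_w(A^\infty)$, in particular the trade-off of time for space derivatives that lets a merely $C^1$ trajectory suffice; parts~(ii) and~(iii) are essentially bookkeeping around the $\text{ML-SI}_\varrho$ uniqueness statement.
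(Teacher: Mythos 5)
Your proposal takes essentially the same route as the paper: part (i) is the content of the paper's Lemma \ref{lem:LW fields are in DwA}, which rests on the pointwise decay bound in Corollary \ref{cor:LW_estimate}, and parts (ii)--(iii) are reduced, exactly as in the paper, to Theorem \ref{thm:LW solve M eq} and the uniqueness part of Theorem \ref{thm:globalexistenceanduniqueness}. One supplementary remark in your part (i) is off, though: powers of $A$ are iterated curls, i.e.\ purely spatial derivatives, so membership in $D_w(A^\infty)$ never asks for temporal regularity of the fields, and the trade-off $\partial_s\varrho_i(\vect y-\vect q_{i,s})=-\vect v(\vect p_{i,s})\cdot\nabla\varrho_i(\vect y-\vect q_{i,s})$ you invoke is unnecessary --- if one actually iterated it one would generate successive time derivatives of $\vect v$ (jerk, snap, \dots), which a $C^1$ trajectory with merely bounded acceleration does not control. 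The correct and simpler reason a $C^1$ trajectory with bounded $\vect v$ and $\vect a$ suffices is precisely your earlier observation: since $\varrho_i\in\cal C^\infty_c$, any spatial derivative on the smeared Li\'enard-Wiechert field can be moved onto $\varrho_i$ under the $\vect z$-integral in (\ref{eqn:LW_fields}), leaving the unsmeared integrands (\ref{eqn:LW_E_integrand})--(\ref{eqn:LW_B_integrand}), which involve only $\vect q$, $\vect v$, $\vect a$ and decay like $1/\|\vect x\|$, so the weight $w(\vect x)=(1+\|\vect x\|^2)^{-1}$ makes every weighted $L^2$-norm of every $A$-iterate finite.
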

Hence, for any choice of the coupling parameters $e_{+},e_{-}$ we know that: (i) The charge trajectories in $\cal T^{N}_{(e_+,e_-)}$ produce sufficiently regular initial fields for $\text{ML-SI}_{\varrho}$. (ii) The expression (\ref{eqn:WFsol}) coincides with a $\text{ML-SI}_{\varrho}$ solution. (iii) Each solutions of (\ref{eqn:WF equation written out})-(\ref{eqn:WF fields def}) can be identified by positions, momenta and fields $\vect E^{(e_+,e_-)}_{t}[\vect q_i,\vect p_i],\vect B^{(e_+,e_-)}_{t}[\vect q_i,\vect p_i])_{1\leq i\leq N}$ at an initial time $t_{0}$.

This gives us a good handle on the existence and uniqueness of the Synge solutions. We denote by $({\vect q}_i, {\vect p}_i)_{1\leq i\leq N}\in\cal T_{\text{\clock}!}^N(I)$ time-like and once differentiable charge trajectories $t\mapsto(\vect q_{i,t},\vect p_{i,t})_{1\leq i\leq N}$ with uniformly bounded momenta on the interval I, see Definition \ref{def:charge trajectory} below, and define further:
\begin{definition}[Synge Histories]\label{def:synge histories}
For $t_0\in\bb R$ we define the set $\mathfrak H(t_0)$ to consist of elements \ifarxiv{\linebreak}{}$({\vect q}_i, {\vect p}_i)_{1\leq i\leq N}\in \cal T_{\text{\clock}!}^N$ which fulfill:
  \begin{enumerate}[(i)]
    \item There exists an $\namel{amax}{a_{max}}<\infty$ such that for all $1\leq i\leq N$, $\sup_{t\in\bb R}\|\partial_t\vect v(\vect p_{i,t})\|\leq\namer{amax}$.
    \item $({\vect q}_i, {\vect p}_i)_{1\leq i\leq N}$ solve the equations (\ref{eqn:WF equation written out})-(\ref{eqn:WF fields def}) for $e_+=0, e_-=1$ at time $t=t_0$.
  \end{enumerate}
 Furthermore, 
$\mathfrak H(t_0)^+$ denotes the set $\mathfrak H(t_0)$ equipped with
\[
  (\vect q_{i},\vect p_{i})_{1\leq i\leq N}\overset{\mathfrak H^+(t_0)}{=}(\widetilde{\vect q}_{i},\widetilde{\vect p}_{i})_{1\leq i\leq N}:\Leftrightarrow\forall t\in[t_0,\infty):(\vect q_{i,t},\vect p_{i,t})_{1\leq i\leq N}=(\widetilde{\vect q}_{i,t},\widetilde{\vect p}_{i,t})_{1\leq i\leq N}
\]
while $\mathfrak H(t_0)^-$ denotes the set $\mathfrak H(t_0)$ equipped with
\[
  (\vect q_{i},\vect p_{i})_{1\leq i\leq N}\overset{\mathfrak H^-(t_0)}{=}(\widetilde{\vect q}_{i},\widetilde{\vect p}_{i})_{1\leq i\leq N}:\Leftrightarrow\forall t\in(-\infty,t_0]:(\vect q_{i,t},\vect p_{i,t})_{1\leq i\leq N}=(\widetilde{\vect q}_{i,t},\widetilde{\vect p}_{i,t})_{1\leq i\leq N}.
\]
\end{definition}
Given a history $(\vect q^{-}_{i},\vect p^{-}_{i})_{1\leq i\leq N}\in\mathfrak H^{-}(t_{0})$ one can simply compute the retarded Li\'enard-Wiechert fields $(\vect E_{t}^{(0,1)})[\vect q_{i}^{-},\vect p^{-}_{i}],\vect B_{t}^{(0,1)})[\vect q_{i}^{-},\vect p^{-}_{i}])_{1\leq i\leq N}$ at time $t=t_{0}$ and use them as initial fields for $\text{ML-SI}_{\varrho}$. The charge trajectories of the time-evolved $\text{ML-SI}_{\varrho}$ solutions then obey the Synge equations for times $t\geq t_{0}$. This way we shall prove:

\begin{theorem}[Existence and Uniqueness of Synge Solutions]\label{thm:exist and uni of synge}
Let $e_+=0, e_-=1$, $t_0\in\bb R$ and $(\vect q^-_{i},\vect p^-_{i})_{1\leq i\leq N}\in\mathfrak H(t_0)^-$.
\begin{enumerate}[(i)]
 \item (existence) There exists an extension $(\vect q^+_{i},\vect p^+_{i})_{1\leq i\leq N}\in\mathfrak H(t_0)^+$ such that the concatenation
\begin{align}\label{eqn:concat}
 (\vect q_{i},\vect p_{i})_{1\leq i\leq N}:t\mapsto(\vect q_{i,t},\vect p_{i,t})_{1\leq i\leq N}:=
\begin{cases}
 (\vect q^-_{i,t},\vect p^-_{i,t})_{1\leq i\leq N}\;\text{for}\;t\leq t_0\\
 (\vect q^+_{i,t},\vect p^+_{i,t})_{1\leq i\leq N}\;\text{for}\;t> t_0\\
\end{cases}
\end{align}
 is an element of $\cal T^N_{\text{\clock}!}((-\infty,T])$ for all $T\in\bb R$ and solves the equations (\ref{eqn:WF equation written out})-(\ref{eqn:WF fields def}) for all $t\geq t_0$.

 \item (uniqueness) Let $(\widetilde{\vect q}_{i},\widetilde{\vect p}_{i})_{1\leq i\leq N}\in\cal T^N_{\text{\clock}!}((-\infty,T])$ for any $T\in\bb R$ and suppose further that it solves the equations (\ref{eqn:WF equation written out})-(\ref{eqn:WF fields def}) for all times $t\geq t_{0}$. Then $(\widetilde{\vect q}_{i},\widetilde{\vect p}_{i})_{1\leq i\leq N}\overset{\mathfrak H^-(t_0)}{=}({\vect q}^-_{i},{\vect p}^-_{i})_{1\leq i\leq N}$ implies $(\widetilde{\vect q}_{i,t},\widetilde{\vect p}_{i,t})_{1\leq i\leq N}=({\vect q}_{i,t},{\vect p}_{i,t})_{1\leq i\leq N}$ for all $t\in\bb R$.
\end{enumerate}
\end{theorem}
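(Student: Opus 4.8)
The plan is to trade the unbounded‑delay structure of the Synge equations for the initial‑value problem $\text{ML-SI}_\varrho$ via the set identity (\ref{eqn:crucial}), in the same spirit as Theorem~\ref{thm:WF initial conditions} but now only on the half‑line $[t_0,\infty)$. The mechanism that makes a mere history sufficient is that a \emph{retarded} Li\'enard--Wiechert field evaluated at time $t_0$ depends on the charge trajectory only through its restriction to $(-\infty,t_0]$ (the retardation times are $\le t_0$ since $|\vect v|<1$), so the history $(\vect q^-_i,\vect p^-_i)_{1\le i\le N}$ already fixes admissible $\text{ML-SI}_\varrho$ initial fields at $t_0$; running $\text{ML-SI}_\varrho$ forward from there produces the wanted continuation, and running the argument in reverse yields uniqueness.

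\emph{Existence.} Set $p:=(\vect q^-_{i,t_0},\vect p^-_{i,t_0})_{1\le i\le N}$ and let $F:=(\vect E^{(0,1)}_{t_0}[\vect q^-_i,\vect p^-_i],\vect B^{(0,1)}_{t_0}[\vect q^-_i,\vect p^-_i])_{1\le i\le N}$ be the retarded Li\'enard--Wiechert fields of the history at $t_0$. First I would check that $(p,F)\in D_w(A^\infty)$; this is the one‑sided counterpart of Theorem~\ref{thm:WF initial conditions}(i) and rests on the same estimates for the Li\'enard--Wiechert potentials and the derivatives entering $A^\infty$, fed now by the bound $\sup_{i}\sup_{t}\|\partial_t\vect v(\vect p^-_{i,t})\|\le a_{max}$ and the bounded momenta of the history. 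Let $M_L[p,F](t,t_0)=(\vect q_{i,t},\vect p_{i,t},\vect E_{i,t},\vect B_{i,t})_{1\le i\le N}$ be the global $\text{ML-SI}_\varrho$ solution of \cite{bauer_maxwell_2010} and define the continuation by $(\vect q^+_{i,t},\vect p^+_{i,t}):=(\vect q_{i,t},\vect p_{i,t})$ for $t>t_0$. The concatenation (\ref{eqn:concat}) is time‑like ($|\vect v|<1$) and once differentiable at $t_0$, since the velocities agree ($\partial_t\vect q^-_{i,t_0}=\vect v(\vect p^-_{i,t_0})=\partial_t\vect q^+_{i,t_0}$) and the momentum derivatives agree because the history solves the Synge equation at $t_0$ with exactly the fields $F$ used as $\text{ML-SI}_\varrho$ data. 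Continuity of $M_L[p,F](\cdot,t_0)$ on compact sets gives bounded momenta on every $(-\infty,T]$, and the a priori estimates of \cite{bauer_maxwell_2010} upgrade this to the uniform momentum and acceleration bounds needed for the concatenation to lie in $\cal T^N_{\text{\clock}!}((-\infty,T])$ for all $T$ and for the continuation to lie in $\mathfrak H(t_0)^+$. Finally, for each $t$ the field $\vect E_{i,t}$ from $\text{ML-SI}_\varrho$ and the retarded Li\'enard--Wiechert field $\vect E^{(0,1)}_t[\vect q_i,\vect p_i]$ of the concatenated trajectory both solve the inhomogeneous Maxwell equations (\ref{eqn:maxwell equations}) with the same source; at $t=t_0$ they coincide because the latter depends only on the history; hence their difference solves the homogeneous Maxwell equations with zero data at $t_0$ and therefore vanishes identically, and likewise for $\vect B$. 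Substituting $\vect E_{i,t}=\vect E^{(0,1)}_t[\vect q_i,\vect p_i]$ and $\vect B_{i,t}=\vect B^{(0,1)}_t[\vect q_i,\vect p_i]$ into the Lorentz equations (\ref{eqn:lorentz force}) turns them, for $t\ge t_0$, into the Synge equations (\ref{eqn:WF equation written out})--(\ref{eqn:WF fields def}) with $e_+=0,e_-=1$.

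\emph{Uniqueness.} Let $(\widetilde{\vect q}_i,\widetilde{\vect p}_i)_{1\le i\le N}\in\cal T^N_{\text{\clock}!}((-\infty,T])$ solve (\ref{eqn:WF equation written out})--(\ref{eqn:WF fields def}) for $t\ge t_0$ and agree with the history on $(-\infty,t_0]$. Being once differentiable, solving the Synge equation on $[t_0,T]$, and coinciding with the history beforehand, it has bounded accelerations on $(-\infty,T]$, so by the one‑sided version of Theorem~\ref{thm:WF initial conditions}(i)--(ii) the map $t\mapsto(\widetilde{\vect q}_{i,t},\widetilde{\vect p}_{i,t},\vect E^{(0,1)}_t[\widetilde{\vect q}_i,\widetilde{\vect p}_i],\vect B^{(0,1)}_t[\widetilde{\vect q}_i,\widetilde{\vect p}_i])_{1\le i\le N}$ is an $\text{ML-SI}_\varrho$ solution on $[t_0,T]$, and its value at $t_0$ equals $(p,F)$ because the retarded fields at $t_0$ see only the shared history. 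Uniqueness of $\text{ML-SI}_\varrho$ solutions then forces this map to agree with $M_L[p,F](\cdot,t_0)$ on $[t_0,T]$, so $(\widetilde{\vect q}_{i,t},\widetilde{\vect p}_{i,t})_{1\le i\le N}=(\vect q_{i,t},\vect p_{i,t})_{1\le i\le N}$ for $t\in[t_0,T]$; for $t\le t_0$ both equal the history, and since $T$ was arbitrary the equality holds for all $t\in\bb R$.

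\emph{Main obstacle.} The soft part of the argument --- reduce to $\text{ML-SI}_\varrho$, observe that retarded initial fields are pinned down by the history, and annihilate the homogeneous Maxwell difference --- is short. The real work sits in two technical points. First, the one‑sided analog of Theorem~\ref{thm:WF initial conditions}: that a merely once differentiable trajectory with bounded momenta and bounded accelerations generates retarded Li\'enard--Wiechert fields lying in $D_w(A^\infty)$, and that such a trajectory augmented by these fields is genuinely an $\text{ML-SI}_\varrho$ solution on $[t_0,T]$; this requires careful control of the Li\'enard--Wiechert potentials and of the several derivatives entering $A^\infty$, and it is exactly here that the boundedness of the accelerations is consumed. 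Second, verifying that the $\text{ML-SI}_\varrho$ continuation actually lands in $\mathfrak H(t_0)^+$, which needs the uniform‑in‑time momentum and acceleration estimates for $M_L[p,F](\cdot,t_0)$ from \cite{bauer_maxwell_2010}; without such global control one would only obtain the weaker conclusion that the concatenation lies in $\cal T^N_{\text{\clock}!}((-\infty,T])$ for each finite $T$.
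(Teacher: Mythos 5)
Your proof takes essentially the same route as the paper's: feed the retarded Li\'enard--Wiechert fields of the history into $\text{ML-SI}_\varrho$ at $t_0$ (justified by Lemma~\ref{lem:LW fields are in DwA}), run the global $\text{ML-SI}_\varrho$ evolution forward, check $\cal C^1$ at $t_0$ via condition (ii) of Definition~\ref{def:synge histories} together with Theorem~\ref{thm:LW solve M eq}, and conclude by observing that the retarded Li\'enard--Wiechert fields of the concatenation and the $\text{ML-SI}_\varrho$ fields have a difference evolving under the homogeneous Maxwell equations with zero data at $t_0$ (this is precisely the paper's computation (\ref{eqn:uniqueness synge})). One small caution on your ``obstacle'' paragraph: you invoke the a priori estimates of \cite{bauer_maxwell_2010} to claim uniform-in-time momentum and acceleration bounds so that the continuation lands in $\mathfrak H(t_0)^+$, but the growth bound (\ref{eqn:apriori lipschitz no diff}) is $T$-dependent and does not directly give a global bound on $\bb R$; the paper's own proof is equally brief on this point and contents itself with verifying membership in $\cal T^N_{\text{\clock}!}((-\infty,T])$ for each finite $T$, which is all that is used downstream.
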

Given Theorem \ref{thm:WF initial conditions} this existence and uniqueness result  is not hard to prove, and the reason for this is that we only ask for solutions on the half-line $[t_{0},\infty)$. In contrast to $\text{WF}_{\varrho}$, the notion of local solutions again makes sense since the histories simply act as prescribed external potentials. However, if we ask for solutions on whole $\bb R$ we again face the problem as in $\text{WF}_{\varrho}$, i.e. that by the unboundedness of the delay the notion of local solutions loses its meaning (a conceivable way around this without necessarily sacrificing uniqueness is to give initial conditions for $t_0\to\infty$).

We now come to the main part of this work where we discuss the existence of $\text{WF}_{\varrho}$ solutions. From now on we shall keep the choice $e_{+}=\frac{1}{2}$, $e_{-}+\frac{1}{2}$ fixed although all the results hold also for any choices of $0\leq e_{+},e_{-}\leq 1$. We take on the mentioned idea of conditional solutions: For given initial positions and momenta of the charges at $t=0$ we look for $\text{WF}_{\varrho}$ solutions on time intervals $[-T,T]$ for an arbitrary large but fixed $T>0$. To be able to regard only the time interval $[-T,T]$ of the $\text{WF}_{\varrho}$ dynamics we need to prescribe how the charge trajectories continue for times $|t|>T$ because due to the delay the dynamics within $[-T,T]$ will of course depend also on the trajectories at times $|t|>T$. This is done by specifying the advanced Li\'enard-Wiechert fields at time $T$ as well as the retarded Li\'enard-Wiechert fields and time $-T$ corresponding to each continuation of the charge trajectory for times $|t|>T$. We shall refer to these fields as \emph{boundary fields} and denote them by $X^+_{i,+T}$ and $X^-_{i,-T}$. 
The set of $\text{WF}_{\varrho}$ equations for $1\leq i\leq N$ with respect to the boundary fields $X^{\pm}_{i,\pm T}$ turn into 
\begin{align}\label{eqn:bWF equation written out}
  \begin{split}
    \partial_t\vect q_{i,t}&=\vect v(\vect p_{i,t}):=\frac{\vect p_{i,t}}{\sqrt{m^2+\vect p_{i,t}^2}}\\
    \partial_t\vect p_{i,t}&=\sum_{\stackrel{k=1,\ldots, N}{k\neq i}}\intdv x\varrho_i(\vect x-\vect q_{i,t})\left(\vect E^{X}_t[\vect q_k,\vect p_k](\vect x)+\vect v(\vect q_{i,t})\wedge\vect B^{X}_t[\vect q_k,\vect p_k](\vect x)\right)
  \end{split}
\end{align}
and
\begin{align}\label{eqn:WF with boundary fields}
       ({\vect E}^X_{i,t},
      {\vect B}^X_{i,t})
    = \frac{1}{2}\sum_\pm M_{\varrho_{i}}[X^\pm_{i,\pm T},(\vect q_i,\vect p_i)](t,\pm T),\;\text{for}\;1\leq i\leq N
\end{align}
where the $M_{\varrho}[F^{0},(\vect q,\vect p)](t,t_{0})$ denotes the solution of the Maxwell equations for initial fields $F^{0}$ at time $t_{0}$ corresponding to a prescribed trajectory $t\mapsto(\vect q_{t},\vect p_{t})$ with a charge distribution $\varrho$; see Definition \ref{def:Maxwell time evolution} below. Note that the above set of equations is a natural restriction of the $\text{WF}_{\varrho}$ dynamics onto the time interval $[-T,T]$ because: First, for the choice
\begin{align}\label{eqn:WF boundary fields}
  X^{\pm}_{i,\pm T}=4\pi \int ds\intdv y K_{\pm T-s}^\pm(\vect x-\vect y)\begin{pmatrix}
    -\nabla\varrho_i(\vect y-\vect q_{i,s})-\partial_s\left[\vect v(\vect p_{i,s})\varrho_i(\vect y-\vect q_{i,s})\right]\\
    \nabla\wedge\left[\vect v(\vect p_{i,s})\varrho_i(\vect y-\vect q_{i,s})\right]
  \end{pmatrix}
\end{align}
they turn into the $\text{WF}_{\varrho}$ set of equations (\ref{eqn:WF equation written out})-(\ref{eqn:WF fields def}). And second, it is well-known that for large $T$ the boundary fields are forgotten by the Maxwell time evolution $M$ in the point-wise sense; see Remark \ref{rem:mem loss} below. Based on this behavior one may expect to be able to study also unconditional existence  of $\text{WF}_{\varrho}$ solutions by considering the limit $T\to\infty$ for a convenient choice of controllable boundary fields.

For simplicity of our introductory discussion, let us choose
\begin{align*}
  X^{\pm}_{\pm T}&:=(\vect E^C_i(\cdot-\vect q_{i,\pm T}),0)_{1\leq i\leq N},\\
    (\vect E^C_{i},0)&:=M_{\varrho_{i}}[t\mapsto (0,0)](0,-\infty)=\left(\int d^3z\; \varrho_{i}(\cdot-\vect z)\frac{\vect z}{\|\vect z\|^3},0\right),
  \end{align*}
 i.e. the Coulomb fields corresponding to a charge at rest at $\vect q_{i,\pm T}$. With this prescription the conditional $\text{WF}_{\varrho}$ equations (\ref{eqn:bWF equation written out})-(\ref{eqn:WF with boundary fields}) are equivalent to $\text{WF}_{\varrho}$ dynamics for charges 
 initially being held at rest for times $t\leq -T$ and then instantaneously stopped at times $t\geq T$ by external mechanical forces; see Figure \ref{fig:wf}. The presented results, however, admit not only this particular case but a large class of boundary fields which also allow a continuous continuation of the momentum of the charges at times $t=\pm T$.

In view of our discussion of (\ref{eqn:crucial}) it seems natural to implement the following fixed point map in order to find solutions to the conditional $\text{WF}_{\varrho}$ equations (\ref{eqn:bWF equation written out})-(\ref{eqn:WF with boundary fields}) for initial positions and momenta $p\in\bb R^{6N}$ of the charges:\\

\noindent\textbf{INPUT:} $F=(\vect E_i^0,\vect B_i^0)_{1\leq i\leq N}$ for any fields such that $(p,F)\in D_w(A^\infty)$.
\begin{enumerate}[(i)]
  \item Compute the $\text{ML-SI}_{\varrho}$ solution $[-T,T]\ni t\mapsto (\vect q_{i,t},\vect p_{i,t},\vect E_{i,t},\vect B_{i,t})_{1\leq i\leq N}:=M_L[p,F](t,0)$.
  \item Compute the advanced and retarded fields 
    \begin{align*}
      (\widetilde{\vect E}_{i,t},
      \widetilde{\vect B}_{i,t})
    = \frac{1}{2}\sum_\pm M_{\varrho_{i}}[X^\pm_{i,\pm T}[p,F],(\vect q_i,\vect p_i)](t,\pm T)
  \end{align*}
  corresponding to the charge trajectories $t\mapsto(\vect q_{i},\vect p_{i})$ computed in (i) with prescribed initial fields $X^\pm_{i,\pm T}[p,F]$ at times $\pm T$. 
\end{enumerate}
\textbf{OUTPUT:} $S^{p,X^\pm}_T[F]:=(\widetilde{\vect E}_{i,t},\widetilde{\vect B}_{i,t})_{1\leq i\leq N}|_{t=0}$.\\

Note that the boundary fields $X^\pm_{i,\pm T}=X^\pm_{i,\pm T}[p,F]$ need to depend on the $\text{ML-SI}_{\varrho}$ initial values $(p,F)$. Otherwise, it would not be possible to continuously connect the charge trajectories with the prescribed continuation of the charge trajectories at times $t=\pm T$. The definition of $S_T^{p,X^\pm}$ is given in Definition \ref{def:STX} below. By construction, any fixed point $F^*$ of this map $S^{p,X^{\pm}}_{T}$ gives rise to a $\text{ML-SI}_{\varrho}$ solutions $t\mapsto M_L[p,F^*](t,0)$ whose charge trajectories fulfill the conditional $\text{WF}_{\varrho}$ equations (\ref{eqn:bWF equation written out})-(\ref{eqn:WF with boundary fields}); see Definition \ref{def:WF sol for finite times} and Theorem \ref{thm:the map ST} below. We prove:

\begin{theorem}[Existence of Conditional $\text{WF}_{\varrho}$ Solutions]\label{thm:ST has a fixed point}
  Let $p\in\bb R^{6N}$ be given. For each finite $T>0$ the map $S_T^{p,X^\pm}$ has a fixed point.
\end{theorem}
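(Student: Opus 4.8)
The plan is to apply Schauder's fixed point theorem to the map $S_T^{p,X^\pm}$ on a suitable closed, bounded, convex subset of the field space on which $S_T^{p,X^\pm}$ is continuous and whose image is precompact. The domain should be a ball (in an appropriate norm making fields admissible $\text{ML-SI}_\varrho$ data, i.e. sitting inside $D_w(A^\infty)$ or a suitable Sobolev-type scale) intersected with the affine constraint encoding that the fields are the $\text{WF}_\varrho$-type Li\'enard-Wiechert fields of \emph{some} trajectory; concretely I expect the natural choice to be the set of fields of the form $\varphi^{(1/2,1/2)}_0[(\vect q_i,\vect p_i)]$ restricted to $[-T,T]$ with uniformly bounded momenta and accelerations (say $\|\vect p_{i,t}\|\le P$, $\|\partial_t\vect v(\vect p_{i,t})\|\le a_{max}$), which by Theorem~\ref{thm:WF initial conditions}(i) lands in $D_w(A^\infty)$. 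This is the candidate convex compact set; convexity may require working with the field data directly rather than with trajectories, using that the boundary-field prescription $X^\pm_{i,\pm T}[p,F]$ and the Maxwell evolution $M$ and $M_L$ are all affine/linear in the appropriate slots.

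First I would fix the a priori bounds. Running step (i), the $\text{ML-SI}_\varrho$ solution $M_L[p,F](t,0)$ on the compact interval $[-T,T]$ starting from $p$ and from $F$ in a ball of radius $R$ produces trajectories whose momenta and accelerations are bounded by constants $P(R,T)$, $a_{max}(R,T)$ depending only on $R$, $T$, $p$ and the fixed data $\varrho_i,m$ — this is exactly the kind of energy/Gr\"onwall estimate summarized from \cite{bauer_maxwell_2010}. Next, step (ii): feeding these bounded trajectories together with the boundary fields $X^\pm_{i,\pm T}[p,F]$ (which I would choose so that their norms are likewise controlled by $R$, e.g. the Coulomb-type fields shifted to $\vect q_{i,\pm T}$, whose norm depends only on $\varrho_i$ and on $\|\vect q_{i,\pm T}\|\le$ something controlled by $P\cdot T$) through the Maxwell evolution $M_{\varrho_i}$ and evaluating at $t=0$ gives the output fields. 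Using the explicit Li\'enard-Wiechert/Kirchhoff representation (the smoothing by convolution against $K^\pm_t$ and integration in $s$ over a bounded window of length $\sim T$), the output fields are \emph{more} regular than the input and their higher norms are bounded in terms of $P$, $a_{max}$, hence ultimately in terms of $R$, $T$. The key point is to verify that one can choose $R = R(T)$ so that $S_T^{p,X^\pm}$ maps the $R$-ball into itself; this is a fixed-radius argument, and here the finiteness of $T$ is essential since the time-$s$ integral runs only over $[\mp T, 0]$.

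Then I would establish the two topological properties. For precompactness: the output fields gain a derivative relative to the input (again from the convolution structure in (\ref{eqn:WF with boundary fields})), so the image of the $R$-ball is bounded in a strictly stronger norm; combined with the compact support properties and a Rellich–Kondrachov / Arzelà–Ascoli argument on the relevant finite spatial region touched by the light cones over $[-T,T]$, this yields relative compactness of the image in the working norm. For continuity: I would show $F\mapsto M_L[p,F](\cdot,0)$ is continuous from the $R$-ball into $C^1([-T,T])$-trajectories (continuous dependence on initial data for $\text{ML-SI}_\varrho$, from \cite{bauer_maxwell_2010}), that $F\mapsto X^\pm_{i,\pm T}[p,F]$ is continuous (it depends on $F$ only through $\vect q_{i,\pm T}$), and that $((\vect q_i,\vect p_i),X^\pm)\mapsto M_{\varrho_i}[X^\pm,(\vect q_i,\vect p_i)](0,\pm T)$ is continuous — the last because the Kirchhoff formula depends continuously on the trajectory data in the sup-over-$[\mp T,0]$ sense. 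Composing, $S_T^{p,X^\pm}$ is continuous. Schauder then yields a fixed point $F^*$, and by the construction recalled before the theorem (Definition~\ref{def:WF sol for finite times}, Theorem~\ref{thm:the map ST}) the $\text{ML-SI}_\varrho$ solution $t\mapsto M_L[p,F^*](t,0)$ has charge trajectories solving the conditional $\text{WF}_\varrho$ equations on $[-T,T]$.

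The main obstacle I anticipate is the self-consistency of the a priori bound, i.e.\ producing a ball that $S_T^{p,X^\pm}$ stabilizes: one must show that the accelerations generated by $\text{ML-SI}_\varrho$ from fields of size $R$ feed back, through the Li\'enard-Wiechert integrals over a window of length $T$, into output fields again of size $\le R$. Because the map sends (field size $R$) $\to$ (acceleration bound depending on $R$) $\to$ (output field size depending on that acceleration), closing the loop requires carefully tracking how these constants depend on $R$ and on $T$, and exploiting that the Li\'enard-Wiechert kernel is integrable and the $s$-integration is over a \emph{bounded} interval — so that the output-field bound is, say, affine or sublinear in the acceleration bound and the radius can be absorbed for $R$ chosen large enough relative to $p$ and $T$. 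A secondary technical point is choosing the function space for the fields correctly: it must be strong enough that membership implies admissibility as $\text{ML-SI}_\varrho$ data (so Theorem~\ref{thm:WF initial conditions} and the results of \cite{bauer_maxwell_2010} apply) yet weak enough that the gain-of-derivative estimate yields genuine compactness; a weighted Sobolev space tailored to the decay/regularity of Li\'enard-Wiechert fields of bounded-acceleration trajectories is the natural candidate.
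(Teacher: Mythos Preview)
Your overall strategy --- Schauder on a suitable field space, using continuity of $S_T^{p,X^\pm}$ and precompactness of its image --- is exactly the paper's approach. However, two points deserve correction.

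First, your ``main obstacle'' is not one. The crucial observation, which the paper isolates as Lemma~\ref{lem:estimates for ST}(i), is that the output bound $\|S_T^{p,X^\pm}[F]\|_{\cal F_w^n}$ is \emph{independent of} $\|F\|$: it depends only on $T$ and $\|p\|$. This is because the source term $\mJ(\varphi_s)$ depends only on positions and velocities of the charges, the velocities are bounded by $1$, and hence the positions on $[-T,T]$ are bounded by $\|p\|+T$ regardless of the initial fields; meanwhile the boundary fields $X^\pm_{\pm T}[p,F]$ are, by hypothesis (property~(i) of the class $\cal A_w^n$), bounded in terms of $(T,\|p\|)$ alone. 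So there is no loop to close: any ball of radius $r=\constr{ST radius const}^{(3)}(T,\|p\|)$ is automatically stabilized, and in fact the entire image $M=S_T^{p,X^\pm}[\cal F_w^1]$ sits in that ball. The paper then takes $K$ to be the closed convex hull of $M$.

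Second, your compactness argument has a genuine gap. You propose Rellich--Kondrachov ``on the relevant finite spatial region touched by the light cones over $[-T,T]$,'' but the output fields are \emph{not} supported in a bounded region: the homogeneous part $\mW_{\mp T}X^\pm_{\pm T}[p,F]$ is the free Maxwell evolution of a Coulomb-type field and has slowly decaying tails at spatial infinity. The paper's proof splits the output into the source contributions (which, as you note, are compactly supported by the light-cone geometry) and the boundary-field contribution, and for the latter it invokes the additional hypothesis (iv) defining the subclass $\widetilde{\cal A}_w^3\subset\cal A_w^3$: uniform decay $\lim_{\tau\to\infty}\sup_{F\in M}\|X^\pm_{\pm T}[p,F]\|_{\cal F_w^n(B_\tau^c)}=0$. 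This, combined with explicit Kirchhoff estimates, gives uniform decay of the output at infinity, which together with the gain-of-regularity bound (image bounded in $\cal F_w^3$, hence in $H_w^\triangle$ via the Maxwell constraints) feeds into a weighted-$L^2$ precompactness criterion (Lemma~\ref{lem:precompactness}). Without controlling the tails uniformly over the image, precompactness in $\cal F_w^1$ fails.
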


The essential ingredient in the proof of this result is the good nature of the $\text{ML-SI}_{\varrho}$ dynamics which implies Lemma \ref{lem:estimates for ST} below. Here we rely heavily on the work done in \cite{bauer_maxwell_2010}. 

We close with a discussion of these fixed points. Recall that the Synge solutions on the time half-line $[t_0,\infty)$ for times sufficiently close to $t_0$ give  rise to interaction with the given past trajectories on $(-\infty,t_0]$ only. For such small times one simply solves an external field problem. Not until larger times the interaction becomes truly retarded in the sense that the future charge trajectories interact with their just generated histories for times $t\geq t_{0}$. However, in an extreme situation a charge could approach the speed of light so fast  that the time coordinate of the intersection of its backward light-cone with another charge trajectory is bounded by, say, by $T^{max}\in\bb R$. This means that this charge will never interact with the part $t\geq T^{max}$ of the other charge trajectories. If $T^{max}\leq t_0$ one ends up solving a purely external field problem without seeing any truly retarded interaction. Such a scenario is of course so special that one would not expect it for all Synge solutions (recall that by Theorem \ref{thm:exist and uni of synge} one has existence and uniqueness on the time half-line for any sufficiently regular set of past trajectories). For the $\text{WF}_\varrho$ equations, however, we only have solutions on time intervals $[-T,T]$ yet and, therefore, one should be more curious as the described scenario in the case of the Synge equations could happen in the case of the WF equations in the past as well as the in future of $t_{0}$. If the $\text{WF}_{\varrho}$ solution on $[-T,T]$ behaves as badly as described above or the initial position are too far apart from each other in the space-like sense, we might end up solving only an external field problem as the charge trajectories on $[-T,T]$ only ``see'' the prescribed boundary fields; see Figure \ref{fig:wf extreme}. The following result makes sure that given $T$ at least for some solutions this is not the case because on an interval $[-L,L]$ with $0<L\leq T$ they interact exclusively with all other charge trajectories on $[-T,T]$ and not with the given boundary fields; i.e. the case as shown in Figure \ref{fig:wf}. 

\begin{figure}[h]
  \begin{center}
    \subfigure[\label{fig:wf}]{
      \if\arxiv 1
         \includegraphics[scale=1]{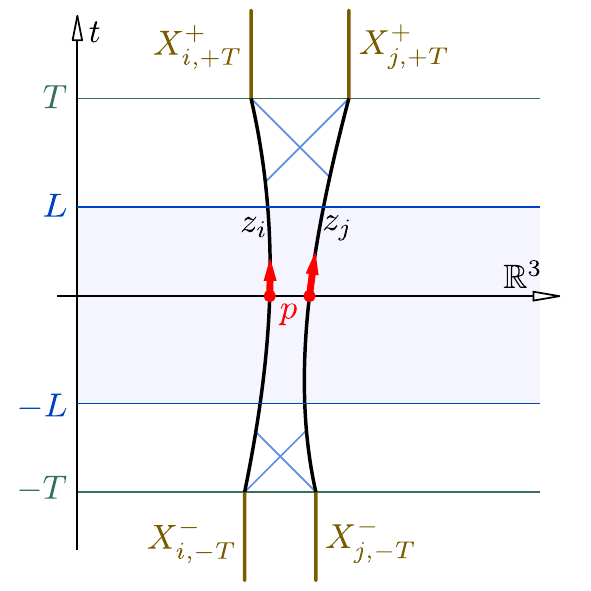}
       \else
         \includegraphics[scale=.7]{figures/wf}
       \fi
    }
    \subfigure[\label{fig:wf extreme}]{
      \if\arxiv 1
        \includegraphics[scale=1]{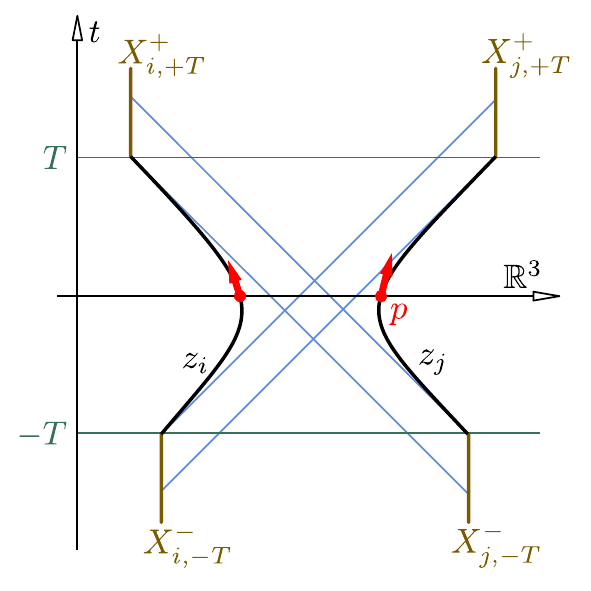}
	  \else
        \includegraphics[scale=.7]{figures/wf_extreme}
	  \fi
    }
     \caption{Two WF world lines $z_i$ and $z_j$ on time interval 
    $[-T,T]$ with Newtonian Cauchy data $p$. The straight lines for times $|t|>T$ are the prescribed asymptotes which generate the advanced and retarded Li\'enard-Wiechert fields $X^+_{i,+T}$ and $X^-_{i,-T}$. In (a) one observes true WF interaction between the charge trajectories on $[-T,T]$ within the time interval $[-L,L]$. In the extreme case (b) the charge trajectories on $[-T,T]$ interact only with  the given asymptotes (apart from the connection conditions at $\pm T$).}
  \end{center}
\end{figure}

We prove:
\begin{theorem}[True $\text{WF}_{\varrho}$ Interaction]\label{thm:existence of L}
  Choose $a,b,T>0$. Then:
\begin{enumerate}[(i)]
 \item The absolute values of the velocities $\vect v(\vect p_{i,t})$ of all charges of any $\text{ML-SI}_{\varrho}$ solution with any initial data $(p,F)$ such that
\begin{align*}
   \|p\|\leq a,\qquad \max_{1\leq i\leq N}\|\varrho_{i}\|_{L^2_w}+\max_{1\leq i\leq N}\|w^{-1/2}\varrho_{i}\|_{L^2}\leq b,\qquad F\in\Ran S^{p,X^\pm}_T
\end{align*}
 have an upper bound $v^{a,b}_T$ with $0\leq v^{a,b}_T<1$.

\item Let $R>0$ be the smallest radius such that the support of $\varrho_{i}$ lies within a ball around the origin with radius $R$, i.e. $\supp\varrho_{i}\subseteq B_{R}(0)$, for all $1\leq i\leq N$, and further $\namel{qmax}{\triangle q_{max}}(p):=\max_{1\leq i,j\leq N}\|\vect q_i^0-\vect q_j^0\|$. For sufficiently small $R$ there exist $p=(\vect q_i^0,\vect p_i^0)_{1\leq i\leq N}$ such that
\begin{align}\label{eqn:L}
 L:=\frac{(1-v^{a,b}_T)T-\namer{qmax}-2R}{1+v_T^{a,b}}>0
\end{align}
and any fixed point $F^*$ of $S^{p,X^\pm}_T$ gives rise to a $\text{ML-SI}_{\varrho}$ solution $t\mapsto M_L[p,F^*](t,0)$ whose charge trajectories for $t\in[-L,L]$ solve the $\text{WF}_{\varrho}$ equations (\ref{eqn:WF equation written out})-(\ref{eqn:WF fields def}).
\end{enumerate}
\end{theorem}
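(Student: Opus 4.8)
I would establish the two claims separately. For (i) the plan is to reduce the velocity bound to an a priori bound on the momenta of the $\text{ML-SI}_\varrho$ solution. One first records that $\Ran S^{p,X^\pm}_T$ is a bounded subset of the $\text{ML-SI}_\varrho$ field space, uniformly over all $p$ with $\|p\|\leq a$ and all charge distributions with $\max_i\|\varrho_i\|_{L^2_w}+\max_i\|w^{-1/2}\varrho_i\|_{L^2}\leq b$; this is what Lemma \ref{lem:estimates for ST} provides, resting ultimately on the a priori control of the $\text{ML-SI}_\varrho$ dynamics from \cite{bauer_maxwell_2010} summarized in Section \ref{sec:summary of part I}. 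Feeding such a bounded initial field together with Newtonian data $p$, $\|p\|\leq a$, into the $\text{ML-SI}_\varrho$ estimates then yields a bound $P^{a,b}_T<\infty$ on $\|\vect p_{i,t}\|$ for $t\in[-T,T]$ (the bound depends on $T$, via a Gr\"onwall argument on the finite interval). Since $\vect v(\vect p)=\vect p/\sqrt{m^2+\vect p^2}$ and $r\mapsto r/\sqrt{m^2+r^2}$ is strictly increasing, setting $v^{a,b}_T:=P^{a,b}_T/\sqrt{m^2+(P^{a,b}_T)^2}<1$ then proves (i).

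For (ii), with $a,b,T$ fixed and $v:=v^{a,b}_T<1$ supplied by (i), I would first fix $p=(\vect q^0_i,\vect p^0_i)_{1\leq i\leq N}$ with $\vect p^0_i=0$ and all $\vect q^0_i$ inside a small ball about the origin, so that $\|p\|\leq a$ and $\triangle q_{max}(p)$ is as small as we wish; for $R<\tfrac12(1-v)T$ and this ball small enough, $L$ as defined in (\ref{eqn:L}) is then positive, coincident charge positions being harmless because the $\varrho_i$ are smooth extended densities. Let $F^*$ be a fixed point of $S^{p,X^\pm}_T$ (it exists by Theorem \ref{thm:ST has a fixed point}) and write $M_L[p,F^*](t,0)=(\vect q_{i,t},\vect p_{i,t},\vect E_{i,t},\vect B_{i,t})_{1\leq i\leq N}$. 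Since $F^*=S^{p,X^\pm}_T[F^*]\in\Ran S^{p,X^\pm}_T$, part (i) gives $\|\vect v(\vect p_{i,t})\|\leq v$ on $[-T,T]$, and by the construction of $S^{p,X^\pm}_T$ (Theorem \ref{thm:the map ST}) the charge trajectories solve the conditional $\text{WF}_\varrho$ equations (\ref{eqn:bWF equation written out})-(\ref{eqn:WF with boundary fields}) on $[-T,T]$. It then suffices to show that for $t\in[-L,L]$ the conditional fields $({\vect E}^X_{i,t},{\vect B}^X_{i,t})$ agree, on the smearing region $B_R(\vect q_{j,t})$ of each index $j\neq i$, with the genuine $\text{WF}_\varrho$ fields $(\vect E_t[\vect q_i,\vect p_i],\vect B_t[\vect q_i,\vect p_i])$ of (\ref{eqn:WF fields def}); smearing against $\varrho_j(\cdot-\vect q_{j,t})$ then turns (\ref{eqn:bWF equation written out}) into (\ref{eqn:WF equation written out}) on $[-L,L]$.

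That agreement would follow from two observations. (a) \emph{Light-cone geometry.} From $\|\vect v(\vect p_{i,\cdot})\|\leq v$ on $[-T,T]$ one gets $\|\vect q_{i,t}-\vect q_{j,t}\|\leq\triangle q_{max}(p)+2v|t|$ for $|t|\leq T$; hence for $|t|\leq L$ and any $\vect x\in B_R(\vect q_{j,t})$ the function $s\mapsto\|\vect x-\vect q_{i,s}\|-(t-s)$ has derivative $\geq 1-v>0$ on $[-T,t]$, and a direct estimate using the triangle inequality, $\|\vect v\|\leq v$ and the definition of $L$ in (\ref{eqn:L}) shows that the backward light cone $\{\,\|\vect y-\vect x\|=t-s,\;s\leq t\,\}$ meets the worldtube $\{s\}\times B_R(\vect q_{i,s})$ only for $s\in[-T,t]$; symmetrically the forward cone only for $s\in[t,T]$ (one also uses that the prescribed continuations for $|s|>T$ are sub-luminal --- at rest in the Coulomb case --- so the cones cannot re-enter a worldtube at $|s|>T$). (b) \emph{Reproduction property.} Write $\bar z_i$ for the trajectory obtained by concatenating $(\vect q_i,\vect p_i)|_{[-T,T]}$ with the prescribed continuations, and $F^{\mathrm{ret}}[\bar z_i](t)$, $F^{\mathrm{adv}}[\bar z_i](t)$ for its retarded and advanced Li\'enard--Wiechert fields at time $t$. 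The boundary fields are $X^-_{i,-T}=F^{\mathrm{ret}}[\bar z_i](-T)$ and $X^+_{i,+T}=F^{\mathrm{adv}}[\bar z_i](+T)$, and since both carry the current of $\bar z_i$ as source, uniqueness of the Maxwell evolution (Theorem \ref{thm:LWfields} together with \cite{bauer_maxwell_2010}) gives $M_{\varrho_i}[X^-_{i,-T},\bar z_i](t,-T)=F^{\mathrm{ret}}[\bar z_i](t)$ and likewise for the advanced field; as the source in $M_{\varrho_i}[X^\pm_{i,\pm T},(\vect q_i,\vect p_i)](t,\pm T)$ enters only through the trajectory between $t$ and $\pm T$ --- a subinterval of $[-T,T]$ on which $(\vect q_i,\vect p_i)$ and $\bar z_i$ coincide --- the half-sum (\ref{eqn:WF with boundary fields}) equals $\tfrac12\big(F^{\mathrm{ret}}[\bar z_i](t)+F^{\mathrm{adv}}[\bar z_i](t)\big)$, i.e.\ the $\text{WF}_\varrho$ field (\ref{eqn:WF fields def}) of $\bar z_i$. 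Combining (a) and (b): for $|t|\leq L$ and $\vect x\in B_R(\vect q_{j,t})$ the kernels $K^\pm_{t-s}(\vect x-\vect y)\,\varrho_i(\vect y-\vect q_{i,s})$ entering (\ref{eqn:WF fields def}) are supported in $s\in[-T,T]$, where $\bar z_i$ coincides with $(\vect q_i,\vect p_i)$; hence $({\vect E}^X_{i,t},{\vect B}^X_{i,t})(\vect x)=(\vect E_t[\vect q_i,\vect p_i],\vect B_t[\vect q_i,\vect p_i])(\vect x)$, which is the required agreement.

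The geometry in (a) is little more than the bookkeeping that produces the precise threshold (\ref{eqn:L}), and (b) is an elementary Duhamel manipulation; the real content --- and the main obstacle --- is part (i), the \emph{uniform} velocity bound $v^{a,b}_T<1$. It hinges on the boundedness of $\Ran S^{p,X^\pm}_T$ and the a priori estimates for the $\text{ML-SI}_\varrho$ evolution (Lemma \ref{lem:estimates for ST}, built on \cite{bauer_maxwell_2010}), and this is where the analytic weight lies: a priori nothing prevents large initial fields from accelerating the charges arbitrarily hard and driving the velocities toward $1$, so the bound is genuine input from the good nature of the $\text{ML-SI}_\varrho$ dynamics.
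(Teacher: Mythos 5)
Your treatment of part (i) is the same as the paper's Lemma \ref{lem:uni vel bound}: combine the $\cal F_w$-bound on $\Ran S_T^{p,X^\pm}$ from Lemma \ref{lem:estimates for ST}(i) with the a priori Gr\"onwall estimate (\ref{eqn:apriori lipschitz no diff}) to bound $\|\vect p_{i,t}\|$ on $[-T,T]$, and then pass through $r\mapsto r/\sqrt{m^2+r^2}$.

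For part (ii) you take a genuinely different route. The paper does not form the concatenated trajectory $\bar z_i$ at all: it works directly with the difference
$M_{\varrho_i}[X^\pm_{i,\pm T},(\vect q_i,\vect p_i)](t,\pm T)-M_{\varrho_i}[\vect q_i,\vect p_i](t,\pm\infty)$
given in (\ref{eqn:bwf wf diff}), splits it into the free Maxwell evolution of the Coulomb boundary field (\ref{eqn:coulomb diff}) and the ``relict'' Li\'enard--Wiechert contribution (\ref{eqn:wf relicts}), and shows in Lemma \ref{lem:shadows of boundary fields} by an explicit Kirchhoff/Stokes computation that both vanish on the diamond $\{t\in(-T+R,T-R),\ \vect x\in B_{|t\mp T|-R}(\vect q_{i,\pm T})\}$; the exit-time estimate then produces (\ref{eqn:L}). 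Your argument instead identifies the conditional field $\tfrac12\sum_\pm M_{\varrho_i}[X^\pm_{i,\pm T},(\vect q_i,\vect p_i)](t,\pm T)$ with the half-sum of retarded and advanced Li\'enard--Wiechert fields of the concatenated world line $\bar z_i$ (via uniqueness of the Maxwell Cauchy problem, using that the two solutions share initial data at $\mp T$ and source on $[-T,T]$), and then falls back on light-cone geometry alone to equate this with the WF field of the $\text{ML-SI}_\varrho$ trajectory on the relevant tube. This is conceptually cleaner and explains \emph{why} the diamond appears, instead of verifying it term by term; it also transparently generalizes to boundary fields that are Li\'enard--Wiechert fields of any strictly time-like asymptotes, not just Coulomb. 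The trade-off is a technical wrinkle you gloss over: for the Coulomb choice $\bar z_i$ has a jump in momentum at $\pm T$ (unless $\vect p_{i,\pm T}=0$), so $\bar z_i\notin\cal T^1_{\text{\clock}}$ and the paper's Definition \ref{def:charge trajectory}, Definition \ref{def:Maxwell time evolution} and Theorem \ref{thm:LWfields} do not literally apply to it. The paper itself remarks (after Theorem \ref{thm:exist and uni of synge}) that Theorem \ref{thm:LWfields} extends to piecewise $\cal C^1$ trajectories, but never proves it; Lemma \ref{lem:shadows of boundary fields} sidesteps the issue entirely by never introducing $\bar z_i$. To make your (b) airtight you would either need that piecewise-$\cal C^1$ extension, or restrict to boundary fields built from asymptotes that match $\vect p_{i,\pm T}$ continuously (which the paper mentions is possible but which the Coulomb field $C$ does not do). The rest of your light-cone bookkeeping in (a), and the choice of data making (\ref{eqn:L}) positive, matches the paper's.
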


The form of $L$ in (\ref{eqn:L}) is a direct consequence of the geometry as displayed in Figure \ref{fig:wf} and the nature of the free Maxwell time evolution, see Lemma \ref{lem:shadows of boundary fields}, which can be seen from a direct computation using harmonic analysis. The proof further employs a very rough Gr\"onwall estimate coming from the $\text{ML-SI}_{\varrho}$ dynamics to estimate the velocities of the charges during the time interval $[-T,T]$, see Lemma \ref{lem:uni vel bound}. The conditions for the above result are therefore quite restrictive but merely technical. Any uniform velocity estimate, e.g. as given in \cite{bauer_ein_1997} for two charges of equal sign restricted to a straight line, makes this result redundant as then $T$ can just be chosen arbitrarily large to ensure an arbitrary large $L$, and hence charge trajectories that fulfill the $\text{WF}_{\varrho}$ equations on arbitrary large intervals. We expect such a bound also without the restriction to a straight line. However, even without such a uniform velocity bound the result above already ensures that in Theorem \ref{thm:ST has a fixed point} we do see truly advanced and retarded $\text{WF}_{\varrho}$ interaction between the charges. Furthermore, we remark that for the charge trajectories found in (ii) above one can already define the WF conservation laws \cite{wheeler_classical_1949} which we expect to be an important ingredient in order to control a limit procedure $T\to\infty$ to yield global $\text{WF}_{\varrho}$ solutions.

\section{Preliminaries}

In the proofs of the main results we will frequently rely on explicit expressions of the time-evolved electric and magnetic fields appearing in the Maxwell equations as well as in the $\text{ML-SI}_{\varrho}$ time evolution. The $\text{ML-SI}_{\varrho}$ equations are (\ref{eqn:maxwell equations})-(\ref{eqn:lorentz force}) while the Maxwell equations for a given charge-current density $t\mapsto(\rho_t,\vect j_t)$ have the form
\begin{align}\label{eqn:maxwell_equations_charge_current}
  \begin{split}
    \partial_{t}{\vect E}_{t} &= \nabla\wedge \vect B_{t} - 4\pi \vect j_t\\
    \partial_{t}{ \vect B}_{t} &= -\nabla\wedge \vect E_{t}
  \end{split}
  \begin{split}
    \nabla\cdot \vect E_{t} &= 4\pi \rho_t\\
    \nabla\cdot \vect B_{t} &= 0.
  \end{split}
\end{align}

 Although the presented results on the Maxwell equations are well-known in the physics community, we only found some of them in the mathematical literature; e.g. \cite{spohn_dynamics_2004}. Therefore, we give a mathematical review in Section \ref{sec:maxwellsolutions}. The proofs of all the claims are published separately in \cite{deckert_electrodynamic_2010}. Furthermore, $\text{ML-SI}_{\varrho}$ was studied in \cite{bauer_maxwell_2010}. In order to be self-contained we given an overview of the needed results in Section \ref{sec:summary of part I}.
 
 \paragraph{Notation.} Let $\bb N_{0}=\bb N\cup\{0\}$. $\bb R^{3}$ vectors and vector-valued functions have bold letters. We denote the ball of radius $R>0$ around the center $\vect x\in\bb R^{3}$ by $B_{R}(\vect x)\subset\bb R^{3}$ and its boundary by $\partial B_{R}(\vect x)$. We denote by $C\in\bounds$ any function $x\mapsto C(x)\in\bb R_+$ that depends continuously and non-decreasing on its argument $x$. Furthermore, let $\cal C^{n}(V,W)$ be the set of $n$-times continuously differentiable functions $V\to W$. $\cal C^{\infty}(V,W):=\bigcap_{n\in\bb N_{0}}\cal C^{n}(V,W)$. $\cal C^{n}_{c}(V,W)\subset\cal C^{n}(V,W)$ and $\cal C^{\infty}_{c}(V,W)\subset\cal C^{\infty}(V,W)$ are the respective subsets of functions with compact support. Where unambiguous we sometime drop the reference to $V$ and $W$.

\subsection{Strong Solutions to the Maxwell Equations}\label{sec:maxwellsolutions}
 
 We review the solution theory of the Maxwell equations (\ref{eqn:maxwell_equations_charge_current}).  The class of charge-current densities we treat is defined by:

\begin{definition}[Charge-Current Densities]\label{def:charge-current}\label{def:induced charge-current}
  We shall call any pair of maps $\rho:\bb R\times\bb R^3\to\bb R, (t,\vect x)\mapsto \rho_t(\vect x)$ and $\vect j:\bb R\times\bb R^3\to\bb R^3, (t,\vect x)\mapsto \vect j_t(\vect x)$ a charge-current density whenever:
  \begin{enumerate}[(i)]
    \item For all $\vect x\in\bb R^3$: $\rho_{(\cdot)}(\vect x)\in \cal C^1(\bb R,\bb R)$ and $\vect j_{(\cdot)}(\vect x)\in \cal C^1(\bb R,\bb R^3)$.
    \item For all $t\in\bb R$: $\rho_t,\partial_t\rho_t\in\cal C^\infty(\bb R^3,\bb R)$ and $\vect j_t,\partial_t\vect j_t\in\cal C^\infty(\bb R^3,\bb R^3)$.
    \item For all $(t,\vect x)\in\bb R\times\bb R^3$: $\partial_t\rho_t(\vect x)+\nabla\cdot\vect j_t(\vect x)=0$ which is referred to as continuity equation.
  \end{enumerate}
  We denote the set of such pairs $(\rho,\vect j)$ by $\cal D$.
\end{definition}

 We are interested in solutions to the Maxwell equations (\ref{eqn:maxwell_equations_charge_current}) in the following sense:
\begin{definition}[Strong Solution Sense]\label{def:initial fields F}
 We define the space of fields
\begin{align*}
 \cal F^1:=\cal C^\infty(\bb R^3,\bb R^3)\oplus\cal C^\infty(\bb R^3,\bb R^3).
\end{align*}
Let $t_0\in\bb R$ and $F^0\in\cal F^1$. Then any mapping $F:\bb R\to\cal F^1, t\mapsto F_t:=(\vect E_t,\vect B_t)$ that solves (\ref{eqn:maxwell_equations_charge_current}) in the point-wise sense for initial value $F_t|_{t=t_0}=F^0$ is called a strong solution to the Maxwell equations with $t_0$ initial value $F^0$.
\end{definition}
 Explicit formulas of those solutions are constructed with the help of:
\begin{definition}[Green's Functions of the d'Alembert]\label{def:greens_dalembert}
  We set
  \[
    K^\pm_t(\vect x):=\frac{\delta(\|\vect x\|\pm t)}{4\pi\|\vect x\|}
  \]
  where $\delta$ denotes the one-dimensional Dirac delta distribution.
\end{definition}
Note that for every $f\in\cal C^\infty(\bb R^3)$
  \[
    K_t^\pm*f(\vect x) = \begin{cases}
      0 & \text{ for }\pm t>0\\
      t\underset{\partial {B_{|t|}(\vect x)}}{\fint}d\sigma(y)f(\vect y):=\frac{t}{4\pi t^2}\int_{\partial {B_{|t|}}(\vect x)}d\sigma(y)f(\vect y) & \text{otherwise}
    \end{cases}
  \]
  holds, where $d\sigma$ denotes the surface element on $\partial B_{|t|}(\vect x)$. We introduce the notation $\triangle=\nabla\cdot\nabla$ and $\square=\partial_t^2-\triangle$.
  
\begin{lemma}[Green's Functions Properties]\label{lem:Greens_function_dalembert}
  Let $f\in \cal C^{\infty}(\bb R^3)$. Then:
  \begin{enumerate}[(i)]
    %\item $\square K_t^\pm(\vect x) = \delta(t)\delta^3(\vect x)$ in the distribution sense.
    \item The following identities holds:
    \begin{align}
    \begin{split}
    \label{eqn:dtKt_f}
        K^\pm_t*f&=\mp t \underset{\partial B_{\mp t}(0)}\fint d\sigma(y) f(\cdot-\vect y)\\
    \partial_t K^\pm_t*f&= \mp \underset{\partial{B_{\mp t}}(0)}\fint d\sigma(y)\; f(\cdot-\vect y) \mp \frac{t^2}{3}\underset{{B_{\mp t}}(0)}\fint d^3y\; \triangle f(\cdot-\vect y)\\
    \partial_t^2 K_t^\pm*f&= K_t^\pm*\triangle f = \triangle K_t^\pm* f.
    \end{split}
  \end{align}
    \item Set $K_t=\sum_\pm \mp K_t^\pm$. The mapping $(t,\vect x)\mapsto [K_t*f ](\vect x)$ can uniquely be extended at $t=0$ to become a $C^\infty(\bb R\times\bb R^3)$ function such that for all $n\in\bb N$
\begin{align}\label{eqn:greens_dalembert_limits}
        \lim_{t\to 0\mp} \begin{pmatrix}
          \partial_t^{2n} K_t*f\\
          \partial_t^{2n+1} K_t*f
        \end{pmatrix}
        = \begin{pmatrix}
          0\\
          \triangle^n f
        \end{pmatrix}
      \end{align}
    and $\square K_t*f=0$ for all $t\in\bb R$.
  \end{enumerate}
\end{lemma}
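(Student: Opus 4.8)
The plan is to reduce all the claimed identities to the elementary spherical–mean representation of $K^\pm_t*f$ and then to exploit that spherical means depend smoothly on the radius.

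\textbf{Part (i).} First I would evaluate the convolution with the distribution $K^\pm_t$ in spherical coordinates $\vect y=r\omega$, $r\geq 0$, $\omega\in S^2$:
\[
 (K^\pm_t*f)(\vect x)=\frac{1}{4\pi}\int_0^\infty\!\!\int_{S^2}\delta(r\pm t)\,r\,f(\vect x-r\omega)\;d\Omega(\omega)\,dr .
\]
The radial $\delta$ is supported at $r=\mp t$, which lies in $[0,\infty)$ only when $\mp t\geq 0$; carrying out the $r$–integration then gives $K^\pm_t*f=0$ for $\pm t>0$ and $K^\pm_t*f=\mp t\,\fint_{\partial B_{\mp t}(0)}d\sigma(y)\,f(\cdot-\vect y)$ otherwise, i.e.\ the first identity. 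For the remaining two I would write $s:=\mp t$ and $M_g(s)(\vect x):=\tfrac1{4\pi}\int_{S^2}g(\vect x-s\omega)\,d\Omega$, so that $K^\pm_t*f=s\,M_f(s)$ and $\partial_t=\mp\partial_s$. Differentiating under the integral and using the divergence theorem for $\vect y\mapsto f(\vect x-\vect y)$ on $B_s(0)$ yields $M_f'(s)=\tfrac{1}{4\pi s^2}\int_{B_s(0)}(\triangle f)(\vect x-\vect y)\,d^3y$; substituting this (rewritten as $\tfrac{s}{3}\fint_{B_s(0)}(\triangle f)(\cdot-\vect y)\,d^3y$) into $\partial_t(sM_f(s))=\mp\bigl(M_f(s)+sM_f'(s)\bigr)$ produces exactly the second identity. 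Carrying the same computation one step further gives the Euler--Poisson--Darboux relation $M_f''+\tfrac2s M_f'=\triangle M_f$ (and $\triangle M_f=M_{\triangle f}$ by differentiating under the integral), so $\partial_t^2(K^\pm_t*f)=\partial_s^2(sM_f(s))=s\bigl(M_f''+\tfrac2sM_f'\bigr)=s\,M_{\triangle f}(s)=K^\pm_t*\triangle f$; and $K^\pm_t*\triangle f=\triangle(K^\pm_t*f)$ is immediate from $\triangle_{\vect x}f(\vect x-\vect y)=(\triangle f)(\vect x-\vect y)=\triangle_{\vect y}f(\vect x-\vect y)$ together with $f\in\cal C^\infty$.

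\textbf{Part (ii).} Since $K^+_t*f$ vanishes for $t>0$ and $K^-_t*f$ for $t<0$, the first identity for both signs combines to $(K_t*f)(\vect x)=(K^-_t-K^+_t)*f(\vect x)=t\,M_f(t)(\vect x)$ for every $t\neq 0$, where $M_f(t)(\vect x)=\tfrac1{4\pi}\int_{S^2}f(\vect x-t\omega)\,d\Omega$ and both sides vanish at $t=0$. The map $(t,\vect x)\mapsto M_f(t)(\vect x)$ is $\cal C^\infty(\bb R\times\bb R^3)$ — differentiate under the integral over the compact $S^2$ — and it is even in $t$ (substitute $\omega\mapsto-\omega$); hence $(t,\vect x)\mapsto t\,M_f(t)(\vect x)$ is a $\cal C^\infty(\bb R\times\bb R^3)$ function agreeing with $K_t*f$ off $t=0$, and as a continuous extension it is unique (a fortiori the unique $\cal C^\infty$ one). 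For~(\ref{eqn:greens_dalembert_limits}) I would compute $\partial_t^jM_f(t)|_{t=0}=\tfrac{(-1)^j}{4\pi}\int_{S^2}(\omega\cdot\nabla)^jf\,d\Omega$; this vanishes for odd $j$, and for $j=2m$ it equals $\triangle^mf/(2m+1)$ by the moment identity $\tfrac1{4\pi}\int_{S^2}(\omega\cdot a)^{2m}\,d\Omega=|a|^{2m}/(2m+1)$ applied monomial by monomial with $a=\nabla$. Leibniz' rule applied to $t\,M_f(t)$ then gives $\partial_t^{2n}(K_t*f)|_{t=0}=2n\,\partial_t^{2n-1}M_f|_{t=0}=0$ and $\partial_t^{2n+1}(K_t*f)|_{t=0}=(2n+1)\,\partial_t^{2n}M_f|_{t=0}=\triangle^nf$, and continuity of these derivatives makes the one–sided limits $t\to0\mp$ equal to them. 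Finally $\square(K_t*f)=0$: for $t\neq 0$ it is the difference of the two $\square(K^\pm_t*f)=0$ from the third identity of (i), and $\square(K_t*f)$ is continuous in $(t,\vect x)$, hence vanishes everywhere (equivalently, directly, $\partial_t^2(tM_f(t))=tM_f''+2M_f'=t(M_f''+\tfrac2tM_f')=t\,\triangle M_f=\triangle(tM_f)$).

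\textbf{Where the difficulty sits.} Part (i) is essentially standard Kirchhoff/wave–equation calculus; the one genuinely delicate point is the behaviour at $t=0$ in part (ii). Each of $K^\pm_t*f$ is only \emph{one-sidedly} smooth there, and one must recognise that the combination $K_t=K^-_t-K^+_t$ removes the kink — this is exactly the assertion that the spherical mean $s\mapsto M_f(s)$ is even, i.e.\ that all its odd derivatives vanish at $s=0$. Related to this, since $f$ is only $\cal C^\infty$ and not real-analytic, one must not argue through the formal series $t\,M_f(t)=\sum_m\frac{t^{2m+1}}{(2m+1)!}\triangle^mf$, but instead through differentiation under the integral over $S^2$ (or finite Taylor expansion with remainder). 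Everything else is routine multivariable calculus together with the translation invariance of convolution.
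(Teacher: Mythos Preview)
Your proof is correct. The paper itself does not supply a proof of this lemma; it states at the outset of Section~3.1 that ``the proofs of all the claims are published separately'' in the companion monograph, so there is nothing to compare against directly. Your argument via spherical means $M_f(s)$, the divergence theorem for $M_f'$, and the Euler--Poisson--Darboux relation $M_f''+\tfrac{2}{s}M_f'=M_{\triangle f}$ is the standard route and is carried out cleanly. The identification $K_t*f=t\,M_f(t)$ for all $t\neq 0$, the evenness of $M_f$, and the moment identity $\tfrac{1}{4\pi}\int_{S^2}(\omega\cdot a)^{2m}\,d\Omega=(a\cdot a)^m/(2m+1)$ (read as an identity of polynomials in $a$, hence of differential operators) correctly yield the limits in~(ii) and the smooth extension at $t=0$. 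Your closing remark that one must avoid the formal power series in $\triangle^m f$ and instead differentiate under the compact $S^2$-integral is well taken and is indeed the only point requiring care.
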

\begin{remark}\label{rem:waveequation}
  In the future we will denote the unique extension of $K_t$ by the same symbol $K_t$. It is called the \emph{propagator} of the homogeneous wave equation.
\end{remark}
\if\arxiv 1

A direct consequence of this lemma is Kirchoff's formula:
\begin{corollary}[Kirchoff's Formula]\label{cor:Kirchoff}
  Let $A^0,\dot A^0\in\cal C^\infty(\bb R^3)$. The mapping $t\mapsto A_t$ defined by 
  \begin{align}\label{eqn:Kirchoff}
    A_t=\partial_t K_t*A^0+K_t*\dot{A}^0
  \end{align}
  solves the homogeneous wave equation $\square A_t=0$ in the strong sense and for initial values  $A_t|_{t=0}=A^0$ and $\partial_t A_t|_{t=0}=A^0$.
\end{corollary}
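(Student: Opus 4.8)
The statement is essentially a repackaging of Lemma~\ref{lem:Greens_function_dalembert}(ii), so the plan is to unfold that lemma rather than carry out any new analysis. First I would record the regularity: since $A^0,\dot A^0\in\cal C^\infty(\bb R^3)$, part~(ii) gives that $(t,\vect x)\mapsto[K_t*A^0](\vect x)$ and $(t,\vect x)\mapsto[K_t*\dot A^0](\vect x)$ extend to $\cal C^\infty(\bb R\times\bb R^3)$ functions; one more $t$-derivative of the first stays $\cal C^\infty$, so the right-hand side of (\ref{eqn:Kirchoff}) defines a function $A$ that is jointly smooth in $(t,\vect x)$. This is exactly what is needed to make sense of ``strong solution'' and to interchange $\partial_t$, $\triangle$ and the convolutions freely in what follows.

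Second, I would check $\square A_t=0$. Since $\square=\partial_t^2-\triangle$ has constant coefficients, $\partial_t$ commutes with $\square$, hence $\square(\partial_t K_t*A^0)=\partial_t(\square K_t*A^0)=0$ and $\square(K_t*\dot A^0)=0$, both by the identity $\square K_t*f=0$ from part~(ii). By linearity $\square A_t=0$ for every $t\in\bb R$, and since every derivative here is classical, $A$ is a strong solution.

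Third, the initial data. At $t=0$ the limit (\ref{eqn:greens_dalembert_limits}) with $n=0$ in the odd line gives $\lim_{t\to0}\partial_t K_t*A^0=\triangle^0A^0=A^0$, while one also needs $\lim_{t\to0}K_t*\dot A^0=0$. This last limit is the only ingredient not literally among those in (\ref{eqn:greens_dalembert_limits}), whose index runs over $n\in\bb N$, but it is immediate from (\ref{eqn:dtKt_f}): there each $K_t^\pm*f$ equals $\mp t$ times the spherical average of $f(\cdot-\vect y)$ over $\partial B_{\mp t}(0)$, hence is of order $t$ as $t\to0$, and $K_t=\sum_\pm\mp K_t^\pm$, so $K_t*f\to0$ for any $f\in\cal C^\infty(\bb R^3)$ (equivalently, this is just continuity at $t=0$ of the smooth extension furnished by part~(ii)). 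Thus $A_t|_{t=0}=A^0$. Differentiating once, $\partial_tA_t=\partial_t^2K_t*A^0+\partial_tK_t*\dot A^0$; by (\ref{eqn:greens_dalembert_limits}) the first summand tends to $0$ (even line, $n=1$) and the second to $\dot A^0$ (odd line, $n=0$), so $\partial_tA_t|_{t=0}=\dot A^0$. (The ``$\partial_tA_t|_{t=0}=A^0$'' in the statement is evidently a typo for $\dot A^0$.)

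I do not expect a genuine obstacle here: essentially all the analytic content lives in Lemma~\ref{lem:Greens_function_dalembert}, and the corollary only assembles it via linearity and the commutation $[\partial_t,\square]=0$. The single point deserving its own sentence is the auxiliary limit $\lim_{t\to0}K_t*f=0$ flagged above, which one reads off the explicit representation (\ref{eqn:dtKt_f}).
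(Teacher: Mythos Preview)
Your proposal is correct and matches the paper's treatment: the paper gives no explicit proof, simply stating the corollary as ``a direct consequence'' of Lemma~\ref{lem:Greens_function_dalembert}, and your argument is precisely the natural unpacking of that consequence. One small caveat: since the paper's convention is $\bb N_0=\bb N\cup\{0\}$, the index in (\ref{eqn:greens_dalembert_limits}) starts at $n=1$, so neither $\lim_{t\to0}K_t*f=0$ nor $\lim_{t\to0}\partial_t K_t*f=f$ is literally listed there; but you already supply the right fix---both follow immediately from the explicit spherical-mean formulas in (\ref{eqn:dtKt_f}) (or, equivalently, from the $\cal C^\infty$ extension in part~(ii))---so this is cosmetic rather than a gap.
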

We construct explicit solutions of the Maxwell equations along the following line of thought: In the distribution sense every solution to the Maxwell equations (\ref{eqn:maxwell_equations_charge_current}) is also a solution to
\begin{align*}
  \square \begin{pmatrix}
    \vect E_t\\
    \vect B_t
  \end{pmatrix} = 4\pi\begin{pmatrix}
    -\nabla \rho_t - \partial_t \vect j_t\\
    \nabla\wedge \vect j_t
  \end{pmatrix}
\end{align*}
having initial values
\begin{align}\label{eqn:wave_equations_initial_values}
    (\vect E_t,\vect B_t)\big|_{t=t_0}&=(\vect E^0,\vect B^0),  &&
  \partial_t(\vect E_t,\vect B_t)\big|_{t=t_0}=(\nabla\wedge\vect B^0-4\pi \vect j_{t_0},-\nabla\wedge\vect E^0).
\end{align}
To make formulas more compact we sometimes abbreviate the pair of electric and magnetic fields in the form $F_t=(\vect E_t,\vect B_t)$ and let operators act thereon component-wisely. With the help of the Green's functions from Definition \ref{def:greens_dalembert} one may guess the general form of any solution to these equations:
\begin{align}\label{eqn:maxwell_solution}
  F_t = F^{\mathit{hom}}_t + \int_{-\infty}^{\infty} ds\; K^\pm_{t-t_0-s} * \begin{pmatrix}
    -\nabla \rho_{t_0+s} - \partial_s \vect j_{t_0+s}\\
    \nabla\wedge \vect j_{t_0+s}
  \end{pmatrix}
\end{align}
where $F^{\mathit{hom}}_t$ is a solution of the homogeneous wave equation, i.e. $\square F^{\mathit{hom}}_t=0$. Considering the forward as well as backward time evolution we regard two different kinds of initial value problems:
\begin{enumerate}[(i)]
  \item Initial fields $F^0$ are given at some time $t_0\in\bb R\cup\{-\infty\}$ and propagated to a time $t>t_0$.
  \item Initial fields $F^0$ are given at some time $t_0\in\bb R\cup\{+\infty\}$ and propagated to a time $t<t_0$.
\end{enumerate}
The kind of initial value problem posed will then determine $F^{\mathit{hom}}_t$ and the corresponding Green's function $K_t^{\pm}$. For (i) we use $K_t^-$ and for (ii) we use $K_t^+$ which are uniquely determined by $\square K_t^\pm(\vect x)=\delta(t)\delta^3(\vect x)$ and $K_t^\pm = 0 \text{ for } \pm t>0$. Furthermore, in the case of time-like charge trajectories and $\mp (t-t_0)>0$ Lemma \ref{lem:Greens_function_dalembert} implies
\[
  \square\int_{\pm \infty}^0 ds\; K^{\pm}_{t-t_0-s}*\begin{pmatrix}
    -\nabla \rho_{t_0+s} - \partial_s \vect j_{t_0+s}\\
    \nabla\wedge \vect j_{t_0+s}
  \end{pmatrix}=\int_{\pm \infty}^0 ds\; \square K^{\pm}_{t-t_0-s}*\begin{pmatrix}
    -\nabla \rho_{t_0+s} - \partial_s \vect j_{t_0+s}\\
    \nabla\wedge \vect j_{t_0+s}
  \end{pmatrix}=0.
\]
Terms of this kind can simply be included in the homogeneous part of the solution $F^{\mathit{hom}}_t$. This way we arrive at two solution formulas. One being suitable for our forward initial value problem, i.e. $t-t_0>0$,
\[
  F_t = F_t^{\mathit{hom}} + 4\pi\int_{0}^{t-t_0} ds\; K^{-}_{t-t_0-s} * \begin{pmatrix}
    -\nabla \rho_{t_0+s} - \partial_s \vect j_{t_0+s}\\
    \nabla\wedge \vect j_{t_0+s}
  \end{pmatrix},
\]
and the other suitable for the backward initial value problem, i.e. $t-t_0<0$,
\[
  F_t = F_t^{\mathit{hom}} + 4\pi\int_{t-t_0}^{0} ds\; K^{+}_{t-t_0-s} * \begin{pmatrix}
    -\nabla \rho_{t_0+s} - \partial_s \vect j_{t_0+s}\\
    \nabla\wedge \vect j_{t_0+s}
  \end{pmatrix}.
\]
As a last step one needs to identify the homogeneous solutions which satisfy the given initial conditions (\ref{eqn:wave_equations_initial_values}). Corollary \ref{cor:Kirchoff} provides the explicit formula:
\begin{align*}
  F^{\mathit{hom}}_t:=\begin{pmatrix}
    \partial_t & \nabla\wedge\\
    -\nabla\wedge & \partial_t
  \end{pmatrix}
  K_{t-t_0}*F^0=
\begin{pmatrix}
 \partial_t K_{t-t_0}*\vect E^0+\nabla\wedge K_{t-t_0}*\vect B^0\\
 -\nabla\wedge K_{t-t_0}*\vect E^0+\partial_t K_{t-t_0}*\vect B^0\\
\end{pmatrix}.
\end{align*}
Therefore, using the propagator $K_t$ and a substitution in the integration variable both initial value problems fulfill for all $t\in\bb R$:
\begin{align*}
  F_t &= \begin{pmatrix}
    \partial_t & \nabla\wedge\\
    -\nabla\wedge & \partial_t
  \end{pmatrix}
  K_{t-t_0}*F^0
  + K_{t-t_0}*\begin{pmatrix}
    -4\pi \vect j_{t_0}\\
    0
  \end{pmatrix}
  + 4\pi \int_{t_0}^{t} ds\; K_{t-s} * \begin{pmatrix}
    -\nabla \rho_{s} - \partial_s \vect j_{s}\\
    \nabla\wedge \vect j_{s}
  \end{pmatrix}.
\end{align*}
\fi

\begin{theorem}[Maxwell Solutions]\label{thm:maxwell_solutions} Let $(\rho,\vect j)\in\cal D$ .
\begin{enumerate}[(i)]
 \item Given $(\vect E^0,\vect B^0)\in\cal F^1$ fulfilling the Maxwell constraints $\nabla\cdot\vect E^0=4\pi\rho_{t_0}$ and $\nabla\cdot\vect B^0=0$ for any $t_0\in\bb R$, the mapping $t\mapsto F_t=(\vect E_t,\vect B_t)$ defined by
  \begin{align*}
    \begin{pmatrix}
      \vect E_t\\
      \vect B_t
    \end{pmatrix}:=
    \begin{pmatrix}
    \partial_t & \nabla\wedge\\
    -\nabla\wedge & \partial_t
    \end{pmatrix}
    K_{t-t_0}*\begin{pmatrix}
      \vect E^0\\
      \vect B^0
    \end{pmatrix}
    + K_{t-t_0}*\begin{pmatrix}
      -4\pi \vect j_{t_0}\\
      0
    \end{pmatrix}
    + 4\pi \int_{t_0}^{t} ds\; K_{t-s} * \begin{pmatrix}
      -\nabla & - \partial_s \\
      0 & \nabla\wedge
    \end{pmatrix}
    \begin{pmatrix}
      \rho_s\\
      \vect j_{s}
    \end{pmatrix}
  \end{align*}
  for all $t\in\bb R$ is $\cal F^1$ valued, infinitely often differentiable and a solution to (\ref{eqn:maxwell_equations_charge_current}) with $t_0$ initial value $F^0$.
  
  \item For all $t\in\bb R$ we have $\nabla\cdot\vect E_t=4\pi\rho_{t}$ and $\nabla\cdot\vect B_{t}=0$.

% \item Fix $t_0,t^*\in\bb R$ and $\vect x^*\in\bb R^3$. Suppose for all $t\in B_1(t^*)$ and $\vect x\in B_1(\vect x^*)$
%  \begin{align}\label{eqn:nothing on the light-cone}
%    K_{t-t_0}*\varrho_{t_0}=0
%  \end{align}
%  holds,
%  then statement (i) restricted to $(t,\vect x)\in B_1(t^*)\times B_1(\vect x^*)$ also holds for initial fields $(\vect E^0,\vect B^0)=0$.

\end{enumerate}
\end{theorem}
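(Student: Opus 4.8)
The plan is to verify the displayed formula by direct computation, organised into four steps: (a) regularity; (b) identification of the Cauchy data at $t_0$ and of the inhomogeneous wave equation that $F_t$ satisfies; (c) the reduction from that second-order equation to the first-order Maxwell system; (d) propagation of the divergence constraints. \textbf{Regularity.} First I would check that each of the three summands defining $F_t$ is $\cal F^1$-valued and jointly smooth in $(t,\vect x)$. For the two propagator terms this is immediate from Lemma~\ref{lem:Greens_function_dalembert}(ii), since $K_t$ maps $\cal C^\infty(\bb R^3)$ into $\cal C^\infty(\bb R\times\bb R^3)$ and the constant-coefficient operators $\partial_t,\nabla\wedge$ preserve smoothness. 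For the Duhamel integral I would differentiate under the integral sign; the key point is that $\lim_{t\to t_0}K_{t-t_0}*g=0$ for every $g\in\cal C^\infty(\bb R^3)$ by (\ref{eqn:greens_dalembert_limits}) with $n=0$, so the boundary term produced by the Leibniz rule vanishes and $\partial_t$ of the integral is simply $4\pi\int_{t_0}^t ds\,\partial_tK_{t-s}*(\cdots)$. Iterating this, and using the $\cal C^1$-in-time, $\cal C^\infty$-in-space regularity of $(\rho,\vect j)$ from Definition~\ref{def:charge-current}, gives joint smoothness to all orders.

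\textbf{Wave equation and Cauchy data.} Applying $\square$, using $\square K_{t-t_0}*f=0$ for the propagator terms and the identities $\partial_t^2K_t^\pm*f=\triangle K_t^\pm*f$ together with $\lim_{t\to 0}\partial_tK_t*f=f$ for the Duhamel term (all from Lemma~\ref{lem:Greens_function_dalembert}), one obtains $\square F_t=4\pi(-\nabla\rho_t-\partial_t\vect j_t,\,\nabla\wedge\vect j_t)$, exactly as in the heuristic computation preceding the statement. Evaluating at $t=t_0$ with $K_0*f=0$ and $\partial_tK_{t-t_0}*f|_{t=t_0}=f$ gives $F_{t_0}=F^0$, and one further differentiation with $\partial_t^2K_{t-t_0}*f|_{t=t_0}=0$ gives $\partial_t(\vect E_t,\vect B_t)|_{t=t_0}=(\nabla\wedge\vect B^0-4\pi\vect j_{t_0},\,-\nabla\wedge\vect E^0)$, i.e.\ the Cauchy data (\ref{eqn:wave_equations_initial_values}).

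\textbf{Reduction and constraints.} Put $\vect G_t:=\partial_t\vect E_t-\nabla\wedge\vect B_t+4\pi\vect j_t$ and $\vect H_t:=\partial_t\vect B_t+\nabla\wedge\vect E_t$, the defects in the two evolution equations of (\ref{eqn:maxwell_equations_charge_current}). Since $\square$ commutes with $\partial_t$ and $\nabla\wedge$, a short computation from $\square\vect E_t=4\pi(-\nabla\rho_t-\partial_t\vect j_t)$, $\square\vect B_t=4\pi\nabla\wedge\vect j_t$, the continuity equation $\partial_t\rho_t+\nabla\cdot\vect j_t=0$, and the identities $\nabla\wedge\nabla=0$ and $\nabla\wedge(\nabla\wedge\,\cdot\,)=\nabla(\nabla\cdot\,\cdot\,)-\triangle$ shows $\square\vect G_t=0=\square\vect H_t$. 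From the Cauchy data above, $\vect G_{t_0}=0=\vect H_{t_0}$ directly; and $\partial_t\vect G_{t_0}=0=\partial_t\vect H_{t_0}$ after inserting $\partial_t^2\vect E_{t_0}=\triangle\vect E^0+4\pi(-\nabla\rho_{t_0}-\partial_t\vect j_{t_0})$ and $\partial_t^2\vect B_{t_0}=\triangle\vect B^0+4\pi\nabla\wedge\vect j_{t_0}$ and --- this is where the hypothesis is used --- invoking the Maxwell constraints $\nabla\cdot\vect E^0=4\pi\rho_{t_0}$, $\nabla\cdot\vect B^0=0$ to cancel the $\nabla\wedge\nabla\wedge$ terms. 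Uniqueness for the homogeneous wave equation with prescribed Cauchy data (cf.\ Corollary~\ref{cor:Kirchoff}, via Kirchhoff's formula and finite propagation speed) then forces $\vect G_t\equiv 0\equiv\vect H_t$, which is (i). For (ii) the constraints propagate: by the evolution equations just established, $\partial_t(\nabla\cdot\vect E_t-4\pi\rho_t)=\nabla\cdot(\nabla\wedge\vect B_t)-4\pi(\nabla\cdot\vect j_t+\partial_t\rho_t)=0$ and $\partial_t(\nabla\cdot\vect B_t)=-\nabla\cdot(\nabla\wedge\vect E_t)=0$, while both vanish at $t_0$ by hypothesis on $F^0$; hence they vanish for all $t$.

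\textbf{Main obstacle.} I expect no single estimate to be hard; the tedious point is the bookkeeping in (a) --- justifying term-by-term differentiation of the Duhamel integral to all orders, jointly in $(t,\vect x)$, including the boundary terms from the moving endpoint --- which nonetheless reduces entirely to Lemma~\ref{lem:Greens_function_dalembert}. The structurally essential step is the reduction in (c): the Maxwell constraints on $F^0$ are precisely what annihilates the first time-derivatives of the defects $\vect G,\vect H$ at $t_0$, so that a solution of the second-order vector wave equation with the Cauchy data (\ref{eqn:wave_equations_initial_values}) also solves the first-order Maxwell system.
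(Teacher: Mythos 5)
Your proposal is correct and follows the same underlying line of argument as the paper's preceding discussion, which derives the inhomogeneous wave equation and the Cauchy data (\ref{eqn:wave_equations_initial_values}) from the Maxwell system and then builds the solution formula out of the propagator $K_t$. The paper itself stops at this constructive derivation and defers the formal verification to \cite{deckert_electrodynamic_2010}; the piece you supply in your step (c) --- setting $\vect G_t=\partial_t\vect E_t-\nabla\wedge\vect B_t+4\pi\vect j_t$, $\vect H_t=\partial_t\vect B_t+\nabla\wedge\vect E_t$, showing via the continuity equation and the vector identity $\nabla\wedge(\nabla\wedge\cdot)=\nabla(\nabla\cdot\,\cdot)-\triangle$ that $\square\vect G_t=0=\square\vect H_t$, and using the constraint hypothesis $\nabla\cdot\vect E^0=4\pi\rho_{t_0}$, $\nabla\cdot\vect B^0=0$ to kill $\partial_t\vect G_{t_0}$ and $\partial_t\vect H_{t_0}$, so that uniqueness for the homogeneous wave equation forces $\vect G\equiv\vect H\equiv 0$ --- is exactly the converse direction needed to turn the paper's heuristic into a proof, and it is correct. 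One small caveat worth flagging in (a): since $\cal D$ requires only $\cal C^1$ regularity of $(\rho,\vect j)$ in time, the Duhamel term is at best $\cal C^2$ in $t$ (its second $\partial_t$ produces the undifferentiated source), so "joint smoothness to all orders" overreaches; this mirrors an imprecision in the theorem's own phrase "infinitely often differentiable" (which can only refer to spatial regularity, i.e.\ $\cal F^1$-valuedness) and does not affect the substance of your argument.
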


\begin{remark}
  Clearly, one needs less regularity of the initial values in order to get a strong solution. With regard to $\text{WF}_{\varrho}$, however, we will only need to consider smooth initial values $\cal F^1$. The explicit formula of the solutions (after an additional partial integration) was already found in \cite{komech_longtime_2000}[(A.24),(A.25)] where it was derived with the help of the Fourier transform (there seems to be a misprint in equation (A.24). However, (A.20) from which it is derived is correct).
\end{remark}

For the rest of this paper the charge-current densities $(\rho,\vect j)$ we will consider are the ones generated by a moving rigid charge on time-like trajectories:
\begin{definition}[Charge Trajectories]\label{def:charge trajectory}
\mbox{}\\
\begin{enumerate}[(i)]
 
 \item We call any map
  \begin{align*}
    (\vect q,\vect p)\in\cal C^1(\bb R,\bb R^3\times\bb R^3),\; t\mapsto(\vect q_t,\vect p_t)
  \end{align*}
  a charge trajectory and denote with $\vect q_t$ and $\vect p_t$ the position and momentum of the charge, respectively. Its velocity at time $t$ is given by $\vect v(\vect p_{t}):=\frac{\vect p_{t}}{\sqrt{m^2+\vect p_{t}^2}}$.
\setcounter{enumi}{1}
  \item We collect all time-like charge trajectories in the set
  \begin{align*}
    \cal T_{\text{\clock}}^1 &:= \bigg\{(\vect q,\vect p)\in\cal C^1(\bb R,\bb R^3\times\bb R^3)\;\bigg|\;\left\|\vect v(\vect p_t)\right\|<1\;\text{for all}\;t\in\bb R\bigg\},
  \end{align*}
  
  \item and all strictly time-like charge trajectories in the set
  \begin{align*}
    \cal T_{\text{\clock}!}^1(I) &:= \bigg\{(\vect q,\vect p)\in\cal T_{\text{\clock}}^1\;\bigg|\;\exists\namel{vmax}{v_{max}}<1\;\text{such that}\;\sup_{t\in I}\left\|\vect v(\vect p_t)\right\|\leq\namer{vmax}\bigg\}.
  \end{align*}
  where we use the abbreviation $\cal T_{\text{\clock}!}^1:=\cal T_{\text{\clock}!}^1(\bb R)$.
\end{enumerate}
Furthermore, we use the notation
\[
  (\vect q,\vect p)=(\widetilde{\vect q},\widetilde{\vect p}):\Leftrightarrow\forall t\in\bb R:(\vect q_t,\vect p_t)=(\widetilde{\vect q}_t,\widetilde{\vect p}_t)
\]
and define the Cartesian products $\cal T_{\text{\clock}}^{N}:=(\cal T_{\text{\clock}}^{1})^{N}$ and $\cal T_{\text{\clock}!}^{N}:=(\cal T_{\text{\clock}!}^{1})^{N}$.
%!!!We shall also use the notation $\cal T_\#:=\times_{i=1}^N\cal T_\#^1$ for the $N$-fold Cartesian product where $\#$ is a placeholder for $\vee$ or $\triangledown$.
\end{definition}

\begin{definition}[The Charge-Current Density of a Charge Trajectory]
 For $\varrho\in\cal C^\infty_c(\bb R^3,\bb R)$ and $(\vect q,\vect p)\in\cal T_{\text{\clock}}^1$ define 
  \begin{align*}
    \rho_t(\vect x):=\varrho(\vect x-\vect q_t), && \vect j_t(\vect x):=\frac{\vect p_t}{\sqrt{m^2+\vect p_t^2}}\varrho(\vect x-\vect q_t)
  \end{align*}
  for all $(t,\vect x)\in\bb R\times\bb R^3$ which we call the induced charge-current density of $(\vect q,\vect p)$.
\end{definition}

Clearly, $(\rho,\vect j)\in\cal D$ so that Theorem \ref{thm:maxwell_solutions} applies:
\begin{definition}[Maxwell Time Evolution]\label{def:Maxwell time evolution}
  Given a charge trajectory $(\vect q,\vect p)\in\cal T_{\text{\clock}}^1$ which induces $(\rho,\vect j)\in\cal D$ we denote the solution $t\mapsto F_t$ of the Maxwell equations (\ref{eqn:maxwell_equations_charge_current}) given by Theorem \ref{thm:maxwell_solutions} and $t_0$ initial values $F^0=(\vect E^0,\vect B^0)\in\cal F^1$ by
  \[
    t\mapsto M_{\varrho}[F^0,(\vect q,\vect p)](t,t_0):=F_t.
  \]
\end{definition}
One finds the following special solutions:
\begin{theorem}[Li\'enard-Wiechert Fields]\label{thm:LWfields}
  Let $F^0=(\vect E^0,\vect B^0)\in\cal F^1$ such that $\nabla\cdot\vect E^0=4\pi\rho_{t_0}$ and $\nabla\cdot\vect B^{0}=0$ as well as
  \begin{align}\label{eqn:mem_loss}
    \|\vect E^0(\vect x)\|+\|\vect B^0(\vect x)\| + \|\vect x\|\sum_{i=1}^3\left(\|\partial_{\vect x_i}\vect E^0(\vect x)\|+\|\partial_{\vect x_i}\vect B^0(\vect x)\|\right) = \underset{\|\vect x\|\to\infty}\bigoh\left(\|\vect x\|^{-\epsilon}\right)
  \end{align}
  for some $\epsilon> 0$ and all $\vect x\in\bb R^3$ are fulfilled. We distinguish two cases denoted by $+$ or $-$ and assume that for all $t\in\bb R$, $(\vect q,\vect p)\in\cal T_{\text{\clock}!}^1([t,\infty))$ or $(\vect q,\vect p)\in\cal T_{\text{\clock}!}^1((-\infty,t])$ holds, respectively.
  Then the point-wise limit
  \begin{align}\label{eqn:LW_fields}
    \begin{split}
      M_{\varrho}[\vect q,\vect p](t,\pm\infty)&:=\operatorname{pw-lim}\limits_{t_0\to\pm\infty} M_{\varrho}[F^0,(\vect q,\vect p)](t,t_0)\\
      &=4\pi\int_{\pm\infty}^{t} ds\; \left[K_{t-s} * \begin{pmatrix}
        -\nabla & - \partial_s\\
        0 & \nabla\wedge
        \end{pmatrix}\begin{pmatrix}
          \rho_s\\
          \vect j_s
        \end{pmatrix}\right]
      =\int d^3z\; \varrho(\vect z)\begin{pmatrix}
        \vect E^{LW\pm}_t(\cdot-\vect z)\\
        \vect B^{LW\pm}_t(\cdot-\vect z)
      \end{pmatrix}
    \end{split}
  \end{align}
  exists in $\cal F^1$ where 
  \begin{align}\label{eqn:LW_E_integrand}
    \vect E^{LW\pm}_t(\vect x-\vect z)&:=\left[\frac{({\vect n}\pm\vect v)(1-\vect v^2)}{\|\vect x -\vect z- \vect q\|^2(1\pm {\vect n}\cdot\vect v)^3}+\frac{{\vect n}\wedge[({\vect n}\pm\vect v)\wedge \vect a]}{\|\vect x -\vect z -\vect q\|(1\pm {\vect n}\cdot\vect v)^3}\right]^\pm\\
    \vect B^{LW\pm}_t(\vect x-\vect z)&:=\mp[\vect n\wedge\vect E_t(\vect x-\vect z)]^\pm\label{eqn:LW_B_integrand}
  \end{align}
  and
  \begin{align}
    \begin{split}\label{eqn:LW_abbreviations}
      \begin{array}{rlrlc}
       \vect q^\pm &:= \vect q_{t^\pm} & \vect v^\pm &:= \vect v(\vect p_{t^\pm}) & \vect a^\pm := \partial_{t}{\vect v}(\vect p_{t})|_{t=t^{\pm}}\\
       \vect n^\pm &:= \frac{\vect x-\vect z-\vect q^\pm}{\|\vect x-\vect z-\vect q^\pm\|} & t^\pm &= t \pm \|\vect x-\vect z-\vect q^\pm\|.
      \end{array}
    \end{split}
  \end{align}
\end{theorem}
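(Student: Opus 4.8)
The plan is to start from the explicit solution formula of Theorem~\ref{thm:maxwell_solutions}, which writes $M_\varrho[F^0,(\vect q,\vect p)](t,t_0)$ as the sum of a homogeneous part depending only on the initial data $F^0$ at time $t_0$, the term $K_{t-t_0}*(-4\pi\vect j_{t_0},0)$, and the source term $4\pi\int_{t_0}^t ds\,K_{t-s}*(-\nabla\rho_s-\partial_s\vect j_s,\nabla\wedge\vect j_s)$ with $\rho_s(\vect y)=\varrho(\vect y-\vect q_s)$ and $\vect j_s(\vect y)=\vect v(\vect p_s)\varrho(\vect y-\vect q_s)$. I would then prove the three assertions in three steps: (a) the homogeneous part and the $\vect j_{t_0}$-term tend to $0$ pointwise as $t_0\to\pm\infty$; (b) for each fixed $(t,\vect x)$ the source integral is already constant in $t_0$ once $|t_0|$ is large, so the pointwise limit in \eqref{eqn:LW_fields} exists, lies in $\cal F^1$, and equals $4\pi\int_{\pm\infty}^t ds\,K_{t-s}*(-\nabla\rho_s-\partial_s\vect j_s,\nabla\wedge\vect j_s)$; (c) this integral is evaluated explicitly and shown to coincide with the smeared Li\'enard--Wiechert fields \eqref{eqn:LW_E_integrand}--\eqref{eqn:LW_B_integrand}.

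For step (a) I would use the spherical-mean representations of Lemma~\ref{lem:Greens_function_dalembert}: each of $K_\tau*f$, $\partial_\tau K_\tau*f$, and the spatial derivatives $\nabla\wedge K_\tau*f=K_\tau*(\nabla\wedge f)$ is, up to sign, $|\tau|$ times a surface mean over $\partial B_{|\tau|}(\vect x)$ of $f$ or $\nabla f$, together with a volume mean of $\triangle f$ that one rewrites via the divergence theorem as $|\tau|$ times the surface mean of $\partial_n f$ over $\partial B_{|\tau|}(\vect x)$. For $\tau=t-t_0$ with $|t_0|$ large and $\vect x$ fixed, the argument ranges over points of norm $\sim|t-t_0|$, so the decay hypothesis \eqref{eqn:mem_loss} ($\|F^0\|=\bigoh(\|\cdot\|^{-\epsilon})$ and $\|\nabla F^0\|=\bigoh(\|\cdot\|^{-1-\epsilon})$) makes every such term $\bigoh(|t-t_0|^{-\epsilon})\to0$; hence the homogeneous part vanishes. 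For $K_{t-t_0}*(-4\pi\vect j_{t_0},0)(\vect x)$ one uses the compact support of $\varrho$: choosing $R>0$ with $\supp\varrho\subseteq B_R(0)$, the sphere $\partial B_{|t-t_0|}(\vect x)$ meets $\supp\vect j_{t_0}\subseteq\overline{B_R(\vect q_{t_0})}$ in surface area $\bigoh(1)$ for $|t-t_0|$ large, while the kernel carries a factor $\bigoh(|t-t_0|^{-1})$, so this term is $\bigoh(|t-t_0|^{-1})\to0$.

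Step (b) is where the time-likeness hypothesis enters. The source $(-\nabla\rho_s-\partial_s\vect j_s,\nabla\wedge\vect j_s)$ is supported in $\overline{B_R(\vect q_s)}$. In the retarded case, $(\vect q,\vect p)\in\cal T^1_{\text{\clock}!}((-\infty,t])$ with some velocity bound $v_{max}<1$ on $(-\infty,t]$, so $\|\vect q_t-\vect q_s\|\leq v_{max}|t-s|$ for $s\leq t$. If $K_{t-s}*(\text{source})(\vect x)\neq0$ then, since $K_{t-s}$ is supported on $\{\|\vect x-\vect y\|=t-s\}$, there is $\vect y$ with $\|\vect x-\vect y\|=t-s$ and $\|\vect y-\vect q_s\|\leq R$, whence $t-s\leq\|\vect x-\vect q_t\|+v_{max}(t-s)+R$, i.e.\ $t-s\leq(\|\vect x-\vect q_t\|+R)/(1-v_{max})$. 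Thus for $t_0$ below this threshold the integrand of $\int_{t_0}^t$ vanishes outside a fixed compact $s$-interval, so $\int_{t_0}^t=\int_{-\infty}^t$; combined with step (a) this shows the pointwise limit exists and equals $4\pi\int_{-\infty}^t ds\,K_{t-s}*(\text{source})$ (the advanced case, with $\cal T^1_{\text{\clock}!}([t,\infty))$, is entirely symmetric). Finally, on any compact set of $\vect x$ this integral agrees, for $t_1$ chosen sufficiently negative, with $M_\varrho[0,(\vect q,\vect p)](t,t_1)-K_{t-t_1}*(-4\pi\vect j_{t_1},0)$, and both summands are $\cal C^\infty$ and $\cal F^1$-valued by Theorem~\ref{thm:maxwell_solutions} and Lemma~\ref{lem:Greens_function_dalembert} (note $\vect j_{t_1}\in\cal C^\infty_c$); hence the limit is $\cal C^\infty$ on all of $\bb R^3$, i.e.\ lies in $\cal F^1$.

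For step (c) I would first reduce to a point charge: writing $\varrho(\,\cdot-\vect q_s)=\varrho*\delta^{(3)}(\,\cdot-\vect q_s)$ and using that convolution in the space variable commutes with $\nabla$, $\partial_s$, $K_{t-s}*$ and with the (now compactly supported in $s$, hence harmless) $s$-integration, one obtains $M_\varrho[\vect q,\vect p](t,\pm\infty)=\int d^3z\,\varrho(\vect z)\,\bigl(\text{point field}\bigr)(\,\cdot-\vect z)$, the point field being that of a point charge moving on $s\mapsto\vect q_s$ with velocity $\vect v(\vect p_s)$. For the point charge, carrying out the $s$-integration against the Dirac $\delta$ contained in $K_{t-s}$ localises $s$ at the unique instant $t^\pm$ — unique by time-likeness, as recalled in the introduction — satisfying $t^\pm=t\pm\|\vect x-\vect z-\vect q_{t^\pm}\|$, and the standard reduction of the resulting spatial and temporal derivatives, using $\partial_t t^\pm=(1\pm\vect n^\pm\!\cdot\vect v^\pm)^{-1}$ and $\nabla t^\pm=\pm\vect n^\pm(1\pm\vect n^\pm\!\cdot\vect v^\pm)^{-1}$ (which generate the $(1\pm\vect n\!\cdot\vect v)^3$ denominators and the velocity/acceleration split), yields exactly \eqref{eqn:LW_E_integrand}--\eqref{eqn:LW_B_integrand} with the abbreviations \eqref{eqn:LW_abbreviations}. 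As a consistency check one may note that by step (b) the limit is the unique $\cal C^\infty$ solution of the Maxwell equations \eqref{eqn:maxwell_equations_charge_current} with this source that decays pointwise at spatial infinity, which classically is the Li\'enard--Wiechert field. The main labour — and the only genuinely delicate point — is this last classical computation: keeping the advanced/retarded signs consistent throughout and differentiating the implicitly defined times $t^\pm$; all the remaining work is bookkeeping with the formulas already assembled in Section~\ref{sec:maxwellsolutions}.
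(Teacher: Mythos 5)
The paper itself does not prove Theorem~\ref{thm:LWfields}; it is stated as a review result and the proof is deferred to the cited monograph \cite{deckert_electrodynamic_2010} (see the remark at the start of Section~\ref{sec:maxwellsolutions}). Your proposal reconstructs the natural argument given the infrastructure already assembled in the paper — the explicit solution formula of Theorem~\ref{thm:maxwell_solutions} and the spherical-mean identities of Lemma~\ref{lem:Greens_function_dalembert} — and the three-step plan (kill the $F^0$-homogeneous and $\vect j_{t_0}$-boundary terms pointwise; show the source integral stabilizes for $|t_0|$ large via finite speed of propagation plus strict time-likeness; reduce to a point charge by convolution and run the classical Li\'enard--Wiechert computation) is correct in substance and is the same route.

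Two small inaccuracies worth flagging. In step (a) you list $K_\tau*f$ among the terms that decay as $\bigoh(|\tau|^{-\epsilon})$; that one is $|\tau|$ times a surface mean of $f$ and is therefore only $\bigoh(|\tau|^{1-\epsilon})$, which need not vanish. This does not hurt because the homogeneous part in Theorem~\ref{thm:maxwell_solutions} is
$\begin{pmatrix}\partial_t & \nabla\wedge\\ -\nabla\wedge & \partial_t\end{pmatrix}K_{t-t_0}*F^0$,
so only $\partial_t K_{t-t_0}*F^0$ and $\nabla\wedge K_{t-t_0}*F^0$ occur there; the bare $K$-convolution appears solely in the $\vect j_{t_0}$ term, where compact support of $\vect j_{t_0}$ already gives $\bigoh(|\tau|^{-1})$ as you argue. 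Second, in step (b) the notation $M_\varrho[0,(\vect q,\vect p)](t,t_1)$ is a mild abuse since $F^0=0$ violates the Maxwell constraints that Theorem~\ref{thm:maxwell_solutions} and Definition~\ref{def:Maxwell time evolution} assume; it is cleaner to observe directly that $\int_{t_1}^t ds\,K_{t-s}*(\text{source})$ is $\cal C^\infty$ in $\vect x$ on compacts because the $s$-integration is over a bounded interval, the integrand is jointly smooth, and for fixed $(t,\vect x)$ the relevant $\vect y$'s lie in a fixed compact set. With these phrasings tightened, the proposal is sound.
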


\begin{remark}\label{rem:mem loss}
  Condition (\ref{eqn:mem_loss}) guarantees that in the limit $t_0\to\pm\infty$ the initial value $F^0$ are forgotten by the time evolution of the Maxwell equations. The condition that $(\vect q,\vect p)$ are strictly time-like is only sufficient for the limits $t_0\to\pm\infty$ to exist but necessary to yield formulas (\ref{eqn:LW_E_integrand}) and (\ref{eqn:LW_B_integrand}); note the blowup of the denominators $(1\pm\vect n\cdot\vect v)$ for $\|\vect v\|\to 1$. 
\end{remark}
\begin{theorem}[Li\'enard-Wiechert Fields Solve the Maxwell Equations]\label{thm:LW solve M eq}
  Let $(\vect q,\vect p)\in\cal T_{\text{\clock}!}^1$, then the Li\'enard-Wiechert fields $M_{\varrho}[\vect q,\vect p](t,\pm\infty)$ are a solution to the Maxwell equations (\ref{eqn:maxwell_equations_charge_current}) including the Maxwell constraints for all $t\in\bb R$.
\end{theorem}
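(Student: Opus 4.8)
The plan is to realize the Li\'enard--Wiechert fields as a locally uniform $C^1$ (indeed $C^\infty$) limit of the ordinary, finite-$t_0$ Maxwell solutions provided by Theorem \ref{thm:maxwell_solutions}, and then to pass to the limit in the Maxwell system directly. Fix $(\vect q,\vect p)\in\cal T_{\text{\clock}!}^1$, let $(\rho,\vect j)\in\cal D$ be its induced charge--current density, let $R$ be a radius with $\supp\varrho\subseteq B_R(0)$, and write $G_t:=M_\varrho[\vect q,\vect p](t,\pm\infty)$ (the two signs are symmetric). For each $t_0$ of the appropriate sign and large $|t_0|$ I would choose the initial field $F^0=F^0_{t_0}:=(\vect E^C(\cdot-\vect q_{t_0}),\vect 0)$, the Coulomb field of $\varrho$ centered at $\vect q_{t_0}$: it lies in $\cal F^1$, is irrotational, obeys the constraints $\nabla\cdot\vect E^0=4\pi\rho_{t_0}$ and $\nabla\cdot\vect B^0=0$, and decays like $\|\vect x\|^{-2}$ together with all its derivatives, with constants independent of $t_0$, so in particular it satisfies (\ref{eqn:mem_loss}). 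Put $F^{(t_0)}_t:=M_\varrho[F^0_{t_0},(\vect q,\vect p)](t,t_0)$. By Theorem \ref{thm:maxwell_solutions} this is jointly smooth and solves (\ref{eqn:maxwell_equations_charge_current}) together with the constraints, with the $t_0$-independent source $(\rho,\vect j)$; by Theorem \ref{thm:LWfields} it converges pointwise to $G_t$.

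The second step is to cut $F^{(t_0)}_t-G_t$ down, on an arbitrary compact $Q\subset\bb R\times\bb R^3$, to a single ``memory'' term. Since $(\vect q,\vect p)$ is strictly time-like with bound $v_{max}<1$ and $\supp\varrho$ is compact, a direct light-cone estimate — the sphere $\partial B_{|t-s|}(\vect x)$ can meet $B_R(\vect q_s)$ only if $(1-v_{max})|t-s|\leq\|\vect x\|+\|\vect q_t\|+R$, using $\|\vect q_s\|\leq\|\vect q_t\|+v_{max}|t-s|$ — shows there is $S=S(Q)<\infty$ with $K_{t-s}*(\text{source})_s(\vect x)=0$ for all $(t,\vect x)\in Q$ whenever $|s|\geq S$, and likewise $K_{t-t_0}*\vect j_{t_0}(\vect x)=0$ on $Q$ whenever $|t_0|\geq S$. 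Comparing the explicit formula in Theorem \ref{thm:maxwell_solutions} with the integral representation (\ref{eqn:LW_fields}), the source-integral terms of $F^{(t_0)}$ and of $G$ then coincide on $Q$ once $|t_0|\geq S(Q)$, and the $K_{t-t_0}*\vect j_{t_0}$-term vanishes there, so that on $Q$
\begin{align*}
  F^{(t_0)}_t-G_t=\begin{pmatrix}\partial_t & \nabla\wedge\\ -\nabla\wedge & \partial_t\end{pmatrix}K_{t-t_0}*F^0_{t_0}=\begin{pmatrix}\partial_t\,K_{t-t_0}*\vect E^C(\cdot-\vect q_{t_0})\\ \vect 0\end{pmatrix}=:H^{(t_0)}_t,
\end{align*}
the magnetic part being zero because $\vect E^C$ is a gradient.

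It then remains to show $H^{(t_0)}\to0$ in $C^\infty(Q)$ as $|t_0|\to\infty$. Using the surface- and ball-average formulas for $K_t*f$ and $\partial_t K_t*f$ from Lemma \ref{lem:Greens_function_dalembert}(i) and the identity $\square(K_t*f)=0$, every derivative $\partial_t^a\partial_{\vect x}^\beta H^{(t_0)}_t(\vect x)$ is a finite combination of terms of the form $(t-t_0)^k\fint_{\partial B_{|t-t_0|}}\partial^\gamma\vect E^C$ and $(t-t_0)^k\fint_{B_{|t-t_0|}}\partial^\gamma\triangle\vect E^C$ with $k\leq2$. On $Q$ and for $|t_0|$ large the center $\vect q_{t_0}$ lies deep inside $B_{|t-t_0|}(\vect x)$, so the sphere stays at distance $\gtrsim(1-v_{max})|t_0|$ from it and the surface averages of $\vect E^C$ and its derivatives are $\bigoh(|t_0|^{-2})$; and, crucially, $\triangle\vect E^C=4\pi\nabla\rho_{t_0}$ (irrotationality plus the constraint) has compact support, so each ball average of such a piece is $\bigoh(|t_0|^{-3})$ and, even multiplied by $(t-t_0)^2$, still tends to $0$ uniformly on $Q$. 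Hence $F^{(t_0)}\to G$ in $C^1_{loc}(\bb R\times\bb R^3)$, in fact $C^\infty_{loc}$. Passing to the limit $|t_0|\to\infty$ in the pointwise identities
\[
  \partial_t\vect E^{(t_0)}_t=\nabla\wedge\vect B^{(t_0)}_t-4\pi\vect j_t,\quad \partial_t\vect B^{(t_0)}_t=-\nabla\wedge\vect E^{(t_0)}_t,\quad \nabla\cdot\vect E^{(t_0)}_t=4\pi\rho_t,\quad \nabla\cdot\vect B^{(t_0)}_t=0,
\]
whose right-hand sides do not depend on $t_0$, yields exactly the same system and constraints for $G_t=(\vect E_t,\vect B_t)$ on $Q$; since $Q$ was arbitrary this holds for all $t\in\bb R$, which is the assertion.

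I expect the one delicate point to be precisely this $C^\infty$ convergence of the memory term: for a general admissible $F^0$, $\partial_t K_{t-t_0}*\vect E^0$ carries the factor $(t-t_0)^2\fint_{B_{|t-t_0|}}\triangle\vect E^0$, and hypothesis (\ref{eqn:mem_loss}) bounds only $\vect E^0$ and its first derivatives, not $\triangle\vect E^0$ — and differentiating further keeps producing ever-higher derivatives of $\vect E^0$. The way around this is to exploit that the limit in Theorem \ref{thm:LWfields} is independent of the admissible $F^0$ and to take $F^0$ irrotational (the Coulomb field), so that $\triangle\vect E^0$ and, inductively, all the higher combinations that appear reduce to derivatives of the compactly supported $\rho_{t_0}$, which the volume averages then damp out. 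Everything else is routine bookkeeping with the finite-propagation-speed estimate and the Green's-function identities recorded above.
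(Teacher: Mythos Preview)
The paper itself does not give a proof of this theorem; it states at the beginning of Section~\ref{sec:maxwellsolutions} that ``The proofs of all the claims are published separately in \cite{deckert_electrodynamic_2010}.'' So there is no in-paper argument to compare against, and your proposal has to be judged on its own.

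Your strategy is sound and the steps are correct. Choosing the Coulomb field as $F^0_{t_0}$ is exactly the right move: it satisfies the constraints, is irrotational, and---the crucial point you identify---has $\triangle\vect E^C=4\pi\nabla\varrho(\cdot-\vect q_{t_0})$ compactly supported, so that all higher $\triangle^m\vect E^C$, $m\geq1$, are compactly supported as well. Together with $\partial_t^2K_t*f=K_t*\triangle f$ this keeps the powers of $(t-t_0)$ in the expansion of $\partial_t^a\partial_{\vect x}^\beta H^{(t_0)}$ bounded by~$2$, as you claim, and the decay estimates go through.

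One simplification you may not have noticed: the memory term $H^{(t_0)}$ is in fact \emph{identically zero} on any fixed compact $Q$ once $|t_0|$ is large enough, not merely $o(1)$. This is precisely the content of the paper's Lemma~\ref{lem:shadows of boundary fields}, which shows that $\partial_tK_{t\mp T}*\vect E^C(\cdot-\vect q)$ vanishes for $\vect x\in B_{|t\mp T|-R}(\vect q)$. Since $(\vect q,\vect p)$ is strictly time-like, $\|\vect x-\vect q_{t_0}\|\leq C_Q+v_{max}|t_0|<|t-t_0|-R$ for all $(t,\vect x)\in Q$ once $|t_0|$ is large, so $Q$ eventually sits inside the vanishing region. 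This collapses your $C^\infty$-convergence argument to a finite-propagation-speed observation and removes the need for the asymptotic bookkeeping on derivatives. Either way, the conclusion follows.
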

We immediately get a simple bound on the Li\'enard-Wiechert fields:
\begin{corollary}[Li\'enard-Wiechert Estimate]\label{cor:LW_estimate}
  Let $(\vect q,\vect p)\in\cal T_{\text{\clock}!}^1$. Furthermore, assume there exists an $\namel{amax}{a_{max}}<\infty$ such that $\sup_{t\in\bb R}\|\partial_t\vect v(\vect p_t)\|\leq\namer{amax}$. Then the Li\'enard-Wiechert fields (\ref{eqn:LW_E_integrand}) and (\ref{eqn:LW_B_integrand}) fulfill: For any multi-index $\alpha\in\bb N_0^3$ there exists a constant $\constl{LW_bound}^{(\alpha)}<\infty$ such that for all $\vect x\in\bb R^3$, $t\in\bb R$
  \begin{align*}
    \|D^\alpha\vect E^\pm_t(\vect x)\|+\|D^\alpha\vect B^\pm_t(\vect x)\|\leq \frac{\constr{LW_bound}^{(\alpha)}}{(1-\namer{vmax})^3}\left(\frac{1}{1+\|\vect x-\vect q_t\|^2}+\frac{\namer{amax}}{1+\|\vect x-\vect q_t\|}\right)
  \end{align*}
  holds.
\end{corollary}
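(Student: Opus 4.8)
The plan is to reduce the estimate to the explicit expressions (\ref{eqn:LW_E_integrand})-(\ref{eqn:LW_B_integrand}), exploiting only that the retardation denominators stay away from zero and that the advanced/retarded emission point lies at a distance comparable to that of the present position $\vect q_t$. In view of Theorem \ref{thm:LWfields} the fields meant in the corollary are the $\varrho$-averaged Li\'enard-Wiechert fields $M_{\varrho}[\vect q,\vect p](t,\pm\infty)$, i.e. $\varrho$ convolved in the space variable with the point-charge expressions (\ref{eqn:LW_E_integrand})-(\ref{eqn:LW_B_integrand}); I would first bound those point-charge fields and then push the bound through the convolution.

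For the point-charge fields $(\vect E^{LW\pm}_t,\vect B^{LW\pm}_t)(\vect w)$ one has $\|\vect n^\pm\|=1$ by construction, $\|\vect v^\pm\|=\|\vect v(\vect p_{t^\pm})\|\le v_{max}<1$ since $(\vect q,\vect p)\in\cal T^1_{\text{\clock}!}$, and $\|\vect a^\pm\|=\|\partial_t\vect v(\vect p_t)|_{t=t^\pm}\|\le a_{max}$ by hypothesis; hence $1-(\vect v^\pm)^2\le 1$, $\|\vect n^\pm\pm\vect v^\pm\|\le 2$, $\|\vect n^\pm\wedge[(\vect n^\pm\pm\vect v^\pm)\wedge\vect a^\pm]\|\le 2a_{max}$, and, crucially, $|1\pm\vect n^\pm\cdot\vect v^\pm|\ge 1-v_{max}>0$, so $(1\pm\vect n^\pm\cdot\vect v^\pm)^{-3}\le(1-v_{max})^{-3}$ — this is the source of the $(1-v_{max})^{-3}$ prefactor. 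Inserting these into (\ref{eqn:LW_E_integrand}) and using $\|\vect B^{LW\pm}_t\|\le\|\vect E^{LW\pm}_t\|$ from (\ref{eqn:LW_B_integrand}) yields $\|\vect E^{LW\pm}_t(\vect w)\|+\|\vect B^{LW\pm}_t(\vect w)\|\le\frac{C}{(1-v_{max})^3}\big(\|\vect w-\vect q^\pm\|^{-2}+a_{max}\|\vect w-\vect q^\pm\|^{-1}\big)$. Next I would trade $\|\vect w-\vect q^\pm\|$ for $\|\vect w-\vect q_t\|$: the curve $t\mapsto\vect q_t$ is $v_{max}$-Lipschitz (as $\dot{\vect q}_t=\vect v(\vect p_t)$ with $\|\vect v(\vect p_t)\|\le v_{max}$), which together with $|t^\pm-t|=\|\vect w-\vect q^\pm\|$ from (\ref{eqn:LW_abbreviations}) gives $\|\vect q^\pm-\vect q_t\|\le v_{max}\|\vect w-\vect q^\pm\|$, hence
\[
  (1-v_{max})\,\|\vect w-\vect q^\pm\|\;\le\;\|\vect w-\vect q_t\|\;\le\;(1+v_{max})\,\|\vect w-\vect q^\pm\|.
\]
Substituting $\|\vect w-\vect q^\pm\|^{-1}\le 2\|\vect w-\vect q_t\|^{-1}$ settles the case $\alpha=0$ for a single point charge, with $\|\vect w-\vect q_t\|^{-2}$ and $\|\vect w-\vect q_t\|^{-1}$ in place of the regularized factors.

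It remains to smear and take derivatives. Because the implicitly defined times $t^\pm$ depend on $(\vect x,\vect z)$ only through $\vect x-\vect z$, the map $\vect x\mapsto\int d^3z\,\varrho(\vect z)(\vect E^{LW\pm}_t,\vect B^{LW\pm}_t)(\vect x-\vect z)$ is a genuine convolution $\varrho*(\vect E^{LW\pm}_t,\vect B^{LW\pm}_t)$, so every $D^\alpha$ can be moved onto $\varrho$: $D^\alpha\vect E^\pm_t=(D^\alpha\varrho)*\vect E^{LW\pm}_t$ (differentiation under the integral is legitimate since $\vect E^{LW\pm}_t\in L^1_{\mathrm{loc}}$ by the previous bound). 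Applying that bound inside the integral and using $\supp\varrho\subseteq B_R(0)$ gives
\[
  \|D^\alpha\vect E^\pm_t(\vect x)\|+\|D^\alpha\vect B^\pm_t(\vect x)\|\le\frac{C\,\|D^\alpha\varrho\|_\infty}{(1-v_{max})^3}\int_{B_R(0)}\left(\frac{1}{\|\vect x-\vect q_t-\vect z\|^2}+\frac{a_{max}}{\|\vect x-\vect q_t-\vect z\|}\right)d^3z.
\]
For $k\in\{1,2\}$ the integral $\int_{B_R(0)}\|\vect u-\vect z\|^{-k}d^3z$ (with $\vect u=\vect x-\vect q_t$) is finite since $\|\cdot\|^{-k}$ is locally integrable in $\bb R^3$, is bounded uniformly for $\|\vect u\|\le 2R$, and is $\le c_R\|\vect u\|^{-k}$ for $\|\vect u\|\ge 2R$; splitting the domain accordingly shows it is $\le C_R(1+\|\vect u\|^k)^{-1}$. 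Absorbing the $R$- and $\varrho$-dependent factors into a single constant $C^{(\alpha)}<\infty$ gives the asserted bound.

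The bookkeeping in the first two steps is routine; the one place that needs genuine care is the last step — verifying that $\varrho$ convolved with (\ref{eqn:LW_E_integrand})-(\ref{eqn:LW_B_integrand}) really is a space-convolution (so that no derivative ever hits the implicitly defined $t^\pm$ and all $D^\alpha$'s land on the smooth, compactly supported $\varrho$), and carrying out the near-field/far-field split of $\int_{B_R}\|\vect x-\vect q_t-\vect z\|^{-k}d^3z$ that turns the single-charge Coulomb and acceleration singularities $\|\vect x-\vect q_t\|^{-2},\|\vect x-\vect q_t\|^{-1}$ into the regular factors $(1+\|\vect x-\vect q_t\|^2)^{-1},(1+\|\vect x-\vect q_t\|)^{-1}$.
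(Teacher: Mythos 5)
The paper states this corollary without proof (it labels the bound "simple" and refers the details to a separate publication), so I'll assess your reconstruction on its own merits: it is correct and is the natural argument. You correctly isolate the three ingredients — pointwise bounds on the point-charge Li\'enard--Wiechert expressions (\ref{eqn:LW_E_integrand})--(\ref{eqn:LW_B_integrand}) from the $(1\pm\vect n\cdot\vect v)^{-3}\le(1-v_{max})^{-3}$ denominator estimate, the transfer from the retarded/advanced distance $\|\vect w-\vect q^\pm\|$ to the present-time distance $\|\vect w-\vect q_t\|$ via the Lipschitz bound $\|\vect q^\pm-\vect q_t\|\le v_{max}\|\vect w-\vect q^\pm\|$ (your $\|\vect w-\vect q^\pm\|^{-1}\le 2\|\vect w-\vect q_t\|^{-1}$ indeed only uses $1+v_{max}\le 2$, so it holds for all $v_{max}<1$ and does not inflate the $(1-v_{max})^{-3}$ exponent), and the observation that the $\varrho$-smearing in (\ref{eqn:LW_fields}) is a true space-convolution so $D^\alpha$ lands entirely on $\varrho\in\cal C^\infty_c$, after which the $\supp\varrho\subseteq B_R(0)$ integrals of $\|\vect u-\vect z\|^{-k}$, $k\in\{1,2\}$, are bounded by $C_R(1+\|\vect u\|^k)^{-1}$ via the standard near/far split. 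This is precisely the line of reasoning that makes the corollary "immediate" from Theorem~\ref{thm:LWfields}, and I see no gaps.
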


\subsection{The $\text{ML-SI}_{\varrho}$ Time Evolution}\label{sec:summary of part I}

Next, we briefly summarize the results of \cite{bauer_maxwell_2010} on the $\text{ML-SI}_\varrho$ equations (\ref{eqn:maxwell equations})-(\ref{eqn:lorentz force}):
\begin{definition}[Weighted Square Integrable Functions]\label{def:weighted spaces}
  We define the class of weight functions
  \begin{align}\label{eqn:weightclass}
    \cal W:=\Big\{w\in\cal C^\infty(\bb R^3,\bb R^+\setminus\{0\}) \;\big|\;& \exists\; \namel{cw}{C_w}\in\bb R^+,\namel{pw}{P_w}\in\bb N: w(\vect x+\vect y)\leq (1+\namer{cw}\|\vect x\|)^\namer{pw} w(\vect y)\Big\}.
  \end{align}
   For any $w\in\cal W$ and open $\Omega\subseteq\bb R^3$ we define the space of weighted square integrable functions $\Omega\to\bb R^3$ by
  \[
    L^2_w(\Omega,\bb R):=\left\{\vect F:\Omega\to\bb R^3\;\text{measurable} \;\bigg|\; \intdv x w(\vect x)\|\vect F(\vect x)\|^2<\infty\right\}.
  \]
  For regularity arguments we need more conditions on the weight functions. For $k\in\bb N$ we define
  \begin{align}\label{eqn:weight function spaces}
    \cal W^k:=\Big\{w\in\cal W \;\big|\; \exists\; \namel{calpha}{C_\alpha}\in\bb R^+: |D^\alpha \sqrt w|\leq \namer{calpha}\sqrt w, |\alpha|\leq k\Big\}
  \end{align}
  and
  \[
    \cal W^\infty:=\bigcap_{k\in\bb N}\cal W^{k}.
  \]
\end{definition}
\begin{remark}
 As computed in \cite{bauer_maxwell_2010}, $\cal W\ni w(\vect x):=(1+\|\vect x\|^2)^{-1}$.
\end{remark}

The space of initial values is then given by:
\begin{definition}[Phase Space]\label{def:phasespace}
  We define
  \[
    \cal H_w:=\bigoplus_{i=1}^N \left(\bb R^3\oplus\bb R^3\oplus L^2_w(\bb R^3,\bb R^3) \oplus L^2_w(\bb R^3,\bb R^3)\right).
  \]
  Any element $\varphi\in\cal H_w$ consists of the components $\varphi=(\vect q_i,\vect p_i,\vect E_i,\vect B_i)_{1\leq i\leq N}$, i.e. positions $\vect q_i$, momenta $\vect p_i$ and electric and magnetic fields $\vect E_i$ and $\vect B_i$ for each of the $1\leq i\leq N$ charges.
\end{definition}
If not noted otherwise, any spatial derivative will be understood in the distribution sense, and the Latin indices $i,j,\ldots$ shall run over the charge labels $1,2,\ldots, N$. For $w\in\cal W$, open set $\Omega\subseteq\bb R^3$ and $k\geq 0$ we define the following Sobolev spaces
  \begin{align}
  \begin{split}\label{eqn:sobolev spaces}
    H^k_w(\Omega,\bb R^3)&:=\bigg\{\vect f\in L^2_w(\Omega,\bb R^3)\;\Big|\;D^\alpha \vect f\in L^2_w(\Omega,\bb R^3), |\alpha|\leq k\bigg\},\\
    H^{\triangle^k}_w(\Omega,\bb R^3)&:=\bigg\{\vect f\in L^2_w(\Omega,\bb R^3)\;\Big|\;\triangle^j \vect f\in L^2_w(\Omega,\bb R^3) \text{ for }0\leq j\leq k \bigg\},\\
    H^{curl}_w(\Omega,\bb R^3)&:=\bigg\{\vect f\in L^2_w(\Omega,\bb R^3)\;\Big|\;\nabla\wedge \vect f\in L^2_w(\Omega,\bb R^3)\bigg\}
  \end{split}
  \end{align}
  which are equipped with the inner products
  \begin{align*}
    \braket{\vect f,\vect g}_{H^k_w} := \sum_{|\alpha|\leq k}\braket{D^{\alpha} \vect f,D^{\alpha} \vect g}_{L^2_w(\Omega)},&&
    \braket{\vect f,\vect g}_{H^{\triangle}_w(\Omega)} := \sum_{j=0}^k\braket{\triangle^j \vect f,\triangle^j \vect g}_{L^2_w(\Omega)}
  \end{align*}
  \begin{align*}
    \braket{\vect f,\vect g}_{H^{curl}_w(\Omega)} := \braket{\vect f,\vect g}_{L^2_w(\Omega)}+\braket{\nabla\wedge \vect f,\nabla\wedge \vect g}_{L^2_w(\Omega)},
  \end{align*}
  respectively. We use the multi-index notation $\alpha=(\alpha_1,\alpha_2,\alpha_3)\in(\bb N_0)^3$, $|\alpha|:=\sum_{i=1}^3\alpha_i$, $D^\alpha=\partial_1^{\alpha_1}\partial_2^{\alpha_2}\partial_3^{\alpha_3}$ where $\partial_i$ denotes the derivative w.r.t. to the $i$-th standard unit vector in $\bb R^3$. In order to appreciate the structure of the ML equations we will rewrite them using the following operators $A$ and $J$:
\begin{definition}[Operator A]\label{def:operator_A}
  For a $\varphi=(\vect q_i,\vect p_i,\vect E_i,\vect B_i)_{1\leq i\leq N}$ we defined ${\mathtt A}$ and $A$ by the expression
  \[
      A\varphi = \Big(0,0,{\mathtt A}(\vect E_i,\vect B_i)\Big)_{1\leq i\leq N}
       :=\Big(0,0,-\nabla\wedge\vect E_i,\nabla\wedge\vect B_i)\Big)_{1\leq i\leq N}.
  \]
  on their natural domain
    \[
    D_w(A):=\bigoplus_{i=1}^N \left(\bb R^3 \oplus \bb R^3 \oplus H^{curl}_w(\bb R^3,\bb R^3) \oplus H^{curl}_w(\bb R^3,\bb R^3)\right)\subset \cal H_w.
  \]
  Furthermore, for any $n\in\bb N$ we define
  \begin{align*}
    D_w(A^n):=\big\{\varphi \in D_w(A)\;\big|\;A^k\varphi\in D_w(A)\text{ for }k=0,\ldots,n-1\big\}, && D_w(A^\infty):=\bigcap_{n=0}^\infty D_w(A^n).
  \end{align*}
\end{definition}
\begin{definition}[Operator J]\label{def:operator_J}
   Together with $\vect v(\vect p_i):=\frac{\vect p_i}{\sqrt{\vect p_i^2+m^2}}$ we define $J:\cal H_w\to D_w(A^\infty)$ by
  \[
    \varphi\mapsto J(\varphi) := \left(\vect v(\vect p_i),
      \sum_{j\neq i}^N \intdv x \varrho_{i}(\vect x-\vect q_{i})\left( \vect E_{j}(\vect x) + \vect v(\vect p_i) \wedge \vect B_{j}(x) \right),
      - 4\pi \vect v(\vect p_i) \varrho_i(\cdot-\vect q_{i}),
      0\right)_{1\leq i\leq N}
  \]
  for $\varphi=(\vect q_i,\vect p_i,\vect E_i,\vect B_i)_{1\leq i\leq N}$.
\end{definition}
Note that $J$ is well-defined because $\varrho_{i}\in\cal C^\infty_c(\bb R^3,\bb R)$, $1\leq i\leq N$. With these definitions, the Lorentz force law (\ref{eqn:lorentz force}), the Maxwell equations (\ref{eqn:maxwell equations}) without the Maxwell constraints take the form
\begin{align}\label{eqn:dynamic_maxwell}
   \partial_{t}\varphi_t = A\varphi_t + J(\varphi_t).
\end{align}
The two main theorems are:
\begin{theorem}[Global Existence and Uniqueness]\label{thm:globalexistenceanduniqueness}
   For $w\in\cal W^1$, $n\in\bb N$ and $\varphi^0\in D_w(A^n)$ the following holds:
  \begin{enumerate}[(i)]
    \item \emph{(global existence)} There exists an $n$-times continuously differentiable mapping
          \begin{align*}
            \varphi_{(\cdot)}:\bb R \to \cal H_w, &&
            t\mapsto\varphi_{t}=(\vect q_{i,t},\vect p_{i,t},\vect E_{i,t},\vect B_{i,t})_{1\leq i\leq N}
          \end{align*}
           which solves (\ref{eqn:dynamic_maxwell})
          for initial value $\varphi_t|_{t=0}=\varphi^0$. Furthermore, it holds  $\frac{d^j}{dt^j}\varphi_t\in D_w(A^{n-j})$ for all $t\in\bb R$ and $0\leq j\leq n$,
    \item \emph{(uniqueness and growth)}  Any once continuously differentiable function $\widetilde\varphi:\Lambda\to D_w(A)$ for some open interval $\Lambda\subseteq\bb R$ which fulfills $\widetilde\varphi_{t^*}=\varphi_{t^*}$ for an $t^*\in\Lambda$, and which also solves the equation (\ref{eqn:dynamic_maxwell}) on $\Lambda$, has the property that $\varphi_t=\widetilde \varphi_t$ holds for all $t\in \Lambda$. In particular, given $\varrho_i$, $1\leq i\leq N$ there exists $\constl{apriori lipschitz}\in\bounds$ such that for $T>0$ with $[-T,T]\subset\Lambda$ it holds
        \begin{align}\label{eqn:apriori lipschitz}
          \sup_{t\in[-T,T]}\|\varphi_t-\widetilde\varphi_t\|_{\cal H_w}\leq \constr{apriori lipschitz}(T,\|\varphi_{t_0}\|_{\cal H_w},\|\widetilde\varphi_{t_0}\|_{\cal H_w})\|\varphi_{t_0}-\widetilde\varphi_{t_0}\|_{\cal H_w}.
        \end{align}
        Furthermore, there is a $\constl{apriori ml rho}\in\bounds$ such that for all $\varrho_i$, $1\leq i\leq N$,
        \begin{align}\label{eqn:apriori lipschitz no diff}
          \sup_{t\in[-T,T]}\|\varphi_t\|_{\cal H_w} \leq \constr{apriori ml rho}\left(T,\|w^{-1/2}\varrho_i\|_{L^2},\|\varrho_i\|_{L^2_w}; 1\leq i\leq N\right)\; \|\varphi^0\|_{\cal H_w}.
        \end{align}
    \item \emph{(constraints)} If the solution $t\mapsto\varphi_{t}=(\vect q_{i,t},\vect p_{i,t},\vect E_{i,t},\vect B_{i,t})_{1\leq i\leq N}$ obeys the Maxwell constraints
        \begin{align}\label{eqn:ml constraints}
          \nabla\cdot \vect E_{i,t}=4\pi\varrho_{i}(\cdot-\vect q_{i,t}), && \nabla\cdot\vect B_{i,t}=0
        \end{align}
        for $1\leq i\leq N$ and one time instant $t\in\bb R$, then they are obeyed for all times $t\in\bb R$.
  \end{enumerate}
\end{theorem}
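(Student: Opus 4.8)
The plan is to read (\ref{eqn:dynamic_maxwell}) as a semilinear evolution equation $\partial_t\varphi_t=A\varphi_t+J(\varphi_t)$ on the Hilbert space $\cal H_w$ and to run the standard program in six steps: (a) show that $A$ generates a strongly continuous one-parameter group $(\cal U_t)_{t\in\bb R}$ on $\cal H_w$ with a $\bounds(|t|)$ norm bound, and that this group restricts to a $C_0$-group on every $D_w(A^n)$; (b) show that $J:\cal H_w\to\cal H_w$ is locally Lipschitz with at most linear growth in the field slots, and moreover $\Ran J\subseteq D_w(A^\infty)$; (c) solve the Duhamel equation $\varphi_t=\cal U_t\varphi^0+\int_0^t\cal U_{t-s}J(\varphi_s)\,ds$ by a contraction on $\cal C([-\tau,\tau],\cal H_w)$ for small $\tau>0$, and extend the local solution to all of $\bb R$ by an a priori bound that rules out blow-up; (d) bootstrap regularity from $D_w(A^n)$-valued data; (e) establish the growth estimates (\ref{eqn:apriori lipschitz})--(\ref{eqn:apriori lipschitz no diff}) by Grönwall; (f) prove propagation of the constraints (\ref{eqn:ml constraints}) by showing the constraint violation is conserved in time.

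For (a) I would not invoke Stone's theorem, since the weight $w$ destroys skew-adjointness; instead I would realize $\cal U_t$ explicitly via Kirchhoff's formula (Corollary \ref{cor:Kirchoff}), i.e. the homogeneous part of the Maxwell solution of Theorem \ref{thm:maxwell_solutions}: $\cal U_t$ is the identity on the $\bb R^3$-slots and $F^0\mapsto\begin{pmatrix}\partial_t & \nabla\wedge\\ -\nabla\wedge & \partial_t\end{pmatrix}K_t*F^0$ on the field slots. Finite propagation speed of the wave propagator $K_t$ (support in $\partial B_{|t|}(0)$, with $\partial_tK_t$ supported in $\overline{B_{|t|}(0)}$), combined with the defining inequality $w(\vect x+\vect y)\le(1+C_w\|\vect x\|)^{P_w}w(\vect y)$ of the class $\cal W$, yields $\|\cal U_t\|_{\cal H_w\to\cal H_w}\le\bounds(|t|)$; the group law and strong continuity then follow from uniqueness and continuity of wave solutions. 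Applying the same finite-speed argument componentwise to $A^k$ shows $\cal U_t$ restricts to a $C_0$-group on each $D_w(A^n)$.

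Step (b) is a direct componentwise estimate of $J$: $\vect v(\vect p_i)$ is bounded and globally Lipschitz in $\vect p_i$; the Lorentz-force slot is controlled by writing $\intdv x\varrho_i(\vect x-\vect q_i)\vect E_j(\vect x)=\braket{w^{-1/2}\varrho_i(\cdot-\vect q_i),w^{1/2}\vect E_j}$ and bounding $\|w^{-1/2}\varrho_i(\cdot-\vect q_i)\|_{L^2}\le\bounds(\|\vect q_i\|)\,\|w^{-1/2}\varrho_i\|_{L^2}$; and the current slot $-4\pi\vect v(\vect p_i)\varrho_i(\cdot-\vect q_i)$ lies in $D_w(A^\infty)$ since $\varrho_i\in\cal C_c^\infty$, with $\|\varrho_i(\cdot-\vect q_i)\|_{L^2_w}\le\bounds(\|\vect q_i\|)\,\|\varrho_i\|_{L^2_w}$. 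Since the field slots of $J$ are smooth compactly supported and the particle slots lie in $\bb R^3$, we indeed get $\Ran J\subseteq D_w(A^\infty)$, which makes (d) cheap: once a mild solution $t\mapsto\varphi_t\in\cal C(\bb R,\cal H_w)$ is produced in (c), the standard fact that $\int_0^t\cal U_{t-s}\psi_s\,ds\in D_w(A)$ whenever $s\mapsto\psi_s$ is $\cal C^1$ into $D_w(A)$ (here $\psi_s=J(\varphi_s)$, which inherits enough regularity from $\varphi$) upgrades the mild solution to a classical one in $D_w(A)$; differentiating the equation and inducting on $n$ gives $\varphi_t\in D_w(A^n)$ and $\frac{d^j}{dt^j}\varphi_t\in D_w(A^{n-j})$ for $0\le j\le n$. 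For the global extension in (c) and for (\ref{eqn:apriori lipschitz no diff}) in (e): Duhamel together with $\|J(\varphi)\|_{\cal H_w}\le\bounds(\|\varphi\|_{\cal H_w})$ (linear in the field norms) and Grönwall gives $\sup_{[-T,T]}\|\varphi_t\|_{\cal H_w}\le\bounds(T,\|\varphi^0\|_{\cal H_w})$, forbidding finite-time blow-up so that the maximal solution is global; tracking the $\|\varrho_i\|_{L^2_w}$ and $\|w^{-1/2}\varrho_i\|_{L^2}$ dependence (and the a priori bounded positions $\|\vect q_{i,t}\|\le\|\vect q_i^0\|+T$) through this estimate gives (\ref{eqn:apriori lipschitz no diff}).

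Uniqueness on an arbitrary open interval $\Lambda$ follows from local uniqueness of the contraction plus a connectedness argument on the set $\{t\in\Lambda:\varphi_t=\widetilde\varphi_t\}$ (which is nonempty, closed, and open); subtracting the Duhamel formulas and applying Grönwall with the local Lipschitz constant of $J$ over the a priori bounded tube around the two solutions yields (\ref{eqn:apriori lipschitz}). For (f), differentiating $\nabla\cdot\vect E_{i,t}-4\pi\varrho_i(\cdot-\vect q_{i,t})$ and $\nabla\cdot\vect B_{i,t}$ in $t$ in the distributional sense and using (\ref{eqn:maxwell equations}), $\nabla\cdot(\nabla\wedge\,\cdot\,)=0$, and $\partial_t\varrho_i(\cdot-\vect q_{i,t})=-\vect v(\vect p_{i,t})\cdot\nabla\varrho_i(\cdot-\vect q_{i,t})=-\nabla\cdot[\vect v(\vect p_{i,t})\varrho_i(\cdot-\vect q_{i,t})]$ shows both quantities have vanishing time derivative, hence are constant in $t$; if they vanish at one instant they vanish for all $t$. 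The main obstacle I expect is step (a): genuinely proving that $A$ generates a $C_0$-group on the weighted space $\cal H_w$ and on every $D_w(A^n)$ with a $\bounds(|t|)$ bound, since the weight is not invariant under the Maxwell flow and the resolution rests on carefully marrying finite propagation speed of $K_t$ with the polynomial shift bound built into the class $\cal W$. Everything afterwards — the contraction, the Grönwall estimates, and the regularity bootstrap — is then routine semilinear-semigroup theory, with the only remaining delicate point being the bookkeeping in (e) of the dependence of the constants on $\varrho_i$ through $\|\varrho_i\|_{L^2_w}$ and $\|w^{-1/2}\varrho_i\|_{L^2}$.
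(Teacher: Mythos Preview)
The paper does not prove this theorem here; it is quoted from the companion work \cite{bauer_maxwell_2010}, and only the structural ingredients are visible in the present text: that $A$ on $D_w(A)$ is closed and generates a $\gamma$-contractive group $(W_t)_{t\in\bb R}$ with bound $\|W_t\varphi\|_{\cal H_w}\le e^{\gamma|t|}\|\varphi\|_{\cal H_w}$ (\cite[Lemma~2.19]{bauer_maxwell_2010}, cf.\ Definition~\ref{def:Wt} and the remark following it), that $A^kJ:\cal H_w\to\cal H_w$ is locally Lipschitz for every $k$ (\cite[Lemma~2.22]{bauer_maxwell_2010}), and that the solution is given by the Duhamel formula of Definition~\ref{def:ML time evolution}. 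Your six-step programme is exactly this standard semilinear-semigroup scheme, so the overall strategy matches what the paper cites.

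The one visible difference is in step~(a). The phrasing in the paper (``closed operator that generates a $\gamma$-contractive group'', and the remark after Definition~\ref{def:AWJ} that $\mA$ ``inherits the resolvent properties from $A$'') indicates that the companion paper establishes generation via a Hille--Yosida resolvent argument rather than by writing the group down explicitly; the Kirchhoff representation you propose is derived \emph{afterwards}, in Corollary~\ref{cor:connection m wt}, as a consequence of uniqueness. Your explicit route is legitimate, but the formula you gave for $\cal U_t$ on the field slots is only the full action of $W_t$ on divergence-free data: for general $F^0\in D_w(\mA)$ one has $\partial_t^2\vect E_t=\triangle\vect E_t-\nabla(\nabla\cdot\vect E_0)$, so the correct expression carries the extra term $-\int_0^t K_{t-s}*\nabla\nabla\cdot F^0\,ds$ (compare Corollary~\ref{cor:connection m wt}). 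This extra term is also supported in $\overline{B_{|t|}(0)}$, so your finite-propagation-speed bound survives, but the group law fails without it; if you insist on the explicit construction you must include it. The Hille--Yosida route used in \cite{bauer_maxwell_2010} sidesteps this bookkeeping. Otherwise your steps (b)--(f) line up with what the paper uses downstream (in particular the Lipschitz and growth properties of $J$ and $A^kJ$ are invoked repeatedly in Lemmas~\ref{lem:estimates on hwn} and \ref{lem:estimates for ST}).
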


\begin{theorem}[Regularity]\label{thm:regularity}
  Assume the same conditions as in Theorem \ref{thm:globalexistenceanduniqueness} hold and let $t\mapsto\varphi_t=(\vect q_{i,t},\vect p_{i,t},\vect E_{i,t},\vect B_{i,t})_{1\leq i\leq N}$ be the solution to (\ref{eqn:dynamic_maxwell}) for initial value $\varphi^0\in D_w(A^n)$. In addition, let $w\in\cal W^2$ and $n=2m$ for $m\in\bb N$. Then for all $1\leq i\leq N$:
  \begin{enumerate}[(i)]
    \item It holds for any $t\in\bb R$ that $\vect E_{i,t},\vect B_{i,t}\in\cal H_w^{\triangle^m}$.
    \item The electromagnetic fields regarded as maps $\vect E_i:(t,\vect x)\mapsto \vect E_{i,t}(\vect x)$ and $\vect B_i:(t,\vect x)\mapsto \vect B_{i,t}(\vect x)$ are in $L^2_{loc}(\bb R^4,\bb R^3)$ and both have a representative in $\cal C^{n-2}(\bb R^4,\bb R^3)$ within their equivalence class.
    \item For $w\in \cal W^k$ for $k\geq 2$ and every $t\in\bb R$ we have also $\vect E_{i,t},\vect B_{i,t}\in H^n_w$ and $C<\infty$ such that:
          \begin{align}\label{eqn:fieldbound}
            \sup_{\vect x\in\bb R^3}\sum_{|\alpha|\leq k}\|D^\alpha\vect E_{i,t}(\vect x)\|\leq C\|\vect E_{i,t}\|_{H^k_w}, && \sup_{\vect x\in\bb R^3}\sum_{|\alpha|\leq k}\|D^\alpha\vect B_{i,t}(\vect x)\|\leq C\|\vect B_{i,t}\|_{H^k_w}.
          \end{align}
  \end{enumerate}
\end{theorem}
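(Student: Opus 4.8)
The plan is to lift the abstract statement of Theorem~\ref{thm:globalexistenceanduniqueness}, namely $\varphi_t\in D_w(A^n)$ together with $\cal C^n$-dependence on $t$, to classical regularity of the field components. The only genuinely new step is an elementary observation: on the field block, powers of the curl reduce---via the Maxwell constraints---to powers of the Laplacian up to a $\cal C_c^\infty$ remainder; everything else is then a consequence of the weighted elliptic regularity and weighted Sobolev embedding worked out in \cite{bauer_maxwell_2010,deckert_electrodynamic_2010}. First I would record the structural facts. By Theorem~\ref{thm:globalexistenceanduniqueness} the map $t\mapsto\varphi_t=(\vect q_{i,t},\vect p_{i,t},\vect E_{i,t},\vect B_{i,t})_{1\le i\le N}$ is $\cal C^n=\cal C^{2m}$ with $\partial_t^l\varphi_t\in D_w(A^{n-l})$; since $A$ acts on the field block through the curl, this unwinds to $(\nabla\wedge)^j\partial_t^l\vect E_{i,t},\,(\nabla\wedge)^j\partial_t^l\vect B_{i,t}\in L^2_w$ for all $j+l\le n$, with $t\mapsto\partial_t^l\vect E_{i,t}$ continuous into the corresponding $H^{curl}_w$-type space. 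Moreover the solution obeys the Maxwell constraints \eqref{eqn:ml constraints}---which by Theorem~\ref{thm:globalexistenceanduniqueness}(iii) hold for all $t$ once they hold at $t=0$, the case relevant for $\text{WF}_\varrho$---so $\nabla\cdot\vect E_{i,t}=4\pi\varrho_i(\cdot-\vect q_{i,t})\in\cal C_c^\infty$ and $\nabla\cdot\vect B_{i,t}=0$.

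For (i), I would iterate the identity $(\nabla\wedge)^2\vect F=-\triangle\vect F+\nabla(\nabla\cdot\vect F)$; since the curl of a gradient vanishes this telescopes to $(\nabla\wedge)^{2j}\vect F=(-\triangle)^j\vect F+(-1)^{j-1}\nabla\triangle^{j-1}(\nabla\cdot\vect F)$. Applied to $\vect F=\vect E_{i,t}$ and $\vect F=\vect B_{i,t}$, the last term becomes $\mp4\pi\nabla\triangle^{j-1}\varrho_i(\cdot-\vect q_{i,t})\in\cal C_c^\infty\subset L^2_w$ and $0$, respectively. Since $(\nabla\wedge)^{2j}\vect E_{i,t},(\nabla\wedge)^{2j}\vect B_{i,t}\in L^2_w$ for $2j\le n$, it follows that $\triangle^j\vect E_{i,t},\triangle^j\vect B_{i,t}\in L^2_w$ for $0\le j\le m$, i.e. $\vect E_{i,t},\vect B_{i,t}\in H^{\triangle^m}_w$, which is claim (i).

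For (iii) and the $\cal C^{n-2}$ part of (ii), I would feed $\triangle^j\vect E_{i,t}\in L^2_w$ ($0\le j\le m$) into the weighted elliptic regularity estimate of \cite{bauer_maxwell_2010} (valid for $w\in\cal W^2$, applied one Laplacian at a time) to obtain $\vect E_{i,t}\in H^{2m}_w=H^n_w$, and similarly $\vect B_{i,t}\in H^n_w$; the pointwise bound \eqref{eqn:fieldbound} is the associated weighted Sobolev inequality from \cite{bauer_maxwell_2010,deckert_electrodynamic_2010}. Applying the same reasoning to $\partial_t^l\varphi_t\in D_w(A^{n-l})$ gives $\partial_t^l\vect E_{i,t}\in H^{n-l}_w$ with continuous $t$-dependence; on any bounded ball $\|\cdot\|_{L^2_w}\simeq\|\cdot\|_{L^2}$ and $H^s(\bb R^3)\hookrightarrow\cal C^{s-2}$, so every mixed derivative $D^\alpha_{\vect x}\partial_t^l\vect E_i$ with $|\alpha|+l\le n-2$ is represented by a jointly continuous function; the standard identification of weak and classical derivatives then yields a $\cal C^{n-2}(\bb R^4,\bb R^3)$ representative of $\vect E_i$, and likewise of $\vect B_i$. (Alternatively one may insert $\vect E_i^0=\vect E_{i,0}\in H^n_w$ and the $\cal C^n$ trajectory $(\vect q_i,\vect p_i)$ into the explicit formula of Theorem~\ref{thm:maxwell_solutions} and read the joint regularity off Lemma~\ref{lem:Greens_function_dalembert}.) The remaining statement $\vect E_i,\vect B_i\in L^2_{loc}(\bb R^4,\bb R^3)$ is immediate: $\|\vect E_{i,t}\|_{L^2(B_R)}\le(\inf_{B_R}w)^{-1/2}\|\vect E_{i,t}\|_{L^2_w}$ is locally bounded in $t$ by continuity, so Tonelli gives finiteness of $\int_{[-T,T]\times B_R}\|\vect E_i\|^2$.

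The main obstacle is the passage from weighted-$L^2$ control of the iterated Laplacians to full weighted-Sobolev control of the fields, uniformly in $t$, together with the embedding into continuous functions: that is exactly where $w\in\cal W^2$ and the slow-variation bounds $|D^\alpha\sqrt w|\le C_\alpha\sqrt w$ enter, and it is the one ingredient I would import from \cite{bauer_maxwell_2010,deckert_electrodynamic_2010} rather than reprove. A secondary, purely technical point in (ii) is the limit-exchange argument upgrading ``strongly $H^{n-l}_w$-valued and continuous in $t$'' to ``jointly $\cal C^{n-2}$ on $\bb R^4$''.
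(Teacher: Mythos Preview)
This theorem is not proved in the present paper: it is quoted from the companion work \cite{bauer_maxwell_2010} as part of the summary in Section~\ref{sec:summary of part I}, so there is no ``paper's own proof'' here to compare against. Your outline is in the spirit of how such a result is established (curl-to-Laplacian via the vector identity, weighted elliptic regularity, weighted Sobolev embedding), and for the applications in this paper that is exactly what one needs.

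There is, however, one genuine gap. Your reduction $(\nabla\wedge)^{2j}\vect F=(-\triangle)^j\vect F+(-1)^{j-1}\nabla\triangle^{j-1}(\nabla\cdot\vect F)$ only buys you $\triangle^j\vect E_{i,t}\in L^2_w$ if you can control $\nabla\cdot\vect E_{i,t}$ and $\nabla\cdot\vect B_{i,t}$. You invoke the Maxwell constraints \eqref{eqn:ml constraints} for this, but those constraints are \emph{not} among the hypotheses of Theorem~\ref{thm:regularity}: the statement asks only for $\varphi^0\in D_w(A^n)$, and membership in $D_w(A^n)$ controls iterated curls, not divergences. Theorem~\ref{thm:globalexistenceanduniqueness}(iii) is conditional (``if the constraints hold at one time\ldots''), so you are tacitly adding an assumption. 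Your parenthetical ``the case relevant for $\text{WF}_\varrho$'' is honest about this, but it means you have proved a weaker statement than the one quoted. To match the theorem as stated you would need an argument that handles general $D_w(A^n)$ data---e.g.\ by showing separately that $\nabla\cdot\vect E_{i,t}$ and $\nabla\cdot\vect B_{i,t}$ inherit enough regularity from the evolution equation and the initial data, or by noting that the divergence is propagated by $\partial_t(\nabla\cdot\vect E_{i,t})=-4\pi\nabla\cdot\vect j_{i,t}$, $\partial_t(\nabla\cdot\vect B_{i,t})=0$ and controlling the initial divergences directly. Since the full argument lives in \cite{bauer_maxwell_2010}, that is where one would have to look to see how this is actually handled.
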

As shown in \cite[Lemma 2.19]{bauer_maxwell_2010}\todo{check reference}, $A$ on $D_w(A)$ is a closed operator that generates a $\gamma$-contractive group $(W_t)_{t\in\bb R}$:
\begin{definition}[Free Maxwell Time Evolution]\label{def:Wt}
  We denote by $(W_t)_{t\in\bb R}$ the $\gamma$-contractive group on $\cal H_w$ generated by $A$ on $D_w(A)$.
\end{definition}
\begin{remark}
  The $\gamma$-contractive group $(W_t)_{t\in\bb R}$ comes with a standard bound $\|W_t\varphi\|_{\cal H_w}\leq e^{\gamma|t|}\|\varphi\|_{\cal H_w}$ for all $\varphi\in\cal H_w$ for some $\gamma\geq 0$.
\end{remark}

The above existence and uniqueness result implicates:
\begin{definition}[ML Time Evolution]\label{def:ML time evolution}
  We define the non-linear operator
  \begin{align*}
    M_L:\bb R^2\times D_w(A) \to D_w(A), &&
    (t,t_0,\varphi^0) \mapsto M_L(t,t_0)[\varphi^0]=\varphi_t=W_{t-t_0}\varphi^0 + \int_{t_0}^t W_{t-s} J(\varphi_s)
  \end{align*}
  which encodes the ML time evolution from time $t_0$ to time $t$.
\end{definition}

Using the presented results in Section \ref{sec:maxwellsolutions} on the Maxwell equations we can give explicit expressions of the free Maxwell time evolution group $(W_{t})_{t\in\bb R}$ and the $\text{ML-SI}_\varrho$ time evolution for initial fields fulfilling both the regularity requirements of $D_{w}(A)$ and of $\cal F^{1}$. The following short-hand notation will be convenient:

\begin{notation}[Projectors $\pP,\pQ,\pF$]
  For any $\varphi=(\vect q_i,\vect p_i,\vect E_i,\vect B_i)_{1\leq i\leq N}\in\cal H_w$ we define
  \begin{align*}
    \pQ\varphi=(\vect q_i,0,0,0)_{1\leq i\leq N}, && \pP\varphi=(0,\vect p_i,0,0)_{1\leq i\leq N}, && \pF\varphi=(0,0,\vect E_i,\vect B_i)_{1\leq i\leq N}.
  \end{align*}
  where we sometime neglect the zeros and write for example
  \begin{align*}
    (\vect q_i,\vect p_i)_{1\leq i\leq N}=\pQP\varphi && \text{or} && (\vect q_i,\vect p_i,0,0)_{1\leq i\leq N}=\pQP(\vect q_i,\vect p_i)_{1\leq i\leq N}.
  \end{align*}
\end{notation}

\begin{definition}[Projection of $A,W_t,J$ to Field Space $\cal F_w$]\label{def:AWJ}
  For all $t\in\bb R$ and $\varphi\in\cal H_w$ we define
  \begin{align*}
    \cal F_{w}:=F\cal H_{w}, && \mA:=\pF A\pF, && \mW_t:=\pF W_t\pF, && \mJ:=\pF J(\varphi).
  \end{align*}
  The natural domain of $\mA$ is given by $D_w(\mA):=\pF D_w(A)\subset\cal F_w$. We shall also need $D_w(\mA^n):=\pF D_w(A^n)\subset\cal F_w$ for every $n\in\bb N$ and $D_{w}(\mA^{\infty}):=\pF D_{w}(A^{\infty})$. 
\end{definition}
Note the distinction between roman and sans serif letters, e.g. $A$ and $\mathsf{A}$. Clearly, $\cal F_{w}$ is a Hilbert space, the operator $\mA$ on $D_w(\mA)$ is again closed and inherits the resolvent properties from $A$ on $D_w(A)$. This implies $\pQP W_t=\id_{\cal P}$ and $\pF W_t=\mW_t$ so that $(\mW_t)_{t\in\bb R}$ is also a $\gamma$-contractive group generated by $\mA$ on $D_w(\mA)$. Finally, note also that by the definition of $J$ we have $\mJ(\varphi)=\mJ(\pQP \varphi)$ for all $\varphi\in\cal H_w$, i.e. $\mJ$ does not depend on the field components $\pF \varphi$.

We extend the space of fields $\cal F^1$, cf. Definition \ref{def:initial fields F}, to comprise $N$ electric and magnetic fields:
\begin{definition}[Space of $N$ Smooth Fields]
  $\cal F^{N}:=\bigoplus^N_{i=1}\cal F^1=\bigoplus^N_{i=1}\cal C^\infty(\bb R^3,\bb R^3)\oplus\cal C^\infty(\bb R^3,\bb R^3)$.
\end{definition}

The following corollary gives an explicit expression for the action of the group $(\mW_t)_{t\in\bb R}$ using the results from in Section \ref{sec:maxwellsolutions} about the free Maxwell equation.
\begin{corollary}[Kirchoff's Formulas for $(W_t)_{t\in\bb R}$]\label{cor:connection m wt}
  Let $w\in\cal W^1$, $F\in D_w(\mA^n)\cap\cal F^{N}$ for some $n\in\bb N$, and
  \begin{align*}
    (\vect E_{i,t},\vect B_{i,t})_{1\leq i\leq N}:=\mW_{t}F,\qquad t\in\bb R.
  \end{align*}
  Then
  \begin{align*}
    \begin{pmatrix}\widetilde{\vect E}_{i,t}\\\widetilde{\vect B}_{i,t}\end{pmatrix}=\begin{pmatrix}
      \partial_t & \nabla\wedge\\
      -\nabla\wedge & \partial_t
    \end{pmatrix}
    K_{t}*\begin{pmatrix}\vect E_{i,0}\\\vect B_{i,0}\end{pmatrix}-\int_{0}^tds\;K_{t-s}*\begin{pmatrix}
      \nabla\nabla\cdot\vect E_{i,0}\\
      \nabla\nabla\cdot\vect B_{i,0}
    \end{pmatrix}
  \end{align*}
  fulfill $\vect E_{i,t}=\widetilde{\vect E}_{i,t}$ and $\vect B_{i,t}=\widetilde{\vect B}_{i,t}$ for all $t\in\bb R$ and $1\leq i\leq N$ in the $L^2_w$ sense. Furthermore, for all $t\in\bb R$ it holds also that $(\vect E_{i,t},\vect B_{i,t})_{1\leq i\leq N}\in D_w(\mA^n)\cap\cal F^{N}$.
\end{corollary}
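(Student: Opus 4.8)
Fix $1\le i\le N$ and abbreviate $\vect E^0:=\vect E_{i,0}$, $\vect B^0:=\vect B_{i,0}$; by hypothesis these lie in $\cal C^\infty(\bb R^3,\bb R^3)\cap H^{curl}_w(\bb R^3,\bb R^3)$, and since $F\in D_w(\mA^n)$ their iterated curls up to order $n$ are in $L^2_w$. Write $\widetilde F_t:=(\widetilde{\vect E}_{i,t},\widetilde{\vect B}_{i,t})_{1\le i\le N}$ for the right-hand side of the asserted identity. Since $(\mW_t)_{t\in\bb R}$ is the $\gamma$-contractive group generated by the closed operator $\mA$ on $D_w(\mA)$, the standard theory of $C_0$-groups tells us that $t\mapsto\mW_tF$ is the \emph{unique} $\cal C^1$ map $\bb R\to\cal F_w$ with values in $D_w(\mA)$ solving $\partial_t u_t=\mA u_t$ with $u_0=F$ (uniqueness: differentiate $s\mapsto\mW_{t-s}u_s$). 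Hence it suffices to verify that $t\mapsto\widetilde F_t$ has exactly these three properties; the identity $\widetilde F_t=\mW_tF$ then holds in $\cal F_w$, i.e.\ componentwise in $L^2_w$, for every $t$. The membership $\mW_tF\in\cal F^{N}$ will be read off from the explicit formula, while $\mW_tF\in D_w(\mA^n)$ is automatic from $\mA^k\mW_tF=\mW_t\mA^kF\in D_w(\mA)$ for $k\le n$.

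\textbf{The pointwise differential identity.}
Regarding $\widetilde F_t$ first as a family of smooth functions, the initial condition $\widetilde F_0=F$ follows from Lemma~\ref{lem:Greens_function_dalembert}, which gives $K_0*f=0$ and $\lim_{t\to0}\partial_tK_t*f=f$ for $f\in\cal C^\infty$; thus $\widetilde{\vect E}_{i,0}=\partial_tK_t*\vect E^0|_{t=0}=\vect E^0$ and likewise $\widetilde{\vect B}_{i,0}=\vect B^0$. For the equation one computes directly using the identities of Lemma~\ref{lem:Greens_function_dalembert} ($\partial_t^2K_t*f=\triangle K_t*f=K_t*\triangle f$, $\square K_t*f=0$, $K_0*f=0$), the vector-calculus identities $\nabla\wedge(\nabla\wedge\vect g)=\nabla(\nabla\cdot\vect g)-\triangle\vect g$ and $\nabla\wedge\nabla h=0$, and the elementary relation $\int_0^t\partial_tK_{t-s}*\vect g\,ds=K_t*\vect g$ (substitution $u=t-s$, fundamental theorem of calculus, $K_0=0$). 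A short calculation then yields $\partial_t\widetilde{\vect E}_{i,t}=\nabla\wedge\widetilde{\vect B}_{i,t}$ and $\partial_t\widetilde{\vect B}_{i,t}=-\nabla\wedge\widetilde{\vect E}_{i,t}$, i.e.\ $\partial_t\widetilde F_t=\mA\widetilde F_t$ in the pointwise sense; the divergence-carrying Duhamel terms are precisely what cancels the contributions $K_t*\nabla(\nabla\cdot\vect E^0)$ and $K_t*\nabla(\nabla\cdot\vect B^0)$ coming from the double curls, while $\nabla\wedge$ annihilates them. In particular this shows joint smoothness of $(t,\vect x)\mapsto\widetilde{\vect E}_{i,t}(\vect x),\widetilde{\vect B}_{i,t}(\vect x)$, so $\widetilde F_t\in\cal F^{N}$ for every $t$. (If one additionally imposes the Maxwell constraints the Duhamel terms vanish and one recovers Theorem~\ref{thm:maxwell_solutions} with $\varrho_i\equiv0$.)

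\textbf{The weighted estimates and conclusion.}
It remains to upgrade these pointwise statements: to show $\widetilde F_t\in D_w(\mA)$ for each $t$ and that $t\mapsto\widetilde F_t$ is $\cal C^1$ into $\cal F_w$. This is where $w\in\cal W^1$ enters. The building blocks of the formula are $K_t*$, $\partial_tK_t*$ and $\nabla\wedge K_t*=K_t*\nabla\wedge$, whose Fourier symbols $|\xi|^{-1}\sin(t|\xi|)$, $\cos(t|\xi|)$ and $|\xi|^{-1}\sin(t|\xi|)\,i\xi\wedge$ are bounded (by $|t|$, $1$, $1$), and whose convolution kernels are supported in $\overline{B_{|t|}(0)}$; combined with the weight compatibility $w(\vect x+\vect y)\le(1+C_w\|\vect x\|)^{P_w}w(\vect y)$ and, for the curls, with $|D^\alpha\sqrt w|\le C_\alpha\sqrt w$ for $|\alpha|\le1$, these operators map $L^2_w\to L^2_w$ with operator norms bounded by some $C(|t|)\in\bounds$. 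Hence every term of $\widetilde F_t$, as well as the $\vect x$-derivatives entering through $\nabla\wedge$ and the $t$-derivatives, lies in $L^2_w$ with norms locally bounded in $t$; dominated convergence (with the local weighted bound as majorant) promotes the pointwise $t$-derivative to an $\cal F_w$-norm derivative, and $\nabla\wedge\widetilde{\vect E}_{i,t}=-\partial_t\widetilde{\vect B}_{i,t}\in L^2_w$, $\nabla\wedge\widetilde{\vect B}_{i,t}=\partial_t\widetilde{\vect E}_{i,t}\in L^2_w$ give $\widetilde F_t\in D_w(\mA)$. (For $F$ only in $D_w(\mA^n)$ with small $n$, where some of these bounds are delicate, one runs the argument first for $F$ in the dense subset of $\cal C^\infty_c$ fields, which lie in $D_w(\mA^\infty)\cap\cal F^{N}$, and then passes to the limit using boundedness of $\mW_t$ together with continuity of the formula in the relevant norms.) With all three properties established, uniqueness of the group gives $\widetilde F_t=\mW_tF$ for all $t$, and then $\mW_tF\in D_w(\mA^n)\cap\cal F^{N}$ as noted above. \textbf{The main obstacle} is this last, technical step --- verifying that the explicit wave-propagator formula genuinely defines a $\cal C^1$ curve into $D_w(\mA)$; the remaining parts are elementary calculus or direct appeals to the quoted results. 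These weighted estimates are, however, of the same nature as those underlying Theorem~\ref{thm:maxwell_solutions} and are available from the cited work.
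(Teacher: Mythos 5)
Your proof is correct and follows essentially the same approach that the paper's one-line proof alludes to: use the Green's-function identities of Lemma~\ref{lem:Greens_function_dalembert} to verify by direct computation that the right-hand side $\widetilde F_t$ satisfies the abstract Cauchy problem $\partial_t u=\mA u$, $u_0=F$, and then appeal to the uniqueness of the $\gamma$-contractive group from Definition~\ref{def:Wt} to identify it with $\mW_tF$. You have merely spelled out the computation (cancellation of the $\nabla\nabla\cdot$ Duhamel terms against the double curls) and the weighted $L^2_w$ boundedness of the propagator operators that the authors leave implicit.
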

\begin{proof}
  A direct application of Lemma \ref{lem:Greens_function_dalembert} and Definition \ref{def:Wt}.
\end{proof}

From this corollary we can also express the inhomogeneous Maxwell time-evolution, cf. Definition \ref{def:Maxwell time evolution}, in terms of $(\mW_t)_{t\in\bb R}$ and $\mJ$.

\begin{lemma}[The Maxwell Solutions in Terms of $(\mW_t)_{t\in\bb R}$ and $\mJ$]\label{lem:connection maxwell time and Wt J}
  Let times $t,t_0\in\bb R$ be given, $F=(F_i)_{1\leq i\leq N}\in D_w(\mA^n)\cap\cal F^{N}$ for some $n\in\bb N$ be given initial fields, and $(\vect q_i,\vect p_i)\in\cal T_{\text{\clock}}^1$ time-like charge trajectories for $1\leq i\leq N$. In addition suppose the initial fields $F_i=(\vect E_i,\vect B_i)$, $1\leq i\leq N$,  fulfill the Maxwell constraints
  \begin{align*}
    \nabla\cdot\vect E_i=4\pi\varrho_{i}(\cdot-\vect q_{i,t_0}), && \nabla\cdot\vect B_i=0.
  \end{align*}
Then for all $t\in\bb R$
  \begin{align*}
     F_t:=\mW_{t-t_0}F+\int_{t_0}^tds\;\mW_{t-s}\mJ(\varphi_s)\in D_w(\mA^n)\qquad=\qquad \big(M_{\varrho_{i}}[F_i,(\vect q_i,\vect p_i)](t,t_0)\big)_{1\leq i\leq N}
  \end{align*}
  holds in the $L^2_w$ sense where $\varphi_s:=\pQP(\vect q_{i,s},\vect p_{i,s})_{1\leq i\leq N}$ for $s\in\bb R$. Furthermore, $F_t\in D_w(\mA^n)\cap\cal F^{N}$ for all $t\in\bb R$.
\end{lemma}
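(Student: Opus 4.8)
The plan is to write $G_t:=\big(M_{\varrho_i}[F_i,(\vect q_i,\vect p_i)](t,t_0)\big)_{1\leq i\leq N}$ for the right-hand side and $F_t:=\mW_{t-t_0}F+\int_{t_0}^{t}ds\;\mW_{t-s}\mJ(\varphi_s)$ for the left-hand side, to show that $t\mapsto G_t$ is a $\cal C^{1}$ curve in $D_w(\mA)\cap\cal F^{N}$ solving the abstract evolution equation $\partial_t\Phi_t=\mA\Phi_t+\mJ(\varphi_t)$ with $\Phi_{t_0}=F$, and then to deduce the variation-of-constants identity $G_t=\mW_{t-t_0}G_{t_0}+\int_{t_0}^{t}ds\;\mW_{t-s}\mJ(\varphi_s)=F_t$. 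The Bochner integral defining $F_t$ exists in $\cal H_w$ because $s\mapsto\mJ(\varphi_s)$ is continuous: $(\vect q_i,\vect p_i)\in\cal T_{\text{\clock}}^{1}\subset\cal C^{1}$ and $\varrho_i\in\cal C^{\infty}_{c}$, so $s\mapsto\vect v(\vect p_{i,s})\varrho_i(\cdot-\vect q_{i,s})$ varies continuously in $L^2_w$, and $(\mW_t)_{t\in\bb R}$ is strongly continuous.

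The first and most technical step is to verify $G_t\in D_w(\mA^{n})\cap\cal F^{N}$ for every $t$. By Theorem \ref{thm:maxwell_solutions}(i) each $t\mapsto(G_t)_i$ is $\cal F^{1}$-valued, smooth in $t$, a strong solution of (\ref{eqn:maxwell_equations_charge_current}) for the charge-current induced by $(\vect q_i,\vect p_i)$, and satisfies $(G_{t_0})_i=F_i$; by part (ii) it obeys the Maxwell constraints $\nabla\cdot\vect E_{i,t}=4\pi\varrho_i(\cdot-\vect q_{i,t})$, $\nabla\cdot\vect B_{i,t}=0$ for all $t$. In the explicit solution formula of Theorem \ref{thm:maxwell_solutions}, the term carrying $\vect j_{t_0}$ and the $s$-integral carrying $(\rho_s,\vect j_s)$ are convolutions of the propagator $K$ with smooth functions of compact support, hence are themselves smooth with support spreading only at unit speed, so they lie in $\cal C^{\infty}_{c}\subset D_w(\mA^{n})\cap\cal F^{N}$ for each fixed $t$. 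For the remaining ``homogeneous'' Kirchhoff term one uses the hypothesis that $F$ satisfies the constraints together with Corollary \ref{cor:connection m wt} (with initial time $t_0$), which after rearrangement reads, for each $i$,
\begin{align*}
  \begin{pmatrix}\partial_t&\nabla\wedge\\-\nabla\wedge&\partial_t\end{pmatrix}K_{t-t_0}*\begin{pmatrix}\vect E_i\\\vect B_i\end{pmatrix}=\big(\mW_{t-t_0}F\big)_i+4\pi\int_{0}^{t-t_0}ds\;K_{t-t_0-s}*\begin{pmatrix}\nabla\varrho_i(\cdot-\vect q_{i,t_0})\\0\end{pmatrix};
\end{align*}
here the first summand lies in $D_w(\mA^{n})\cap\cal F^{N}$ by the group property and Corollary \ref{cor:connection m wt}, and the second is once more smooth with compact support. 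Adding up, $G_t\in D_w(\mA^{n})\cap\cal F^{N}$ for all $t$, and differentiating the explicit formula termwise in $t$ (using Lemma \ref{lem:Greens_function_dalembert} for the convolution terms and standard semigroup calculus for $\mW_{t-t_0}F$) shows $t\mapsto G_t$ is $\cal C^{1}$ as a $D_w(\mA)$-valued curve.

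Next I would identify the evolution equation satisfied by $G$. Since $(G_t)_i$ solves (\ref{eqn:maxwell_equations_charge_current}) strongly, it satisfies in particular the two dynamical equations $\partial_t\vect E_{i,t}=\nabla\wedge\vect B_{i,t}-4\pi\vect v(\vect p_{i,t})\varrho_i(\cdot-\vect q_{i,t})$ and $\partial_t\vect B_{i,t}=-\nabla\wedge\vect E_{i,t}$; comparing with Definitions \ref{def:operator_A} and \ref{def:operator_J}, and noting that the field components of $\mJ(\varphi_t)$ equal $\big(-4\pi\vect v(\vect p_{i,t})\varrho_i(\cdot-\vect q_{i,t}),0\big)_{1\leq i\leq N}$ and depend only on $\pQP\varphi_t$, this is precisely $\partial_tG_t=\mA G_t+\mJ(\varphi_t)$ with $G_{t_0}=F$. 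Writing $g(s):=\mJ(\varphi_s)$ and using that $G$ is a $\cal C^{1}$ $D_w(\mA)$-valued curve and that $\mA$ generates $(\mW_t)_{t\in\bb R}$, one computes $\tfrac{d}{ds}\big(\mW_{t-s}G_s\big)=\mW_{t-s}\big(\partial_sG_s-\mA G_s\big)=\mW_{t-s}g(s)$, and integration from $t_0$ to $t$ gives $G_t=\mW_{t-t_0}F+\int_{t_0}^{t}ds\;\mW_{t-s}\mJ(\varphi_s)=F_t$ for every $t\in\bb R$, the equality holding in the $L^2_w$ sense. Since $G_t\in D_w(\mA^{n})\cap\cal F^{N}$ was already established, this also yields the asserted membership of $F_t$.

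I expect the main obstacle to be the second step: the ``homogeneous'' term in the Maxwell solution formula of Theorem \ref{thm:maxwell_solutions} is \emph{not} $\mW_{t-t_0}F$ but differs from it by a constraint-correction term, and showing both that $G_t$ genuinely lands in $D_w(\mA)$ and that this correction is harmless is exactly where the hypothesis $\nabla\cdot\vect E_i=4\pi\varrho_i(\cdot-\vect q_{i,t_0})$, $\nabla\cdot\vect B_i=0$ enters, through Corollary \ref{cor:connection m wt}. The remainder is routine bookkeeping with compact supports and the standard variation-of-constants computation, which here doubles as the uniqueness argument.
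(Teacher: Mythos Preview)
Your argument is correct and complete, but it takes a genuinely different route from the paper. The paper's proof is a one-line direct computation: ``apply Corollary~\ref{cor:connection m wt} twice and use one partial integration.'' Concretely, one expands both $\mW_{t-t_0}F$ and each $\mW_{t-s}\mJ(\varphi_s)$ via the Kirchhoff formula of Corollary~\ref{cor:connection m wt}, substitutes the Maxwell constraints $\nabla\cdot\vect E_i=4\pi\varrho_i(\cdot-\vect q_{i,t_0})$, $\nabla\cdot\vect B_i=0$ into the divergence-correction integrals appearing there, and then performs a single integration by parts in $s$ inside $\int_{t_0}^{t}ds\,\mW_{t-s}\mJ(\varphi_s)$ to match the resulting expression term by term against the explicit Maxwell solution formula of Theorem~\ref{thm:maxwell_solutions}. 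By contrast, you argue abstractly: you verify that the explicit Maxwell solution $G_t$ lies in $D_w(\mA^n)\cap\cal F^{N}$, observe that it satisfies the linear inhomogeneous evolution equation $\partial_t G_t=\mA G_t+\mJ(\varphi_t)$ with $G_{t_0}=F$, and then recover the Duhamel representation $G_t=\mW_{t-t_0}F+\int_{t_0}^t ds\,\mW_{t-s}\mJ(\varphi_s)$ by the standard variation-of-constants computation $\tfrac{d}{ds}\big(\mW_{t-s}G_s\big)=\mW_{t-s}\mJ(\varphi_s)$. Your approach is cleaner conceptually and makes transparent why the Maxwell constraints are needed (they identify the homogeneous Kirchhoff propagation with $\mW_{t-t_0}F$ up to a compactly supported correction); the paper's approach is shorter to execute once one has Corollary~\ref{cor:connection m wt} in hand and avoids having to check $\cal C^1$-regularity of $t\mapsto G_t$ as a $D_w(\mA)$-valued curve.
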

\begin{proof}
  This can be computed by applying Corollary \ref{cor:connection m wt} twice and using one partial integration.
\end{proof}

\section{Proofs}

\subsection{Weak Uniqueness of $\text{WF}_{\varrho}$ and Synge Solutions by $\text{ML-SI}_{\varrho}$ Cauchy Data}\label{sec:WF initial fields}

Our first goal is to prove Theorem \ref{thm:WF initial conditions}. Using the results of Section \ref{sec:maxwellsolutions} we can give a sensible definition of what we mean by solutions to equations (\ref{eqn:WF equation written out})-(\ref{eqn:WF fields def}) for particular choices of $e_+$ and $e_-$. Recall that $e_+=\frac{1}{2},e_-=\frac{1}{2}$ and $e_+=0,e_-=1$ corresponds to the $\text{WF}_{\varrho}$ equations and the Synge equations, respectively.

\begin{definition}[Class of Solutions]\label{def:WF sols}
  We define $\cal T^{N}_{(e_+,e_-)}$ to consist of elements $({\vect q}_i, {\vect p}_i)_{1\leq i\leq N}\in \cal T_{\text{\clock}!}^N$ which fulfill:
  \begin{enumerate}[(i)]
    \item There exists an $\namel{amax}{a_{max}}<\infty$ such that for all $1\leq i\leq N$, $\sup_{t\in\bb R}\|\partial_t\vect v(\vect p_{i,t})\|\leq\namer{amax}$.
    \item $({\vect q}_i, {\vect p}_i)_{1\leq i\leq N}$ solve the equations (\ref{eqn:WF equation written out})-(\ref{eqn:WF fields def}) for all times $t\in\bb R$ and the particular choice of $e_{+},e_{-}$.
  \end{enumerate}
\end{definition}
\begin{remark}
  (1) Note that this definition is sensible because with $(\vect q_i,\vect p_i)_{1\leq i\leq N}\in\cal T^{N}_{\text{\clock}!}$, equations (\ref{eqn:WF fields def}) for $1\leq i\leq N$ coincide with
  \begin{align*}
    (\vect E^{(e_+,e_-)}_{i,t},
      \vect B^{(e_+,e_-)}_{i,t})
    = \sum_\pm e_\pm M_{\varrho_{i}}[\vect q_i,\vect p_i](t,\pm\infty)
  \end{align*}
by definition in (\ref{eqn:LW_fields}). Theorem \ref{thm:LWfields} guarantees that the right-hand side is well-defined, and charge trajectories in $\cal T_{\text{\clock}!}^1$ are once continuously differentiable so that the left-hand side of (\ref{eqn:WF equation written out}) is also well-defined. The bound on the acceleration will give us a bound on the $\text{WF}_\varrho$ fields in a suitable norm; cf. Lemma \ref{lem:LW fields are in DwA}. (2) Furthermore, there is no doubt that $\cal T^{N}_{(e_+,e_-)}$ is non-empty because in the point-particle case the Schild solutions \cite{schild_electromagnetic_1963} as well as the solutions of Bauer's existence theorem \cite{bauer_ein_1997} have smooth and strictly time-like charge trajectories with bounded accelerations.
\end{remark}

As discussed in Section \ref{sec:summary of part I}, the electric and magnetic fields live in the $L^{2}_{w}$ space for a conveniently chosen weight $w\in\cal W^{\infty}$, cf. Definitions \ref{def:weighted spaces} and \ref{def:phasespace}. In the following we give an example weight $w$ and show that with it the Li\'enard-Wiechert fields of charge trajectories in $\cal T_{\text{\clock}!}^N$ with uniformly bounded accelerations are admissible as $\text{ML-SI}_{\varrho}$ initial data; cf. Theorem \ref{thm:globalexistenceanduniqueness} and Definition \ref{def:operator_A}.

\begin{definition}[Example Weight]
We define the function
 \begin{align}\label{eqn:weight}
w:\bb R^3\to\bb R^+\setminus\{0\},\quad\vect x\mapsto w(\vect x):=(1+\|\vect x\|^2)^{-1}.
\end{align}
\end{definition}
A straight-forward computation given in \cite{deckert_electrodynamic_2010} yields:
\begin{lemma}\label{lem:weight}
  The function $w$ is an element of $\cal W^\infty$.
\end{lemma}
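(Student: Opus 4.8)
The plan is to verify, in turn, the three conditions packed into the definition $\cal W^\infty=\bigcap_{k\in\bb N}\cal W^k$: that $w$ is smooth and strictly positive, that $w$ obeys the translation bound defining $\cal W$, and that $\sqrt w$ obeys the derivative bounds defining every $\cal W^k$. The first is immediate: $\vect x\mapsto 1+\|\vect x\|^2$ is a polynomial with $1+\|\vect x\|^2\geq 1>0$ for all $\vect x$, so $w=(1+\|\vect x\|^2)^{-1}$ lies in $\cal C^\infty(\bb R^3,\bb R^+\setminus\{0\})$.

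For the translation bound I would show that $C_w:=1$ and $P_w:=2$ suffice, i.e. that $w(\vect x+\vect y)\leq(1+\|\vect x\|)^2 w(\vect y)$, equivalently $1+\|\vect y\|^2\leq(1+\|\vect x\|)^2(1+\|\vect x+\vect y\|^2)$, for all $\vect x,\vect y\in\bb R^3$. Writing $a:=\|\vect x+\vect y\|$ and $r:=\|\vect x\|$, and using the triangle inequality $\|\vect y\|\leq a+r$ together with $1+a^2\geq 2a$ and $1+a^2\geq 1$,
\[
 \frac{1+\|\vect y\|^2}{1+\|\vect x+\vect y\|^2}\;\leq\;\frac{1+(a+r)^2}{1+a^2}\;=\;1+\frac{2ar+r^2}{1+a^2}\;\leq\;1+r+r^2\;\leq\;(1+r)^2,
\]
which is the desired inequality; hence $w\in\cal W$.

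The substantive step is the derivative estimate. I would prove by induction on $|\alpha|$ that $D^\alpha\sqrt w=D^\alpha(1+\|\vect x\|^2)^{-1/2}$ is a finite linear combination, with constant coefficients, of terms of the form $x^\beta(1+\|\vect x\|^2)^{-1/2-m}$ subject to $|\beta|\leq m\leq|\alpha|$. The induction step just differentiates one such term by $\partial_i$: the product rule yields a summand of monomial degree $|\beta|-1$ with $m$ unchanged (absent when $|\beta|=0$) and a summand of monomial degree $|\beta|+1$ with $m$ replaced by $m+1$, so both invariants $|\beta|\leq m$ and $m\leq|\alpha|$ are preserved. Granting this representation, each term satisfies
\[
 \bigl|x^\beta(1+\|\vect x\|^2)^{-1/2-m}\bigr|\;\leq\;(1+\|\vect x\|^2)^{m/2}(1+\|\vect x\|^2)^{-1/2-m}\;=\;(1+\|\vect x\|^2)^{-1/2-m/2}\;\leq\;\sqrt{w(\vect x)},
\]
since $|x^\beta|\leq\|\vect x\|^{|\beta|}\leq(1+\|\vect x\|^2)^{|\beta|/2}\leq(1+\|\vect x\|^2)^{m/2}$. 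Summing the absolute values of the finitely many coefficients yields a finite constant $C_\alpha$ with $|D^\alpha\sqrt w|\leq C_\alpha\sqrt w$; replacing it by the maximum over the finitely many $\alpha$ with $|\alpha|\leq k$ furnishes the single constant demanded in the definition of $\cal W^k$. As $k\in\bb N$ was arbitrary this gives $w\in\cal W^\infty$.

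I expect no genuine obstacle here: the lemma is essentially bookkeeping. The only point requiring a little care is maintaining the invariant $|\beta|\leq m$ in the derivative induction, which is exactly what guarantees that the polynomial growth of the numerator is always swallowed by the extra negative powers of $1+\|\vect x\|^2$ produced at each differentiation; the remaining estimates are one-liners.
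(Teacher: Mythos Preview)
Your argument is correct. The paper itself does not give a proof of this lemma but defers to an external reference (\cite{deckert_electrodynamic_2010}), describing it only as ``a straight-forward computation''; your write-up supplies exactly such a computation, verifying smoothness and positivity, the translation bound with $C_w=1$, $P_w=2$, and the derivative bounds on $\sqrt{w}$ via the clean inductive invariant $|\beta|\leq m\leq|\alpha|$ on the building blocks $x^\beta(1+\|\vect x\|^2)^{-1/2-m}$.
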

\begin{lemma}[Regularity of the Li\'enard-Wiechert Fields]\label{lem:LW fields are in DwA}
  Let $( {\vect q}_i, {\vect p}_i)_{1\leq i\leq N}\in\cal T_{\text{\clock}!}^N$ and assume there exists a constant $\namel{amax}{a_{max}}<\infty$ such that for all $1\leq i\leq N$, $\sup_{t\in\bb R}\|\partial_t\vect v(\vect p_{i,t})\|\leq\namer{amax}$. Define $t\mapsto(\vect E_{i,t},\vect B_{i,t}):=M_{\varrho_{i}}[\vect q_i,\vect p_i](t,\pm\infty)$. Then for all $t\in\bb R$
  \begin{align*}
    (\vect q_{i,t},\vect p_{i,t},\vect E_{i,t},\vect B_{i,t})_{1\leq i\leq N}\in D_w(A^\infty),
  \end{align*}
  holds true.
\end{lemma}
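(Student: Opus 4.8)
The plan is to show that the Liénard-Wiechert fields $(\vect E_{i,t},\vect B_{i,t})$ and all their iterated curls lie in the weighted space $L^2_w$, since by Definition \ref{def:operator_A} membership in $D_w(A^\infty)$ amounts to exactly this (the position and momentum components lie trivially in $\bb R^3\oplus\bb R^3$, and $A$ acts on the field components only through $\nabla\wedge$). First I would invoke Corollary \ref{cor:LW_estimate}: under the standing hypotheses $(\vect q_i,\vect p_i)\in\cal T^1_{\text{\clock}!}$ together with the uniform acceleration bound $\namer{amax}$, for every multi-index $\alpha$ there is a constant $\constr{LW_bound}^{(\alpha)}$ with
\begin{align*}
  \|D^\alpha\vect E_{i,t}(\vect x)\|+\|D^\alpha\vect B_{i,t}(\vect x)\| \leq \frac{\constr{LW_bound}^{(\alpha)}}{(1-\namer{vmax})^3}\left(\frac{1}{1+\|\vect x-\vect q_{i,t}\|^2}+\frac{\namer{amax}}{1+\|\vect x-\vect q_{i,t}\|}\right)
\end{align*}
for all $\vect x\in\bb R^3$ and $t\in\bb R$. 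Because $\nabla\wedge$ and its iterates are built from first-order partial derivatives $D^\alpha$, it suffices to control $\int d^3x\, w(\vect x)\,|D^\alpha \vect E_{i,t}(\vect x)|^2$ and similarly for $\vect B_{i,t}$, for each fixed $\alpha$.

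The next step is the weighted integrability estimate. With $w(\vect x)=(1+\|\vect x\|^2)^{-1}$ from Lemma \ref{lem:weight}, and using the translation-covariance bound $w(\vect x)\le(1+\namer{cw}\|\vect q_{i,t}\|)^{\namer{pw}} w(\vect x-\vect q_{i,t})$ inherent in the weight class $\cal W$ (Definition \ref{def:weighted spaces}), I would change variables $\vect y=\vect x-\vect q_{i,t}$ and reduce to showing
\begin{align*}
  \int d^3y\;(1+\|\vect y\|^2)^{-1}\left(\frac{1}{1+\|\vect y\|^2}+\frac{\namer{amax}}{1+\|\vect y\|}\right)^2<\infty.
\end{align*}
Expanding the square, the worst term decays like $\|\vect y\|^{-4}$ at infinity (from the cross term $w\cdot\|\vect y\|^{-2}\cdot\|\vect y\|^{-1}$), which in three dimensions is integrable at infinity; near the origin every term is bounded. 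Hence the integral is finite, and the prefactor $(1+\namer{cw}\|\vect q_{i,t}\|)^{\namer{pw}}$ is finite for each $t$ (though not uniformly in $t$ — but the lemma only asserts membership for each fixed $t$). This gives $D^\alpha\vect E_{i,t},D^\alpha\vect B_{i,t}\in L^2_w$ for every $\alpha$, hence in particular $\mathsf A^n(\vect E_{i,t},\vect B_{i,t})\in\cal F_w$ for all $n$, so $\varphi_t:=(\vect q_{i,t},\vect p_{i,t},\vect E_{i,t},\vect B_{i,t})_{1\leq i\leq N}\in D_w(A^\infty)$.

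The only genuinely delicate point is to make sure the differentiated fields are classical derivatives of the $L^2_w$ representatives, i.e. that the distributional curls appearing in the definition of $D_w(A)$ agree with the pointwise-differentiated Liénard-Wiechert expressions — but this is already packaged in Theorem \ref{thm:LWfields}, which produces $M_\varrho[\vect q_i,\vect p_i](t,\pm\infty)$ as an element of $\cal F^1=\cal C^\infty(\bb R^3,\bb R^3)^2$, so smoothness is given and the pointwise derivatives coincide with the weak ones; combined with the decay from Corollary \ref{cor:LW_estimate} this closes the argument. A secondary bookkeeping issue is that the estimate in Corollary \ref{cor:LW_estimate} is stated for a single charge, so one applies it separately to each index $i$ and notes the direct sum over $i$ in $D_w(A^\infty)$ costs nothing. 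I expect the weighted-integral convergence in three dimensions to be the ``main obstacle'' only in the sense of being the one place a computation is actually required; conceptually everything follows from Corollary \ref{cor:LW_estimate} plus the definition of the weight.
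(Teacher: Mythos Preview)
Your proposal is correct and follows essentially the same route as the paper: invoke Corollary \ref{cor:LW_estimate} for pointwise decay of all derivatives, then check that with $w(\vect x)=(1+\|\vect x\|^2)^{-1}$ the resulting bound is square-integrable against $w$, hence $\mA^n$ applied to the fields stays in $L^2_w$ for every $n$. Your write-up is in fact a bit more explicit than the paper's (you spell out the translation step and the smoothness-vs.-weak-derivative point); the only cosmetic slip is that the slowest-decaying contribution $\sim\|\vect y\|^{-4}$ comes from the squared acceleration term $w\cdot(1+\|\vect y\|)^{-2}$ rather than the cross term, but this does not affect the argument.
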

\begin{proof}
  By Corollary \ref{cor:LW_estimate}, for $1\leq i\leq N$ and each multi-index $\alpha\in\bb N_{0}^3$ there exists a constant $\constr{LW_bound}^{(\alpha)}<\infty$ such that
  \begin{align*}
    \|D^\alpha\vect E^\pm_{i,t}(\vect x)\|+\|D^\alpha\vect B^\pm_{i,t}(\vect x)\|\leq \frac{\constr{LW_bound}^{(\alpha)}}{(1-\namer{vmax})^3}\left(\frac{1}{1+\|\vect x-\vect q_t\|^2}+\frac{\namer{amax}}{1+\|\vect x-\vect q_t\|}\right).
  \end{align*}
  Hence, $w(\vect x)=\frac{1}{1+\|\vect x\|^2}$ ensures that
  \begin{align*}
    \left\|A^n(\vect q_{i,t},\vect p_{i,t},\vect E_{i,t}^\pm,\vect B^\pm_{i,t})\right\|_{\cal H_w} &\leq \sum_{i=1}^N\sum_{|\alpha|\leq n}\left(\|\vect q_{i,t}\|+\|\vect p_{i,t}\|+\intdv x w(\vect x)\left(\|D^\alpha\vect E^\pm_{i,t}(\vect x)\|^2+\|D^\alpha\vect B^\pm_{i,t}(\vect x)\|^2\right)\right)
  \end{align*}
  is finite for all $n\in\bb N_{0}$, $t\in\bb R$. We conclude that $\varphi_t\in D_w(A^\infty)$ for all $t\in\bb R$.
\end{proof}
We prove the first main result:

\begin{proof}[\textbf{Proof of Theorem \ref{thm:WF initial conditions} (Weak Uniqueness of Solutions)}]
(i) Since $(\vect q_{i},\vect p_{i})_{1\leq i\leq N}\in\cal T^{N}_{(e_+,e_-)}$, Lemma \ref{lem:LW fields are in DwA} guarantees $\varphi_{t_0}\in D_w(A^\infty)$ for all $t_0\in\bb R$.

(ii) First, the charge trajectories $(\vect q_{i},\vect p_{i})_{1\leq i\leq N}\in\cal T^{N}_{(e_+,e_-)}$ are once continuously differentiable and fulfill the $\text{WF}_{\varrho}$ equations (\ref{eqn:WF equation written out})-(\ref{eqn:WF fields def}). Second, by Theorem \ref{thm:LW solve M eq} the fields $(\vect E^{(e_+,e_-)}_t[\vect q_i,\vect p_i],\vect B^{(e_+,e_-)}_t[\vect q_i,\vect p_i])$ given in (\ref{eqn:WF fields def}) fulfill the Maxwell equations (\ref{eqn:maxwell equations}) including the Maxwell constraints for all $t\in\bb R$ and $1\leq i\leq N$. Hence, using (i), the equality 
\[
  \frac{d}{dt}\varphi^{(e_+,e_-)}_t[(\vect q_{i},\vect p_{i})_{1\leq i\leq N}]=A\varphi^{(e_+,e_-)}_t[(\vect q_{i},\vect p_{i})_{1\leq i\leq N}]+J(\varphi^{(e_+,e_-)}_t[(\vect q_{i},\vect p_{i})_{1\leq i\leq N}]),
\]
holds true (recall the notation in Section \ref{sec:summary of part I} before equation (\ref{eqn:dynamic_maxwell})). Due to (i) also \ifarxiv{\linebreak}{}$\phi_t:=M_L\left[\varphi^{(e_+,e_-)}_{t_0}[(\vect q_{i},\vect p_{i})_{1\leq i\leq N}]\right]$, $t\in\bb R$, is well-defined; cf. Definition \ref{def:ML time evolution}. Theorem \ref{thm:globalexistenceanduniqueness} states that $\phi_t$ is the only solution of $\partial_{t}\phi_t=A\phi_t+J(\phi_t)$ which fulfills $\phi_{t_0}=\varphi^{(e_+,e_-)}_{t_0}[(\vect q_{i},\vect p_{i})_{1\leq i\leq N}]$. Hence, $\phi_t=\varphi^{(e_+,e_-)}_{t}[(\vect q_{i},\vect p_{i})_{1\leq i\leq N}]$ holds for all $t\in\bb R$.

(iii) Suppose $(\vect q_{i},\vect p_{i})_{1\leq i\leq N}, (\widetilde{\vect q}_{i},\widetilde{\vect p}_{i})_{1\leq i\leq N}\in\cal T^{N}_{(e_+,e_-)}$ and define \ifarxiv{\linebreak}{}$\varphi:=i_{t_0}((\vect q_{i},\vect p_{i})_{1\leq i\leq N})$, $\widetilde\varphi:=i_{t_0}((\widetilde{\vect q}_{i},\widetilde{\vect p}_{i})_{1\leq i\leq N})$ for some $t_0\in\bb R$. According to (ii) we also set
\begin{align*}
 \varphi_t&=(\vect q_{i,t},\vect p_{i,t},\vect E_{i,t},\vect B_{i,t})_{1\leq i\leq N}:=M_L[\varphi](t,t_0),\\
\widetilde\varphi_t&=(\widetilde{\vect q}_{i,t},\widetilde{\vect p}_{i,t},\widetilde{\vect E}_{i,t},\widetilde{\vect B}_{i,t})_{1\leq i\leq N}:=M_L[\widetilde\varphi](t,t_0)
\end{align*}
for all $t\in\bb R$. Now, $\varphi=\widetilde\varphi$ implies $\varphi_t=\widetilde\varphi_t$ for all $t\in\bb R$. Hence, $(\vect q_{i,t},\vect p_{i,t})_{1\leq i\leq N}=(\widetilde{\vect q}_{i,t},\widetilde{\vect p}_{i,t})_{1\leq i\leq N}$ for all $t\in\bb R$, i.e. $(\vect q_{i},\vect p_{i})_{1\leq i\leq N}=(\widetilde{\vect q}_{i},\widetilde{\vect p}_{i})_{1\leq i\leq N}$ by Definition \ref{def:charge trajectory}.
\end{proof}

\if\arxiv 1
\begin{remark}
  Note that the weight function $w$ could be chosen to decay faster than the choice in Lemma \ref{lem:weight}. This freedom allows to generalize Theorem \ref{thm:WF initial conditions} also to include charge trajectories whose accelerations are not bounded but may grow with $t\to\pm\infty$ because the growth of the acceleration $\vect a$ in equations (\ref{eqn:LW_fields}) can be suppressed by the weight $w$. However, Theorem \ref{thm:globalexistenceanduniqueness} applies only for weights $w\in\cal W^1$ which means that $w$ is not allowed to decay faster than an inverse of any polynomial.
\end{remark}
\fi

\subsection{Existence and Uniqueness of Synge Solutions for given Histories}

We continue with the proof of our second main result:

\begin{proof}[\textbf{Proof of Theorem \ref{thm:exist and uni of synge} (Existence and Uniqueness of Synge Solutions)}] 
Set $e_+=0$ and $e_-=1$. (i)  By definition $(\vect q^-_i,\vect p_i^-)_{1\leq i\leq N}\in\cal T^{N}_{\text{\clock}!}$, so that due to Theorem \ref{thm:LWfields} and (\ref{eqn:WF fields def}) for all $t\leq t_0$ we can define
\[
 \varphi^-_{t}=(\vect q_{i,t}^-,\vect p_{i,t}^-,\vect E_{t}[\vect q^-_i,\vect p^-_i],\vect B_{t}[\vect q^-_i,\vect p^-_i])_{1\leq i\leq N}
\]
where the fields are given by the retarded Li\'enard-Wiechert fields of the past history $(\vect q^-,\vect p^-)\in\mathfrak H^-(t_0)$, i.e.
\[
(\vect E_{t}[\vect q^-_i,\vect p^-_i],\vect B_{t}[\vect q^-_i,\vect p^-_i])=M_{\varrho_{i}}[\vect q^-_i,\vect p^-_i](t,-\infty)).
\]
 Lemma \ref{lem:LW fields are in DwA} states $\varphi^-_{t_0}\in D_w(A^\infty)$. Hence, by Theorem \ref{thm:globalexistenceanduniqueness} there is a unique mapping
\begin{align}\label{eqn:futer_MLSI}
  t\mapsto(\vect q_{i,t}^+,\vect p_{i,t}^+,\vect E^+_{i,t},\vect B^+_{i,t})_{1\leq i\leq N}=\varphi^+_t:=M_L[\varphi^-_{t_0}](t,t_0)
\end{align}
such that $\varphi^+_{t_0}=\varphi^-_{t_0}$. Let $(\vect q_{i},\vect p_{i})_{1\leq i\leq N}$ be the concatenation defined in (\ref{eqn:concat}). We consider now 
\[
 \varphi_{t}=(\vect q_{i,t},\vect p_{i,t},\vect E_{t}[\vect q_i,\vect p_i],\vect B_{t}[\vect q_i,\vect p_i])_{1\leq i\leq N}
\]
for all $t\in\bb R$ with the retarded Li\'enard-Wiechert fields of $(\vect q_i,\vect p_i)$ given by
\begin{align}\label{eqn:ret fields}
(\vect E_{t}[\vect q_i,\vect p_i],\vect B_{t}[\vect q_i,\vect p_i]):=M_{\varrho_{i}}[\vect q_i,\vect p_i](t,-\infty),
\end{align}
which are well-defined if $(\vect q_i,\vect p_i)$ would be in $\cal T^{1}_{\text{\clock}!}((-\infty,T])$ for all $T\in\bb R$. However, $(\vect q^-_{i},\vect q^-_{i})_{1\leq i\leq N}\in\cal T^{N}_{\text{\clock}!}$, and $(\vect q^+_{i},\vect p^+_{i})_{1\leq i\leq N}$ is continuously differentiable so that we only need to check that $(\vect q_{i},\vect p_{i})_{1\leq i\leq N}$ is continuously differentiable at $t=t_0$. Now according to the assumption, at time $t=t_0$ the past history $(\vect q^-_{i},\vect p^-_{i})_{1\leq i\leq N}$ solves equations (\ref{eqn:WF equation written out})-(\ref{eqn:WF fields def}) for $e_+=0$ and $e_-=1$, and furthermore, Theorem \ref{thm:LW solve M eq} states that $(\vect E_{t}[\vect q^-_i,\vect p^-_i],\vect B_{t}[\vect q^-_i,\vect p^-_i])_{1\leq i\leq N}$ solve the Maxwell equations at $t=t_{0}$. Hence, we have
\[
  \lim_{t\nearrow t_0}\frac{d}{dt}\varphi^-_t=A\varphi^-_{t_0}+J(\varphi^-_{t_0})=\lim_{t\searrow t_0}\frac{d}{dt}\varphi^+_t,
\]
i.e. $(\vect q_i,\vect p_i)\in\cal T^{1}_{\text{\clock}!}((-\infty,T])$ for $1\leq i\leq N$ and any $T\in\bb R$ so that (\ref{eqn:ret fields}) is well-defined.  With the help of Theorem \ref{thm:globalexistenceanduniqueness} for all $t\geq t_0$ we compute
\begin{align}\label{eqn:uniqueness synge}
  \frac{d}{dt}(\varphi_t-\varphi^+_t)=A(\varphi_{t}-\varphi^+_{t})+ \left[J(\varphi_s)-J(\varphi^+_s)\right]=A(\varphi_{t}-\varphi^+_{t})
\end{align}
 because $J$ does only depend on the charge trajectories. The only solution to this equation is $W_t(\varphi_{t_{0}}-\varphi^+_{t_0})=0$; cf. Definition \ref{def:Wt}. Hence, $(\vect E^+_{i,t},\vect B^+_{i,t})=(\vect E_{t}[\vect q_i,\vect p_i],\vect B_{t}[\vect q_i,\vect p_i])$ for $1\leq i\leq N$ and all $t\geq t_0$, i.e. the fields generated by the $\text{ML-SI}_\varrho$ time evolution equal the retarded Li\'enard-Wiechert fields corresponding to the charge trajectories generated by $\text{ML-SI}_\varrho$ time evolution. This implies that $(\vect q_{i},\vect p_{i})_{1\leq i\leq N}$ solve the $\text{WF}_{\varrho}$ equations (\ref{eqn:WF equation written out})-(\ref{eqn:WF fields def}) for $e_+=0$ and $e_-=1$ and all $t\geq t_0$.

(ii) Since $(\widetilde{\vect q}_{i,t},\widetilde{\vect p}_{i,t})=({\vect q}^-_{i,t},{\vect p}^-_{i,t})_{1\leq i\leq N}$ for all $t\leq t_{0}$ the claim follows from the uniqueness of the map (\ref{eqn:futer_MLSI}).
\end{proof}

\begin{remark}
(1) Condition (ii) in Definition \ref{def:synge histories} is only needed to ensure continuity of the derivative of the charge trajectories at $t_0$. Theorem \ref{thm:LWfields} can be generalized to piecewise $\cal C^1$ charge trajectories. Using this generalization Theorem \ref{thm:exist and uni of synge} can be proven without this condition, ensuring the existence of piecewise $\cal C^1$ Synge solutions for $t\geq t_0$. However, this condition is not restrictive in the sense that one had to fear $\mathfrak H(t_0)$ could be empty. Elements of $\mathfrak H(t_0)$ can be constructed with the following algorithm:
\begin{enumerate}
 \item Choose positions and momenta $(\vect q^-_{i,t_0},\vect q^-_{i,t_0})$ for $1\leq i\leq N$ particles at time $t_0$.
 \item For $1\leq i\leq N$ choose $(\vect q^-_{i,t},\vect p^-_{i,t})$ on time intervals from $-\infty$ up to the latest intersection of the backward light-cones of space-time points $(t_0,\vect q^-_{j,t_0})$, $j\neq i$, before time $t_0$.
 \item Use the Synge equations to compute the acceleration for all $1\leq i\leq N$ charges at $t_0$.
  \item For $1\leq i\leq N$ extend $(\vect q^-_{i,t},\vect p^-_{i,t})$ up to time $t_0$ smoothly such that they connect to the chosen $(\vect q^-_{i,t_0},\vect q^-_{i,t_0})$ with the correct acceleration computed in step 3.
\end{enumerate}

 (2) From the geometry of the Li\'enard-Wiechert fields it is clear that the whole history \ifarxiv{\linebreak}{} $(\vect q^-_{i,t},\vect p^-_{i,t})_{1\leq i\leq N}$ for $t\leq t_0$ is sufficient for uniqueness but not necessary. The necessary data for the charge trajectories $(\vect q^-_{i,t},\vect p^-_{i,t})_{1\leq i\leq N}$ that identify a Synge solution for $t\geq t_0$ uniquely are the shortest trajectory strips, so that the backward light-cone of each space-time point $(t,\vect q^-_{i,t_0})$ intersects all other charge trajectories $(\vect q^-_j,\vect p^-_j)$, $j\neq i$.
\end{remark}

\subsection{Existence of $\text{WF}_{\varrho}$ Solutions on Finite Time Intervals}
\label{sec:existence of initial fields}

We shall now prove the remaining main results Theorem \ref{thm:ST has a fixed point} and Theorem \ref{thm:existence of L}.  For the rest of this work we keep the choice $e_{+}=\frac{1}{2}$, $e_{-}=\frac{1}{2}$ fixed. The results, however, hold also for any choices of $0\leq e_{+},e_{-}\leq 1$. The strategy will be to use Schauder's fixed point theorem to prove the existence of a fixed point of $S^{X^{\pm}}_T$. Recall the distinction between roman and sans serif letters in Definition \ref{def:AWJ}. We generalize the definition of $\cal F_{w}$:
\begin{definition}[Hilbert Spaces for the Fixed Point Theorem]\label{def:Fwn}
  Given $n\in\bb N$ we define $\cal F_w^n$ to be the linear space of elements $F\in D_w(\mA^n)$ equipped with the inner product
  \begin{align*}
    \braket{F,G}_{\cal F_{w}^{n}}:=\sum_{k=0}^{n}\braket{\mA^{k}F,\mA^{k}G}_{\cal F_{w}}.
  \end{align*}
%  with $B=\bb R^3$ in which case we simply write $\|\cdot\|_{\cal F_w^n}$ instead of $\|\cdot\|_{\cal F_w^n(\bb R)}$. 
  The corresponding norm is denoted by $\|\cdot\|_{\cal F^{n}_{w}}$ and we shall use the notation 
  \[\|F\|_{\cal F_w^n(B)}:=\left(\sum_{k=0}^{n}\|A^{k}F\|^{2}_{L^{2}_{w}(B)}\right)^{1/2}\] to denote the restriction of the norm to a subset $B\subset\bb R^3$.
\end{definition}
\begin{lemma}
  For $n\in\bb N$, $\cal F_w^n$ is a Hilbert space.
\end{lemma}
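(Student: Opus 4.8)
The plan is to observe first that $\langle\cdot,\cdot\rangle_{\cal F_w^n}$ is a genuine inner product on $\cal F_w^n$: being a finite sum of the inner products $\langle\cdot,\cdot\rangle_{\cal F_w}$ it is symmetric and sesquilinear, and it is positive definite because already its $k=0$ summand equals $\langle F,F\rangle_{\cal F_w}$, which vanishes only for $F=0$ since $\cal F_w=\pF\cal H_w$ is a Hilbert space. Hence $\cal F_w^n$ is an inner product space, and the whole content of the lemma is the completeness of the associated norm $\|\cdot\|_{\cal F_w^n}$.

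To prove completeness I would start from a Cauchy sequence $(F_j)_{j\in\bb N}$ in $\cal F_w^n$. From the definition of the norm, $\|\mA^kF_j-\mA^kF_l\|_{\cal F_w}\le\|F_j-F_l\|_{\cal F_w^n}$ for each $0\le k\le n$, so every sequence $(\mA^kF_j)_{j\in\bb N}$ is Cauchy in the Hilbert space $\cal F_w$ and therefore converges: $\mA^kF_j\to G_k$ in $\cal F_w$ for suitable $G_k\in\cal F_w$, and in particular $F_j\to G_0$.

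The key step is to identify $G_0$ as an element of $D_w(\mA^n)$ with $\mA^kG_0=G_k$ for all $0\le k\le n$; this will then force $\|F_j-G_0\|_{\cal F_w^n}^2=\sum_{k=0}^n\|\mA^kF_j-G_k\|_{\cal F_w}^2\to 0$ and conclude the proof. For this I would invoke that $\mA$ on $D_w(\mA)$ is a closed operator (noted in the excerpt, inherited from the closedness of $A$ on $D_w(A)$). Running over $k=0,1,\dots,n-1$: from the two convergences $\mA^kF_j\to G_k$ and $\mA(\mA^kF_j)=\mA^{k+1}F_j\to G_{k+1}$ in $\cal F_w$, closedness of $\mA$ yields $G_k\in D_w(\mA)$ and $\mA G_k=G_{k+1}$. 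Chaining these identities gives $\mA^kG_0=G_k$ for $0\le k\le n$, and the relations $G_k\in D_w(\mA)$ for $0\le k\le n-1$ are precisely the requirement $G_0\in D_w(\mA^n)$ (cf.\ Definitions \ref{def:operator_A} and \ref{def:AWJ}).

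I do not expect a genuine obstacle here: this is the standard argument that the graph-norm domain of a closed operator, and of its iterates, is complete, adapted to $\mA$ acting on $\cal F_w$. The only point requiring a little care is to apply closedness of $\mA$ itself and iterate, rather than appeal to closedness of the powers $\mA^k$ directly.
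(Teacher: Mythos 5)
Your proof is correct and rests on the same key fact the paper invokes, namely the closedness of $\mA$ on $D_w(\mA)$. The paper's proof merely cites \cite[Theorem 2.10]{bauer_maxwell_2010} together with that closedness, so your argument essentially spells out the standard graph-norm completeness reasoning that the citation stands in for, including the correct iterated application of closedness to place the limit in $D_w(\mA^n)$.
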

\begin{proof}
  This is an immediate consequence of \cite[Theorem 2.10]{bauer_maxwell_2010} \todo{check reference} and relies on the fact that $\mA$ is closed on $D_w(\mA)$.
\end{proof}

As explained in Section \ref{sec:main} we encode the continuation of the charge trajectories for times $|t|\geq T$ in terms of advanced and retarded Li\'enard-Wiechert fields $X^+_{i,+T}$ and $X^-_{i,-T}$, respectively. These fields are generated by the prescribed charge trajectories for times $|t|\geq T$ and evaluated at time $T$. They must depend on the charge trajectories within $[-T,T]$ because we want to impose certain regularity conditions at the connection times $t=\pm T$. Since these trajectories will be generated within the iteration of $S^{p,X^\pm}_T$ by the $\text{ML-SI}_\varrho$ time evolution this dependence can be expressed simply by the dependence on the $\text{ML-SI}_\varrho$ initial data $(p,F)=\varphi\in D_w(A^\infty)$. We shall therefore use the notation $X^\pm_{i,\pm T}[\varphi]$ for the boundary fields.

Next, we introduce three classes of such boundary fields for our discussion, namely $\cal A^n_w\supset\widetilde{\cal A}^n_w\supset\cal A^\Lip$. The class $\cal A^n_w$ will allow to define what we mean by a conditional $\text{WF}_\varrho$ solution (see Definition \ref{def:WF sol for finite times} below). 
The existence of conditional $\text{WF}_\varrho$ solutions is then shown for the class $\widetilde{\cal A}^n_w$ with $n=3$. The third class,  $\cal A^\Lip$, is only needed for Remark \ref{rem:uniqueness} where we discuss uniqueness of the conditional $\text{WF}_\varrho$ solution for small enough $T$. We define:

\begin{definition}[Boundary Fields Classes $\cal A_w^n$, $\widetilde{\cal A}_w^n$ and $\cal A^\Lip_w$]\label{def:boundary fields}
  For weight $w\in\cal W$ and $n\in\bb N$ we define $\cal A_w^n$ to be the set of maps
  \begin{align*}
    X:\bb R\times D_w(A)\to D_w(\mA^\infty)\cap\cal F^{N},\qquad
    (T,\varphi)\mapsto X_T[\varphi]
  \end{align*}
  which have the following properties for all $p\in\cal P$ and $T\in\bb R$:
  \begin{enumerate}[(i)]
    \item There is a $\constl{c hwn norm}^{(n)}\in\bounds$ such that for all $\varphi\in D_w(A)$ with $\pQP \varphi=p$ it is true that $\|X_T[\varphi]\|_{\cal F_w^n}\leq \constr{c hwn norm}^{(n)}(|T|,\|p\|)$.
    \item The map $F\mapsto X_T[p,F]$ as $\cal F_w^1 \to\cal F_w^{1}$ is continuous.
    \item For $(\vect E_{i,T},\vect B_{i,T})_{1\leq i\leq N}:=X_T[\varphi]$ and $(\vect q_{i,T},\vect p_{i,T})_{1\leq i\leq N}:=\pQP M_L[\varphi](T,0)$ one has
\begin{align*}
 \nabla\cdot\vect E_{i,T}=4\pi\varrho_{i}(\cdot-\vect q_{i,T}), && \nabla\cdot\vect B_{i,T}=0.
\end{align*}
  \end{enumerate}
  The subset $\widetilde{\cal A}^{n}_w$ comprises maps $X\in\cal A^n_w$ that fulfill:
  \begin{enumerate}[(i)]
    \setcounter{enumi}{3}
    \item For balls $B_\tau:=B_\tau(0)\subset\bb R^3$ with radius $\tau>0$, $B_\tau^c:=\bb R^{3}\setminus B_\tau$,  and any bounded set $M\subset D_w(\mA)$ it holds that
    \begin{align*}
\lim_{\tau\to\infty}\sup_{F\in M}\|X_T[p,F]\|_{\cal F_w^n(B_\tau^c)}=0.
\end{align*}
  \end{enumerate}
  Furthermore, $\cal A^{\Lip}_w$ comprise such maps $X\in\subset\cal A^1_w$ that fulfill:
  \begin{enumerate}[(i)]
    \setcounter{enumi}{4}
    \item There is a $\constl{c strong lipschitz}\in\bounds$ such that for all $\varphi,\widetilde\varphi\in D_w(A)$ with $\pQP\varphi=p=\pQP\widetilde\varphi$ it is true that
\begin{align*}
 \|X_T[\varphi]-X_T[\widetilde\varphi]\|_{\cal F_w^1}\leq |T|\constr{c strong lipschitz}(|T|,\|\varphi\|_{\cal H_w},\|\widetilde\varphi\|_{\cal H_w})\;\|\varphi-\widetilde\varphi\|_{\cal H_w}.
\end{align*}
  \end{enumerate}
\end{definition}

\begin{remark}
(1) Note also that $\cal A^{n+1}_w\subset\cal A^n_w$ as well as $\widetilde{\cal A}^{n+1}_w\subset\widetilde{\cal A}^n_w$ for $n\in\bb N$.
(2) In Lemma \ref{lem:boundary fields not empty} we shall show that these classes are not empty. In fact the definitions are intended to allow Li\'enard-Wiechert fields generated by any once continuously differentiable asymptotes with strictly time-like and uniformly bounded accelerations.
\end{remark}

With this definition we can formalize the term ``conditional $\text{WF}_\varrho$ solution'' for given Newtonian Cauchy data and prescribed boundary fields which we have discussed in Section \ref{sec:main}:
\begin{definition}[Conditional $\text{WF}_\varrho$ Solutions]\label{def:WF sol for finite times}
  Let $T>0$, $p\in\cal P$ and $X^\pm\in\cal A^1_w$ be given. The set $\cal T_T^{p,X^\pm}$ consists of elements $(\vect q_i,\vect p_i)_{1\leq i\leq N}\in\cal T_{\text{\clock}}^{N}$ that solve the conditional $\text{WF}_{\varrho}$ equations  (\ref{eqn:bWF equation written out})-(\ref{eqn:WF with boundary fields}) for Newtonian Cauchy data $p=(\vect q_{i,t},\vect q_{i,t})_{1\leq i\leq N}|_{t=0}$. We shall refer to elements in $\cal T^{p,X^\pm}_T$ as conditional $\text{WF}_\varrho$ solutions for initial value $p$ and boundary fields $X^\pm_T$.
\end{definition}

Furthermore, we define the potential fixed point map $S^{p,X^\pm}_T$ as discussed in Section \ref{sec:main} where we make use of the notation and results presented in Section \ref{sec:maxwellsolutions} and Section \ref{sec:summary of part I}.

\begin{definition}[Fixed Point Map $S_T^{p,X^{\pm}}$]\label{def:STX}
  For any given finite $T>0$, $p\in\cal P$ and $X^\pm\in\cal A_w^1$, we define
  \begin{align*}
    S_T^{p,X^\pm}:D_w(\mA)\to D_w(\mA^\infty), \qquad F\mapsto S_T^{p,X^\pm}[F]
  \end{align*}
  by
  \begin{align*}
    S_T^{p,X^\pm}[F]:=\frac{1}{2}\sum_{\pm}\left[\mW_{\mp T}X^\pm_{\pm T}[p,F]+\int_{\pm T}^tds\;\mW_{-s}\mJ(\varphi_s[p,F])\right]
  \end{align*}
  where $s\mapsto\varphi_s[p,F]:=M_L[p,F](s,0)$ denotes the $\text{ML-SI}_\varrho$ solution, cf. Definition \ref{def:ML time evolution}, for initial value $(p,F)\in D_w(A)$.
\end{definition}

Next we make sure that this map is well-defined and that its fixed points, if they exist, have corresponding charge trajectories in $\cal T^{p,X^\pm}_T$, i.e. the conditional $\text{WF}_\varrho$ solutions.

\begin{theorem}[$S_T^{p,X^{\pm}}$ and its Fixed Points]\label{thm:the map ST}
  For any finite $T>0$, $p\in\cal P$ and $X^\pm\in\cal A_w^1$ the following is true:
  \begin{enumerate}[(i)]
  \item The map $S_T^{p,X^\pm}$ is well-defined.
  \item Given $F\in D_w(\mA)$, setting $(X^\pm_{i,\pm T})_{1\leq i\leq N}:=X^\pm_{\pm T}[p,F]$ and denoting the $\text{ML-SI}_\varrho$ charge trajectories
  \begin{align}\label{eqn:fixed point charge trajectories}
    t\mapsto(\vect q_{i,t},\vect q_{i,t})_{1\leq i\leq N}:=\pQP M_L[p,F](t,0)
  \end{align}
  by $(\vect q_i,\vect p_i)_{1\leq i\leq N}$ we have
  \begin{align*}
    S_T^{p,X^\pm}[F]=\frac{1}{2}\sum_\pm\bigg(M_{\varrho_{i}}[X^\pm_{i,\pm T},(\vect q_i,\vect p_i)](0,\pm T)\bigg)_{1\leq i\leq N}
  \end{align*}
  as well as $S_T^{p,X^\pm}[F]\in D_w(\mA^\infty)\cap\cal F^{N}$.
  \item For any $F\in D_{w}(\mA)$ such that $F=S_T^{p,X^\pm}[F]$ the corresponding charge trajectories (\ref{eqn:fixed point charge trajectories}) are in $\cal T^{p,X^\pm}_T$.
  \end{enumerate}
\end{theorem}
\begin{proof}
  (i) Let $F\in D_w(\mA)$, then $(p,F)\in D_w(A)$, and hence, by Theorem \ref{thm:globalexistenceanduniqueness} the map $t\mapsto\varphi_t:=M_L[\varphi](t,0)$ is a once continuously differentiable map $\bb R\to D_w(A)\subset\cal H_w$. By properties of $J$ stated in \cite[Lemma 2.22]{bauer_maxwell_2010} \todo{check reference} we know that $A^kJ:\cal H_w\to D_w(A^\infty)\subset\cal H_w$ is locally Lipschitz continuous for any $k\in\bb N$. By projecting onto field space $\cal F_w$, cf. Definition \ref{def:AWJ}, we obtain that also $\mA^k\mJ:\cal H_w\to D_w(\mA^\infty)\subset\cal F_w$ is locally Lipschitz continuous. Hence, by the group properties of $(\mW_t)_{t\in\bb R}$ we know that $s\mapsto \mW_{-s}\mA^k\mJ(\varphi_s)$ for any $k\in\bb N$ is continuous. Furthermore, $\mA$ is closed. This implies the commutation
  \begin{align*}
    \mA^k\int_{\pm T}^0ds\;\mW_{-s}\mJ(\varphi_s)=\int_{\pm T}^0ds\;\mW_{-s}\mA^k\mJ(\varphi_s).
  \end{align*}
 As this holds for any $k\in\bb N$, $\int_{\pm T}^0ds\;\mW_{-s}\mJ(\varphi_s)\in D_w(\mA^\infty)$. Furthermore, by Definition \ref{def:boundary fields} the term $X_{\pm T}^\pm[p,F]$ is in $D_w(\mA^\infty)$ and therefore $\mW_{\mp T}X_{\pm T}^\pm[p,F]\in D_w(\mA^\infty)$ by the group properties. Hence, the map $S_T^{p,X^\pm}$ is well-defined as a map $D_w(A)\to D_w(\mA^\infty)$.

  (ii) For $F\in D_w(\mA)$ let $(\vect q_i,\vect p_i)_{1\leq i\leq N}$ denote the charge trajectories $t\mapsto(\vect q_{i,t},\vect p_{i,t})_{1\leq i\leq N}=\pQP \varphi_t$ of $t\mapsto\varphi_t:=M_L[p,F](t,0)$, which by $(p,F)\in D_w(A)$ and Theorem \ref{thm:globalexistenceanduniqueness} are once continuously differentiable. Since the absolute value of the velocity is given by $\|\vect v(\vect p_{i,t})\|=\frac{\|\vect p_{i,t}\|}{\sqrt{m^2+\vect p_{i,t}^2}}<1$, we conclude that $(\vect q_i,\vect p_i)_{1\leq i\leq N}$ are also time-like and therefore in $\cal T^{N}_{\text{\clock}}$, cf. Definition \ref{def:charge trajectory}. Furthermore, the boundary fields $X^\pm_{\pm T}[p,F]$ are in $D_w(\mA^\infty)\cap\cal F^{N}$ and obey the Maxwell constraints by the definition of $\cal A_w^n$. So we can apply Lemma \ref{lem:connection maxwell time and Wt J} which states for $(X^\pm_{i,\pm T})_{1\leq i\leq N}:=X^\pm_{\pm T}[p,F]$ that
  \begin{align}\label{eqn:SpXT field}
    \big(M_{\varrho_{i}}[X_{i,\pm T}^{\pm},(\vect q_i,\vect p_i](t,\pm T)\big)_{1\leq i\leq N}=\mW_{t\mp T}X^\pm_{\pm T}[p,F]+\int_{\pm T}^tds\;\mW_{t-s}\mJ(\varphi_s)\in D_w(\mA)\cap\cal F^{N}.
  \end{align}
  For $t=0$ this proves claim (ii).

  (iii) Finally, assume there is an $F\in D_{w}(\mA)$ such that $F=S_T^{p,X^\pm}[F]$. By (ii) this implies $F\in D_w(\mA^\infty)\cap\cal F^{N}$. Let $(\vect q_i,\vect p_i)_{1\leq i\leq N}$ and  $t\mapsto\varphi_t$ be defined as in the proof of (ii) which now is infinitely often differentiable as $\bb R\to\cal H_w$ since $(p,F)\in D_w(A^\infty)$. We shall show later that the following integral equality holds
  \begin{align}\label{eqn:SpXT integral equation}
    \varphi_t=(p,0)+\int_0^tds\; \pQP J(\varphi_s)+\frac{1}{2}\sum_{\pm}\left[W_{t\mp T}(0, X^\pm_{\pm T}[p,F])+\int_{\pm T}^tds\;W_{t-s}\pF J(\varphi_s)\right]
  \end{align}
  for all $t\in\bb R$; note that $t\mapsto\varphi_t:=M_L[p,F](t,0)$ depends also on $(p,F)$. For now, suppose (\ref{eqn:SpXT integral equation}) holds. Then the differentiation with respect to time $t$ of the phase space components of $(\vect q_{i,t},\vect p_{i,t},\vect E_{i,t},\vect B_{i,t})_{1\leq i\leq N}:=\varphi_t$ yields $\partial_t\pQP\varphi_t=\pQP J(\varphi_t)$, which by definition of $J$ gives
  \begin{align}\label{eqn:SpXT lorentz}
    \begin{split}
      \partial_t\vect q_{i,t}&=\vect v(\vect p_{i,t}):=\frac{\vect p_{i,t}}{\sqrt{m^2+\vect p_{i,t}^2}}\\
      \partial_t\vect p_{i,t}&=\sum_{j\neq i}\intdv x\varrho_{i}(\vect x-\vect q_{i,t})\left(\vect E_{j,t}(\vect x)+\vect v(\vect q_{i,t})\wedge\vect B_{j,t}(\vect x)\right).
    \end{split}
  \end{align}
  Furthermore, the field components fulfill
  \begin{align*}
    \pF \varphi_t&=\pF\frac{1}{2}\sum_{\pm}\left[ W_{t\mp T}(0, X^\pm_{\pm T}[\varphi])+\int_{\pm T}^tds\;W_{t-s}\pF J(\varphi_s)\right]\\
    &=\frac{1}{2}\sum_{\pm}\left[\mW_{t\mp T} X^\pm_{\pm T}[p,F]+\int_{\pm T}^tds\;\mW_{t-s}\mJ(\varphi_s)\right]
  \end{align*}
  where we only used the definition of the projectors, cf. Definition \ref{def:AWJ}. Hence, by (\ref{eqn:SpXT field}) we know
  \begin{align}\label{eqn:SpXT wf field}
    (\vect E_{i,t},\vect B_{i,t})=\frac{1}{2}\sum_\pm M_{\varrho_{i}}[F_i,(\vect q_i,\vect p_i](t,\pm T).
  \end{align}
  Furthermore, we have
  \begin{align}\label{eqn:SpXT initial value}
    (\vect q_{i,t},\vect p_{i,t})_{1\leq i\leq N}\big|_{t=0}=p=(\vect q_i^0,\vect p_i^0)_{1\leq i\leq N}.
  \end{align}
  Now, equations (\ref{eqn:SpXT lorentz}), (\ref{eqn:SpXT wf field}) and (\ref{eqn:SpXT initial value}) are exactly the conditional $\text{WF}_\varrho$ equations (\ref{eqn:bWF equation written out})-(\ref{eqn:WF with boundary fields}) for Newtonian Cauchy data $p$ and boundary fields $X^\pm$. Hence, since in (ii) we proved that $(\vect q_i,\vect p_i)_{1\leq i\leq N}$ are in $\cal T^{N}_{\text{\clock}}$, we conclude that they are also in $\cal T^{p,X^\pm}_T$, cf. Definition \ref{def:WF sol for finite times}.

  Finally, it is only left to prove that the integral equation (\ref{eqn:SpXT integral equation}) holds. By Definition \ref{def:ML time evolution}, $\varphi_t$ fulfills
  \begin{align*}
    \varphi_t=W_t(p,F)+\int_0^tds\;W_{t-s}J(\varphi_s)
  \end{align*}
  for all $t\in\bb R$. Inserting the fixed point equation $F=S_T^{p,X^\pm}[F]$, i.e.
  \begin{align*}
    F=\frac{1}{2}\sum_{\pm}\left[\mW_{\mp T}X^\pm_{\pm T}[p,F]+\int_{\pm T}^tds\;\mW_{-s}\mJ(\varphi_s)\right],
  \end{align*}
  we find
  \begin{align*}
    \varphi_t = (p,0)+\frac{1}{2}\sum_\pm W_{t\mp T}\big(0,X^\pm_{\pm T}[p,F]\big) + \frac{1}{2}\sum_\pm W_t\int_{\pm T}^0ds\;W_{-s}\big(0,\mJ(\varphi_s)\big)+\int_0^tds\;W_{t-s}J(\varphi_s).
  \end{align*}
  By the same reasoning as in (i) we may commute $W_t$ with the integral. This together with $J=\pQP J+\pF J$ and $\pQP W_t=\id_{\cal P}$ proves the equality (\ref{eqn:SpXT integral equation}) for all $t\in\bb R$ which concludes the proof.
\end{proof}

Next, we give a simple but physically meaningful element $C\in\widetilde{\cal A}^n_w\cap\cal A^\Lip_w$ to show that neither $\widetilde{\cal A}^n_w$ nor $\cal A^\Lip_w$ is empty.
\begin{definition}[Coulomb Boundary Field]\label{def:coulomb field}
  Define $C:\bb R\times D_w(A) \to D_w(\mA^{\infty})$, $(T,\varphi)\mapsto C_T[\varphi]$ by
  \begin{align*}
    C_T[\varphi]:=\left(\vect E^C_{i}(\cdot-\vect q_{i,T}),0\right)_{1\leq i\leq N}
  \end{align*}
  where $(\vect q_{i,T})_{1\leq i\leq N}:=\pQ M_L[\varphi](T,0)$ and
  \begin{align}\label{eqn:exp coulomb}
    (\vect E^C_{i},0):=M_{\varrho_{i}}[t\mapsto (0,0)](0,-\infty)=\left(\int d^3z\; \varrho_{i}(\cdot-\vect z)\frac{\vect z}{\|\vect z\|^3},0\right).
  \end{align}
  Note that the equality on the right-hand side of (\ref{eqn:exp coulomb}) holds by Theorem \ref{thm:LWfields}.
\end{definition}

\begin{lemma}[$\widetilde{\cal A}^n_w\cap\cal A^\Lip_w$ is Non-Empty]\label{lem:boundary fields not empty}
  Let $n\in\bb N$ and $w\in\cal W$. The map $C$ given in Definition \ref{def:coulomb field} is an element of $\widetilde{\cal A}_w^n\cap\cal A^\Lip_w$.
\end{lemma}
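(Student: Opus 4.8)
The plan is to verify the five conditions of Definition~\ref{def:boundary fields} for the map $C$: conditions (i)--(iii) place $C$ in $\cal A^n_w$ for every $n\in\bb N$, condition (iv) upgrades this to $C\in\widetilde{\cal A}^n_w$, and condition (v), together with (i)--(iii) at $n=1$, yields $C\in\cal A^\Lip_w$. The observation that makes everything routine is that $\vect E^C_i$ is curl-free: substituting $\vect y=\vect x-\vect z$ in \eqref{eqn:exp coulomb} shows $\vect E^C_i=-\nabla\bigl(\varrho_i*\tfrac1{\|\cdot\|}\bigr)$, hence $\nabla\wedge\vect E^C_i=0$ and $\mA\,C_T[\varphi]=\bigl(-[\nabla\wedge\vect E^C_i](\cdot-\vect q_{i,T}),\,0\bigr)_{1\le i\le N}=0$, so $\mA^k C_T[\varphi]=0$ for all $k\ge1$ and every norm $\|C_T[\varphi]\|_{\cal F^n_w}$ collapses to $\|C_T[\varphi]\|_{L^2_w}=\bigl(\sum_i\|\vect E^C_i(\cdot-\vect q_{i,T})\|^2_{L^2_w}\bigr)^{1/2}$. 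Since $\varrho_i\in\cal C^\infty_c$, the potential $\varrho_i*\tfrac1{\|\cdot\|}$ is $\cal C^\infty$ with $\|D^\alpha\vect E^C_i(\vect x)\|=\bigoh(\|\vect x\|^{-2-|\alpha|})$ as $\|\vect x\|\to\infty$, so $\vect E^C_i\in\cal C^\infty(\bb R^3,\bb R^3)\cap L^2_w$; combined with $\mA^k C_T[\varphi]=0$ this already shows $C_T[\varphi]\in D_w(\mA^\infty)\cap\cal F^N$, i.e.\ $C$ has the asserted domain and codomain. Throughout I will use the translation estimate $\|f(\cdot-\vect q)\|^2_{L^2_w}\le(1+C_w\|\vect q\|)^{P_w}\|f\|^2_{L^2_w}$, which follows at once from the growth property \eqref{eqn:weightclass} of $w$.

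\textbf{Conditions (i)--(iv).} For the $\text{ML-SI}_\varrho$ trajectory $t\mapsto\pQP M_L[\varphi](t,0)$ one has $\partial_t\vect q_{i,t}=\vect v(\vect p_{i,t})$ with $\|\vect v(\vect p_{i,t})\|<1$, hence $\|\vect q_{i,T}\|\le\|\vect q^0_i\|+|T|\le\|p\|+|T|$; with the translation estimate this gives $\|C_T[\varphi]\|_{\cal F^n_w}\le(1+C_w(\|p\|+|T|))^{P_w/2}\bigl(\sum_i\|\vect E^C_i\|^2_{L^2_w}\bigr)^{1/2}$, a $\bounds$-function of $(|T|,\|p\|)$, which is (i). For (ii) I would write $F\mapsto C_T[p,F]$ as the composition $F\mapsto(\vect q_{i,T})_{1\le i\le N}\mapsto(\vect E^C_i(\cdot-\vect q_{i,T}),0)_{1\le i\le N}$: the first factor is continuous since convergence in $\cal F^1_w$ implies convergence of $(p,F)$ in $\cal H_w$ and the flow $M_L[\cdot](T,0)$ is continuous on $D_w(A)$ by the a priori estimate \eqref{eqn:apriori lipschitz} of Theorem~\ref{thm:globalexistenceanduniqueness}, and the second factor is continuous $\bb R^3\to L^2_w$ by strong continuity of translations (dominated convergence, using the decay of $\vect E^C_i$ and the translation estimate). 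Condition (iii) is immediate from the construction: $\nabla\cdot\vect E^C_i(\cdot-\vect q_{i,T})=4\pi\varrho_i(\cdot-\vect q_{i,T})$ with $\vect q_{i,T}=\pQ M_L[\varphi](T,0)$, and $\nabla\cdot 0=0$. For (iv), the bound $\|\vect q_{i,T}\|\le\|p\|+|T|=:\rho$ holds for every $F$, hence uniformly over any bounded $M\subset D_w(\mA)$; since $\vect x\notin B_\tau(0)$ and $\|\vect q_{i,T}\|\le\rho$ force $\vect x-\vect q_{i,T}\notin B_{\tau-\rho}(0)$, a change of variables and the translation estimate give
\[
  \|C_T[p,F]\|^2_{\cal F^n_w(B_\tau^c)}\le(1+C_w\rho)^{P_w}\sum_{i=1}^N\int_{\bb R^3\setminus B_{\tau-\rho}(0)}w(\vect y)\,\|\vect E^C_i(\vect y)\|^2\,d^3y ,
\]
whose right-hand side tends to $0$ as $\tau\to\infty$ uniformly in $F\in M$ because $\vect E^C_i\in L^2_w$. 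Thus $C\in\widetilde{\cal A}^n_w$.

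\textbf{Condition (v).} Let $\pQP\varphi=p=\pQP\widetilde\varphi$. Since the position components agree at time $0$, $\vect q_{i,T}-\widetilde{\vect q}_{i,T}=\int_0^T\bigl(\vect v(\vect p_{i,s})-\vect v(\widetilde{\vect p}_{i,s})\bigr)\,ds$, and $\vect v(\vect p)=\vect p/\sqrt{m^2+\vect p^2}$ is globally Lipschitz with constant $1/|m|$, so \eqref{eqn:apriori lipschitz} gives
\[
  \|\vect q_{i,T}-\widetilde{\vect q}_{i,T}\|\le\tfrac{|T|}{|m|}\sup_{s\in[-|T|,|T|]}\bigl\|M_L[\varphi](s,0)-M_L[\widetilde\varphi](s,0)\bigr\|_{\cal H_w}\le|T|\,C\bigl(|T|,\|\varphi\|_{\cal H_w},\|\widetilde\varphi\|_{\cal H_w}\bigr)\,\|\varphi-\widetilde\varphi\|_{\cal H_w},
\]
which supplies the factor $|T|$ demanded in (v). Next, the fundamental theorem of calculus gives $\vect E^C_i(\cdot-\vect q)-\vect E^C_i(\cdot-\widetilde{\vect q})=\int_0^1\bigl[D\vect E^C_i\bigr]\bigl(\cdot-\widetilde{\vect q}-\theta(\vect q-\widetilde{\vect q})\bigr)(\widetilde{\vect q}-\vect q)\,d\theta$, and the translation estimate applied to $D\vect E^C_i\in L^2_w$ yields $\|\vect E^C_i(\cdot-\vect q_{i,T})-\vect E^C_i(\cdot-\widetilde{\vect q}_{i,T})\|_{L^2_w}\le C'\bigl(|T|,\|\varphi\|_{\cal H_w}\bigr)\,\|\vect q_{i,T}-\widetilde{\vect q}_{i,T}\|$ with $C'\in\bounds$, using $\max(\|\vect q_{i,T}\|,\|\widetilde{\vect q}_{i,T}\|)\le\|p\|+|T|\le\|\varphi\|_{\cal H_w}+|T|$. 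Since $\mA$ annihilates the difference, $\|C_T[\varphi]-C_T[\widetilde\varphi]\|_{\cal F^1_w}=\bigl(\sum_i\|\vect E^C_i(\cdot-\vect q_{i,T})-\vect E^C_i(\cdot-\widetilde{\vect q}_{i,T})\|^2_{L^2_w}\bigr)^{1/2}$, and combining the two estimates yields a bound of the form $|T|\,C''(|T|,\|\varphi\|_{\cal H_w},\|\widetilde\varphi\|_{\cal H_w})\,\|\varphi-\widetilde\varphi\|_{\cal H_w}$, so $C\in\cal A^\Lip_w$ and the lemma follows.

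I do not expect a genuine obstacle here; the plan is entirely a sequence of verifications. The two points needing slight care are extracting the factor $|T|$ in condition (v) --- which relies on $\vect q_{i,0}=\widetilde{\vect q}_{i,0}$ forcing the position difference to be an integral over $[0,T]$ of a quantity Lipschitz in the momenta --- and controlling translations in the \emph{weighted} spaces $L^2_w$, for which the polynomial growth bound \eqref{eqn:weightclass} of $w\in\cal W$ is precisely the needed tool.
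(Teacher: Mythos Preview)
Your proof is correct and follows the same overall route as the paper---verify conditions (i)--(v) of Definition~\ref{def:boundary fields} directly using the translation estimate for $w\in\cal W$ and the speed-of-light bound $\|\vect q_{i,T}\|\le\|p\|+|T|$---but with one genuine simplification the paper does not exploit: you observe that $\vect E^C_i$ is a gradient, hence curl-free, so $\mA^k C_T[\varphi]=0$ for all $k\ge1$ and every $\cal F^n_w$-norm collapses to an $L^2_w$-norm. The paper instead bounds each $(\nabla\wedge)^k\vect E^C_i$ via the Li\'enard--Wiechert estimate of Corollary~\ref{cor:LW_estimate}, carrying along all the $D^\alpha$-terms; this is more work but has the virtue of generalizing immediately to boundary fields that are not curl-free (e.g.\ boosted Coulomb fields, cf.\ the remark following the paper's proof). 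Your version of condition (v) is also slightly more explicit than the paper's: you spell out why the factor $|T|$ appears, namely because $\vect q_{i,0}=\widetilde{\vect q}_{i,0}$ forces $\vect q_{i,T}-\widetilde{\vect q}_{i,T}=\int_0^T(\vect v(\vect p_{i,s})-\vect v(\widetilde{\vect p}_{i,s}))\,ds$, whereas the paper leaves this step implicit.
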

\begin{proof}
  We need to show the properties (i)-(v) given in Definition \ref{def:boundary fields}. Fix $T>0$ and $p\in\cal P$. Let $\varphi\in D_w(A)$ such that $\pQP \varphi=p$ and set $F:=\pF\varphi$. Furthermore, we define $(\vect q_{i,T})_{1\leq i\leq N}:=\pQ M_L[\varphi](T,0)$. Since $\vect E^C_{i}$ is a Li\'enard-Wiechert field of the constant charge trajectory $t\mapsto(\vect q_{i,T},0)$ in $\cal T_{\text{\clock}!}^1$, we can apply Corollary \ref{cor:LW_estimate} to yield the following estimate for any multi-index $\alpha\in\bb N^3_{0}$ and $\vect x\in\bb R^3$
  \begin{align}\label{eqn:coulomb est}
    \left\|D^\alpha\vect E^C_{i}(\vect x)\right\|_{\bb R^3}\leq \frac{\constr{LW_bound}^{(\alpha)}}{1+\|\vect x\|^2}.
  \end{align}
  which allows to define the finite constants $\constl{coulomb const}^{(\alpha)}:=\left\|D^\alpha\vect E^C_{i}\right\|_{L^2_w}$. Using the properties of the weight $w\in\cal W$, see (\ref{eqn:weightclass}), we find
  \begin{align*}
    &\|C_T[\varphi]\|_{\cal F_w^n}^2\leq\sum_{k=0}^n\|\mA^k C_T[\varphi]\|_{\cal F_w}\leq\sum_{k=0}^n\sum_{i=1}^N\left\|(\nabla\wedge)^k \vect E^C_{i}(\cdot-\vect q_{i,T})\right\|_{L^2_w}\leq\sum_{k=0}^n\sum_{|\alpha|\leq k}\sum_{i=1}^N\left\|D^\alpha \vect E^C_{i}(\cdot-\vect q_{i,T})\right\|_{L^2_w}\\
    &\leq \sum_{k=0}^n\sum_{|\alpha|\leq k}\sum_{i=1}^N\left(1+\namer{cw}\left\|\vect q_{i,T}\right\|\right)^{\frac{\namer{pw}}{2}}\left\|D^\alpha \vect E^C\right\|_{L^2_w}\leq \sum_{k=0}^n\sum_{|\alpha|\leq k}\sum_{i=1}^N\left(1+\namer{cw}\left\|\vect q_{i,T}\right\|\right)^{\frac{\namer{pw}}{2}}\constr{coulomb const}^{(\alpha)}<\infty.
  \end{align*}
  This implies $C_T[\varphi]\in D_w(\mA^\infty)\cap\cal F^{N}$ and that $C:\bb R\times D_w(A)\to D_w(\mA^\infty)\cap\cal F^{N}$ is well-defined. Note that the right-hand side depends only on $\left\|\vect q_{i,T}\right\|$ which is bounded by
  \begin{align}\label{eqn:vel_bound}
  \left\|\vect q_{i,T}\right\|\leq \|\pQ p\|+|T|
  \end{align}
 since the maximal velocity is bounded by one, i.e. the speed of light. Hence, property (i) holds for
  \begin{align*}
    \constr{c hwn norm}^{(n)}(|T|,\|p\|):=\sum_{k=0}^n\sum_{|\alpha|\leq k}\sum_{i=1}^N\left(1+\namer{cw}\left(\|\pQ p\|+|T|\right)\right)^{\frac{\namer{pw}}{2}}\constr{coulomb const}^{(\alpha)}.
  \end{align*}

  Instead of showing property (ii), we prove the stronger property (v). For this let $\widetilde\varphi\in D_w(A)$ such that $\pQP\varphi=\pQP\widetilde\varphi$ and set $(\widetilde{\vect q}_{i,T})_{1\leq i\leq N}:=\pQ M_L[\widetilde\varphi](T,0)$. Starting with
  \begin{align*}
    \|C_T[\varphi]-C_T[\widetilde\varphi]\|_{\cal F_w^1}\leq\sum_{i=1}^N\sum_{|\alpha|\leq 1}\left\|D^\alpha\left(\vect E^C(\cdot-\vect q_{i,T})-\vect E^C(\cdot-\widetilde{\vect q}_{i,T})\right)\right\|_{L^2_w}
  \end{align*}
  we compute
  \begin{align*}
    \left\|D^\alpha\left(\vect E^C(\cdot-\vect q_{i,T})-\vect E^C(\cdot-\widetilde{\vect q}_{i,T})\right)\right\|_{L^2_w}&=\left\|\int_0^1 d\lambda\;(\widetilde{\vect q}_{i,T}-\vect q_{i,t})\cdot\nabla D^\alpha\vect E^C(\cdot-\widetilde{\vect q}_{i,T}+\lambda(\widetilde{\vect q}_{i,T}-\vect q_{i,t}))\right\|_{L^2_w}\\
    &\leq \int_0^1 d\lambda\;\left\|(\vect q_{i,t}-\widetilde{\vect q}_{i,T})\cdot\nabla D^\alpha\vect E^C(\cdot-\widetilde{\vect q}_{i,T}+\lambda(\widetilde{\vect q}_{i,T}-\vect q_{i,t}))\right\|_{L^2_w}.
  \end{align*}
 Therefore, for all $|\alpha|\leq 1$ we get
  \begin{align*}
    &\sum_{|\alpha|\leq 1}\left\|D^\alpha\left(\vect E^C(\cdot-\vect q_{i,T})-\vect E^C(\cdot-\widetilde{\vect q}_{i,T})\right)\right\|_{L^2_w}\\
    &\leq \|\vect q_{i,T}-\widetilde{\vect q}_{i,T}\|_{\bb R^3}\sup_{0\leq \lambda\leq 1}\sum_{|\beta|\leq 2}\left\|D^\beta\vect E^C(\cdot+\lambda(\vect q_{i,T}-\widetilde{\vect q}_{i,T}))\right\|_{L^2_w}.
  \end{align*}
  The estimate (\ref{eqn:coulomb est}), $0\leq \lambda\leq 1$ and the properties of $w\in\cal W$ yield
  \begin{align*}
    &\left\|D^\beta\vect E^C(\cdot-\widetilde{\vect q}_{i,T}+\lambda(\widetilde{\vect q}_{i,T}-\vect q_{i,t}))\right\|_{L^2_w}\leq \left(1+\namer{cw}\left\|\widetilde{\vect q}_{i,T}-\lambda(\widetilde{\vect q}_{i,T}-\vect q_{i,t})\right\|_{\bb R^3}\right)^\frac{\namer{pw}}{2} \left\|D^\beta\vect E^C\right\|_{L^2_w}\\
    &\leq (1+\namer{cw}(\|\vect q_{i,T}\|_{\bb R^3}+\|\widetilde{\vect q}_{i,T}\|_{\bb R^3})^\frac{\namer{pw}}{2}\constr{coulomb const}^{(\beta)}
  \end{align*}
  Hence, because of bound (\ref{eqn:vel_bound}), property (v) holds for
  \[
    \constr{c strong lipschitz}(|T|,\|\varphi\|_{\cal H_w},\|\widetilde \varphi\|_{\cal H_w}):=N\sum_{|\beta|\leq 2}(1+\namer{cw}(\|\pQ\varphi\|_{\bb R^3}+\|\pQ \widetilde\varphi\|_{\bb R^3}+2|T|)^\frac{\namer{pw}}{2}\constr{coulomb const}^{(\beta)}
  \]
(iii) holds by Theorem \ref{thm:maxwell_solutions}. (iv) Let $B_\tau(0)\subset\bb R^3$ be a ball of radius $\tau>0$ around the origin. For any $F\in D_w(\mA)$ we define $(\vect q_{i,T})_{1\leq i\leq N}:=\pQ M_L[\varphi](T,0)$. It holds
  \begin{align*}
    \|C^T[p,F]\|_{\cal F_w^n(B_\tau^c(0))}&\leq \sum_{i=1}^N\sum_{|\alpha|\leq n}\left\|D^\alpha\vect E^C(\cdot-\vect q_{i,T})\right\|_{L^2_w(B_\tau^c(0))}\\
    &\leq \sum_{i=1}^N\sum_{|\alpha|\leq n}\left(1+\namer{cw}\|\vect q_{i,T}\|\right)^{\frac{\namer{pw}}{2}}\left\|D^\alpha\vect E^C\right\|_{L^2_w(B_\tau^c(\vect q_{i,T}))}.
  \end{align*}
  We use again that the maximal velocity is smaller than one, i.e. $\|\vect q_{i,T}\|\leq\|\vect q_i^0\|+T$. Hence, for $\tau>\|\vect q_i^0\|+T$ define $r(\tau):=\tau-\|\vect q_i^0\|+T$ such that we can estimate the $L^2_w(B_\tau^c(\vect q_{i,T}))$ norm by the $L^2_w(B_{r(\tau)}^c(0))$ norm and yield
\[
    \sup_{F\in D_w(\mA)}\|C^T[p,F]\|_{\cal F_w^n(B_\tau^c(0))}\leq \sum_{i=1}^N\sum_{|\alpha|\leq n}\left(1+\namer{cw}\|\vect q_{i,T}\|\right)^{\frac{\namer{pw}}{2}}\left\|D^\alpha\vect E^C\right\|_{L^2_w(B_{r(\tau)}^c(0))}\xrightarrow[\tau\to\infty]{}0
\]  This concludes the proof.
\end{proof}
\begin{remark}
  When looking for global $\text{WF}_{\varrho}$ solutions, in view of (\ref{eqn:WF with boundary fields}) and (\ref{eqn:WF boundary fields}), the boundary fields can be seen as a good guess of how the charge trajectories $(\vect q_i^0,\vect p_i)_{1\leq i\leq N}$ continue outside of the time interval $[-T,T]$. Without much modification of Lemma \ref {lem:boundary fields not empty} one can also treat the Li\'enard-Wiechert fields of a charge trajectory which starts at $\vect q_{i,T}$ and has constant momentum $\vect p_{i,T}$ using the notation $(\vect q_{i,T},\vect p_{i,T})_{1\leq i\leq N}:=\pQP M_L[\varphi](T,0)$ (the result is the Lorentz boosted Coulomb field). Such boundary fields are also in $\widetilde{\cal A}^n_w\cap\cal A^\Lip_w$ since the derivative $\partial_{s}{\vect p}_{i,s}$ for $s\in[-T,T]$ can be expressed by $J$ which is locally Lipschitz continuous by \cite[Lemma 2.22]{bauer_maxwell_2010} \todo{check reference} and the $\text{ML-SI}_{\varrho}$ dynamics are well controllable on the interval $[-T,T]$; see (ii) of Theorem \ref{thm:globalexistenceanduniqueness}.
%  Only if one wanted to continue the charge trajectories $(\vect q_i,\vect p_i)_{1\leq i\leq N}$ in (\ref{eqn:WF with boundary fields}) more smoothly, for example also continuously in the acceleration, the resulting boundary fields would not lie in $\cal A^{\Lip}_w$ anymore but rather in $\widetilde{\cal A}^1_w$ since in general different initial value for the $\text{ML-SI}_\varrho$ equations yield different accelerations at time zero.
\end{remark}

We collect the needed estimates and properties of $S_T^{p,X^\pm}$ in the following three lemmas.

\begin{lemma}[$\cal F_w^n$ Estimates]\label{lem:estimates on hwn}
  For $n\in\bb N_0$ the following is true:
   \begin{enumerate}[(i)]
     \item For all $t\in\bb R$ and $F\in D_w(\mA^n)$ it holds that $\|\mW_t F\|_{\cal F_w^n}\leq e^{\gamma|t|}\|F\|_{\cal F_w^n}$.
     \item For all $\varphi\in \cal H_w$ there is a $\constl{cfj radius}^{(n)}\in\bounds$ such that
         \begin{align*}
           \|\mJ(\varphi)\|_{\cal F_w^n}\leq \constr{cfj radius}^{(n)}(\|\pQ \varphi\|_{\cal H_w}).
         \end{align*}
     \item For all $\varphi,\widetilde \varphi\in \cal H_w$ there is a $\constl{cj2 hwn}^{(n)}\in\bounds$ such that
         \begin{align*}
           \|\mJ(\varphi)-\mJ(\widetilde\varphi)\|_{\cal F_w^n}\leq \constr{cj2 hwn}^{(n)}(\|\varphi\|_{\cal H_w},\|\widetilde\varphi\|_{\cal H_w})\|\varphi-\widetilde\varphi\|_{\cal H_w}.
         \end{align*}
   \end{enumerate}
\end{lemma}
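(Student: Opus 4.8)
The plan is to handle the three estimates separately by elementary manipulations once the structure of $\mW_t$ and $\mJ$ is made explicit. For (i) I would invoke the standard semigroup fact that the generator commutes with the group on the domain of its powers: since $\mA$ on $D_w(\mA)$ generates the $\gamma$-contractive group $(\mW_t)_{t\in\bb R}$ (Definitions \ref{def:Wt} and \ref{def:AWJ}), the group leaves $D_w(\mA^n)$ invariant and $\mA^k\mW_tF=\mW_t\mA^kF$ for every $F\in D_w(\mA^n)$ and $0\le k\le n$; together with the component-wise bound $\|\mW_tG\|_{\cal F_w}\le e^{\gamma|t|}\|G\|_{\cal F_w}$ this gives
\[
  \|\mW_tF\|_{\cal F_w^n}^2=\sum_{k=0}^n\|\mW_t\mA^kF\|_{\cal F_w}^2\le e^{2\gamma|t|}\sum_{k=0}^n\|\mA^kF\|_{\cal F_w}^2=e^{2\gamma|t|}\|F\|_{\cal F_w^n}^2 ,
\]
and taking square roots proves (i).

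For (ii) I would use that $\mJ(\varphi)=\pF J(\varphi)$ has field components $\big(-4\pi\vect v(\vect p_i)\varrho_i(\cdot-\vect q_i),\,0\big)_{1\le i\le N}$, so $\mJ$ depends only on $(\vect q_i,\vect p_i)_{1\le i\le N}$. Iterating ${\mathtt A}(\vect E,\vect B)=(-\nabla\wedge\vect E,\nabla\wedge\vect B)$ shows that, applied to such a field, ${\mathtt A}^k$ produces (up to sign) $(\nabla\wedge)^k$ acting on the single nonzero vector field $-4\pi\vect v(\vect p_i)\varrho_i(\cdot-\vect q_i)$, and $(\nabla\wedge)^k$ expands into a finite linear combination of partial derivatives $D^\alpha$ with $|\alpha|\le k$. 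Since $\vect v(\vect p_i)$ is constant in $\vect x$ and $\|\vect v(\vect p_i)\|<1$ for all $\vect p_i$, this yields
\[
  \|\mA^k\mJ(\varphi)\|_{\cal F_w}\le 4\pi\sum_{i=1}^N\sum_{|\alpha|\le k}C_\alpha\,\big\|(D^\alpha\varrho_i)(\cdot-\vect q_i)\big\|_{L^2_w},
\]
and the translation property of $w\in\cal W$, $\|(D^\alpha\varrho_i)(\cdot-\vect q_i)\|_{L^2_w}\le(1+\namer{cw}\|\vect q_i\|)^{\namer{pw}/2}\|D^\alpha\varrho_i\|_{L^2_w}$ with $\|D^\alpha\varrho_i\|_{L^2_w}<\infty$ (as $\varrho_i\in\cal C^\infty_c$), together with summation over $0\le k\le n$ and $1\le i\le N$, produces a bound continuous and non-decreasing in $\max_i\|\vect q_i\|\le\|\pQ\varphi\|_{\cal H_w}$, hence an element of $\bounds$. (The same estimate also follows by projecting the $\cal H_w\to D_w(A^\infty)$ bounds for $A^kJ$ from \cite{bauer_maxwell_2010} onto field space; the explicit computation is only needed to see that the bound depends on the positions alone.)

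For (iii) I would write $\mJ(\varphi)-\mJ(\widetilde\varphi)$ component-wise and split
\[
  \vect v(\vect p_i)\varrho_i(\cdot-\vect q_i)-\vect v(\widetilde{\vect p}_i)\varrho_i(\cdot-\widetilde{\vect q}_i)=\big(\vect v(\vect p_i)-\vect v(\widetilde{\vect p}_i)\big)\varrho_i(\cdot-\vect q_i)+\vect v(\widetilde{\vect p}_i)\big(\varrho_i(\cdot-\vect q_i)-\varrho_i(\cdot-\widetilde{\vect q}_i)\big),
\]
estimating the first term by the global Lipschitz bound $\|\vect v(\vect p_i)-\vect v(\widetilde{\vect p}_i)\|\le\frac{1}{|m|}\|\vect p_i-\widetilde{\vect p}_i\|$ and the second by $\varrho_i(\vect x-\vect q_i)-\varrho_i(\vect x-\widetilde{\vect q}_i)=\int_0^1 d\lambda\,(\widetilde{\vect q}_i-\vect q_i)\cdot(\nabla\varrho_i)\big(\vect x-\vect q_i-\lambda(\widetilde{\vect q}_i-\vect q_i)\big)$ and the analogous identity for its $D^\alpha$-derivatives, using $\|\vect v(\widetilde{\vect p}_i)\|<1$. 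Expanding ${\mathtt A}^k$ as in (ii) and applying the $w\in\cal W$ translation estimates at $\vect q_i$, $\widetilde{\vect q}_i$ and the segment joining them gives
\[
  \|\mJ(\varphi)-\mJ(\widetilde\varphi)\|_{\cal F_w^n}\le C\big(\|\varphi\|_{\cal H_w},\|\widetilde\varphi\|_{\cal H_w}\big)\sum_{i=1}^N\big(\|\vect q_i-\widetilde{\vect q}_i\|+\|\vect p_i-\widetilde{\vect p}_i\|\big)\le C'\big(\|\varphi\|_{\cal H_w},\|\widetilde\varphi\|_{\cal H_w}\big)\,\|\varphi-\widetilde\varphi\|_{\cal H_w}
\]
with $C,C'\in\bounds$, which is (iii). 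I do not anticipate a real obstacle: the field component of $\mJ$ is always a translate of a fixed smooth compactly supported profile, so every $L^2_w$ quantity is finite and the group $(\mW_t)$ is completely controlled by the semigroup theory of \cite{bauer_maxwell_2010}; the only point that needs care is keeping track of the weight-translation factors $(1+\namer{cw}\|\cdot\|)^{\namer{pw}/2}$ so as to certify that all constants belong to $\bounds$.
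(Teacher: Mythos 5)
Your proposal is correct and for parts (i) and (ii) it is essentially the paper's own argument: commute $\mA^k$ with $\mW_t$ and use the $\gamma$-contractivity for (i), and for (ii) expand the curl into $D^\alpha$ derivatives of $\varrho_i(\cdot-\vect q_i)$, drop $\|\vect v(\vect p_i)\|<1$, and apply the $\cal W$-translation estimate. For (iii) the paper simply cites the local Lipschitz estimate for $A^kJ$ from \cite[Lemma 2.22]{bauer_maxwell_2010} and sums over $k\le n$, whereas you unpack that estimate explicitly by splitting $\vect v(\vect p_i)\varrho_i(\cdot-\vect q_i)-\vect v(\widetilde{\vect p}_i)\varrho_i(\cdot-\widetilde{\vect q}_i)$ into a momentum-Lipschitz part and a translation part; this is a self-contained rewrite of what the cited lemma provides, not a different route. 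One small point of care: the paper's displayed chain for (i) carries $e^{\gamma|t|}$ against the sum of squared norms, while your version correctly keeps $e^{2\gamma|t|}$ before taking the square root, so your bookkeeping is actually tidier there.
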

\begin{proof}
  (i) As shown in \cite[Lemma 2.19]{bauer_maxwell_2010}\todo{check reference}, $A$ on $D_w(A)$ generates a $\gamma$-contractive group $(W_t)_{t\in\bb R}$; cf. Definition \ref{def:Wt}. This property is inherited from $\mA$ on $D_{w}(\mA)$ which generates the group $(\mW_{t})_{t\in\bb R}$. Hence, $\mA$ and $\mW_t$ commute for any $t\in\bb R$ which implies for all $F\in D_w(\mA^n)$ that
  \begin{align*}
    \|\mW_tF\|_{\cal F_w^n}^2=\sum_{k=0}^n\|\mA^k \mW_t F\|_{\cal F_w}^2=\sum_{k=0}^n\|\mW_t\mA^k F\|_{\cal F_w}^2\leq e^{\gamma|t|}\sum_{k=0}^n\|\mA^k F\|_{\cal F_w}^2=e^{\gamma|t|}\|F\|_{\cal F_w^n}.
  \end{align*}

  For (ii) let $(\vect q_i,\vect p_i,\vect E_i,\vect B_i)_{1\leq i\leq N}=\varphi\in\cal H_w$. Using then the definition of $\mJ$, cf. Definition \ref{def:operator_J} and \ref{def:AWJ}, we find
  \begin{align*}
    \|\mJ(\varphi)\|_{\cal F_w^n}\leq\sum_{i=1}^{N}\sum_{k=0}^n\|(\nabla\wedge)^k \vect v(\vect p_i)\varrho_{i}(\cdot-\vect q_i)\|_{L^2_w}.
  \end{align*}
  By applying the triangular inequality one finds a constant $\constl{rot alpha const}$ such that
  \begin{align*}
    \|(\nabla\wedge)^k \vect v(\vect p_i)\varrho_{i}(\cdot-\vect q_i)\|_{L^2_w}\leq (\constr{rot alpha const})^n\sum_{|\alpha|\leq n}\|\vect v(\vect p_i)D^\alpha\varrho_{i}(\cdot-\vect q_i)\|_{L^2_w}\leq (\constr{rot alpha const})^n\sum_{|\alpha|\leq n}\|D^\alpha\varrho_{i}(\cdot-\vect q_i)\|_{L^2_w}
  \end{align*}
  whereas in the last step we used the fact that the maximal velocity is smaller than one. Using the properties of the weight function $w\in\cal W$, cf. Definition \ref{def:weighted spaces}, we conclude
  \begin{align*}
    \|D^\alpha\varrho_{i}(\cdot-\vect q_i)\|_{L^2_w}\leq (1+\namer{cw}\|\vect q_i\|)^{\frac{\namer{pw}}{2}}\|D^\alpha\varrho_{i}\|_{L^2_w}.
  \end{align*}
  Collecting these estimates we yield that claim (ii) holds for
  \begin{align*}
    \constr{cfj radius}^{(n)}(\|\pQ\varphi\|_{\cal H_w}):=(\constr{rot alpha const})^n\sum_{i=1}^N(1+\namer{cw}\|\vect q_i\|)^{\frac{\namer{pw}}{2}}\sum_{|\alpha|\leq n}\|D^\alpha\varrho_{i}\|.
  \end{align*}

  Claim (iii) is shown by repetitively applying estimate of \cite[Lemma 2.22]{bauer_maxwell_2010} \todo{check reference} on the right-hand side of
  \begin{align*}
    \|\mJ(\varphi)-\mJ(\widetilde\varphi)\|_{\cal F_w^n}\leq\sum_{k=0}^n\|A^k[J(\varphi)-J(\widetilde\varphi)]\|_{\cal H_w}
  \end{align*}
  which yields a constant $\constr{cj2 hwn}^{(n)}:=\sum_{k=0}^n\constl{cj2}^{(k)}(\|\varphi\|_{\cal H_w},\|\widetilde\varphi\|_{\cal H_w})$ where $\constr{cj2}\in\bounds$ is given in the proof of \cite[Lemma 2.22]{bauer_maxwell_2010}\todo{check reference}. This concludes the proof.
\end{proof}

\begin{lemma}[Properties of $S^{p,X^\pm}_T$]\label{lem:estimates for ST}
  Let $0<T<\infty$, $p\in\cal P$ and $X^\pm\in\cal A_w^n$ for $n\in\bb N$. Then it holds:
  \begin{enumerate}[(i)]
    \item There is a $\constl{ST radius const}\in\bounds$ such that for all $F\in \cal F_w^1$ we have
        \begin{align*}
          \|S_T^{p,X^\pm}[p,F]\|_{\cal F_w^n}\leq \constr{ST radius const}^{(n)}(T,\|p\|).
        \end{align*}
    \item $F\mapsto S^{p,X^\pm}_T[F]$ as $\cal F_w^1\to\cal F_w^1$ is continuous.
  \end{enumerate}
  If $X^\pm\in\cal A^\Lip_w$, it is also true that:
  \begin{enumerate}[(i)]
    \setcounter{enumi}{2}
    \item There is a $\constl{ST Lipschitz const res}\in\bounds$ such that for all $F,\widetilde F\in \cal F_w^1$ we have
        \begin{align*}
          \|S^{p,X^\pm}_T[F]-S^{p,X^\pm}_T[\widetilde F]\|_{\cal F_w^1}\leq T\constr{ST Lipschitz const res}(T,\|p\|,\|F\|_{\cal F_w},\|\widetilde F\|_{\cal F_w})\|F-\widetilde F\|_{\cal F_w}.
        \end{align*}
  \end{enumerate}
\end{lemma}
\begin{proof}
   Fix a finite $T>0$, $p\in\cal P$, $X^\pm\in\cal A_w^n$ for $n\in\bb N$. Before we prove the claims we preliminarily recall the relevant estimates of the $\text{ML-SI}_\varrho$ dynamics. Throughout the proof and for any $F,\widetilde F\in\cal F_w^n$ we use the notation
   \[
   D_w(A^n)\ni\varphi\equiv(p,F), \qquad D_w(A^n)\ni\widetilde\varphi\equiv(p,\widetilde F),\]
   and furthermore,
   \[\varphi_t:=M_L[\varphi](t,0),\qquad \widetilde\varphi_t:=M_L[\widetilde\varphi](t,0),\] 
   for any $t\in\bb R$. Recall the estimates given in (ii) of  Theorem \ref{thm:globalexistenceanduniqueness} which gives the following $T$ dependent upper bounds on these $\text{ML-SI}_\varrho$ solutions:
  \begin{align}\label{eqn:mlsi estimate 1}
     \sup_{t\in[-T,T]}\|\varphi_t-\widetilde\varphi_t\|_{\cal H_w}\leq \constr{apriori lipschitz}(T,\|\varphi\|_{\cal H_w},\|\widetilde\varphi\|_{\cal H_w})\|\varphi-\widetilde\varphi\|_{\cal H_w},
  \end{align}
  \begin{align}\label{eqn:mlsi estimate 2}
     \sup_{t\in[-T,T]}\|\varphi_t\|_{\cal H_w}\leq \constr{apriori lipschitz}(T,\|\varphi\|_{\cal H_w},0)\|\varphi\|_{\cal H_w} && \text{and} && \sup_{t\in[-T,T]}\|\widetilde\varphi_t\|_{\cal H_w}\leq \constr{apriori lipschitz}(T,\|\widetilde\varphi\|_{\cal H_w},0)\|\widetilde\varphi\|_{\cal H_w}.
  \end{align}
To prove claim (i) we estimate
  \begin{align*}
    \|S^{p,X^\pm}_T[F]\|_{\cal F_w^n}&\leq \left\|\frac{1}{2}\sum_\pm \mW_{\mp T}X^\pm_{\pm T}[p,F]\right\|_{\cal F_w^n} + \left\|\frac{1}{2}\sum_\pm\int_{\pm T}^0ds\;\mW_{-s}\mJ(\varphi_s)\right\|_{\cal F_w^n}
    =:\terml{uni bound 2}+\terml{uni bound 3},
  \end{align*}
  cf. Definition \ref{def:STX}. By the estimate given in (i) of Lemma \ref{lem:estimates on hwn} and the property (i) of Definition \ref{def:boundary fields} we find
  \begin{align*}
    \termr{uni bound 2}\leq \frac{1}{2}\sum_\pm\|\mW_{\mp T}X^\pm_{\pm T}[p,F]\|_{\cal F_w^n}\leq e^{\gamma T}\|X^\pm_{\pm T}[p,F]\|_{\cal F_w^n}\leq e^{\gamma T}\constr{c hwn norm}^{(n)}( T,\|\varphi\|_{\cal H_w}).
  \end{align*}
  Furthermore, using in addition the estimates (i)-(ii) of Lemma \ref{lem:estimates on hwn} we get a bound for the other term by
  \begin{align*}
    \termr{uni bound 3}\leq T e^{\gamma T}\sup_{s\in[-T,T]}\|\mJ(\varphi_s)\|_{\cal F_w^n}\leq T e^{\gamma T}\sup_{s\in[-T,T]}\constr{cfj radius}(\|\pQ\varphi_s\|_{\cal H_w})\leq T e^{\gamma T}\constr{cfj radius}(\|p\|+T)
  \end{align*}
  whereas the last step is implied by the fact that the maximal velocity is below one. These estimates prove claim (i) for
  \begin{align*}
    \constr{ST radius const}^{(n)}(T,\|\phi\|_{\cal H_w^n}) := e^{\gamma T}\left(\constr{c hwn norm}^{(n)}( T,\|p\|)+T\constr{cfj radius}(\|p\|+T)\right).
  \end{align*}

  Next we prove claim (ii). Therefore, we consider
  \begin{align*}
    \|S_T^{p,X^\pm}[F]-S_T^{p,X^\pm}[\widetilde F]\|_{\cal F_w^n}&\leq e^{\gamma T}\|X^\pm_{\pm T}[\varphi]-X^\pm_{\pm T}[\widetilde\varphi]\|_{\cal F_w^n}+Te^{\gamma T}\sup_{s\in[-T,T]}\left\|\mJ(\varphi_s)-\mJ(\widetilde\varphi_s)\right\|_{\cal F_w^n}\\&=:\terml{ST lipschitz 1}+\terml{ST lipschitz 2}
  \end{align*}
  where we have already applied (i) of Lemma \ref{lem:estimates on hwn}. Next we apply (iii) of Lemma \ref{lem:estimates on hwn} to the term \termr{ST lipschitz 2} and yield
  \begin{align*}
    \termr{ST lipschitz 2}\leq Te^{\gamma T}\sup_{s\in[-T,T]}\constr{cj2 hwn}^{(n)}(\|\varphi_s\|_{\cal H_w},\|\widetilde\varphi_s\|_{\cal H_w})\|\varphi_s-\widetilde\varphi_s\|_{\cal H_w}.
  \end{align*}
  Finally, by the $\text{ML-SI}_\varrho$ estimates (\ref{eqn:mlsi estimate 1}) and (\ref{eqn:mlsi estimate 2}) we have
  \begin{align}\label{eqn:ST lip est 2}
    \termr{ST lipschitz 2}\leq T\constl{apriori lipschitz pre}(T,\|p\|,\|F\|_{\cal F_w^n},\|\widetilde F\|_{\cal F_w^n})\|\varphi-\widetilde\varphi\|_{\cal H_w}
  \end{align}
  for
  \begin{align*}
    \constr{apriori lipschitz pre}(T,\|p\|,\|F\|_{\cal F_w^n},\|\widetilde F\|_{\cal F_w^n}) := & e^{\gamma T}\constr{cj2 hwn}^{(n)}\bigg(\constr{apriori lipschitz}(T,\|\varphi\|_{\cal H_w},0)\|\varphi\|_{\cal H_w},\constr{apriori lipschitz}(T,0,\|\widetilde\varphi\|_{\cal H_w})\|\varphi\|_{\cal H_w}\bigg)\times\\
    &\times\constr{apriori lipschitz}(T,\|\varphi\|_{\cal H_w},\|\widetilde\varphi\|_{\cal H_w}).
  \end{align*}
  By this estimate and (ii) of Definition \ref{def:boundary fields} the limit $\widetilde F\to F$ in $\cal F_w^1$ implies $S^{p,X^\pm}_T[\widetilde F]\to S^{p,X^\pm}_T[F]$ in $\cal F_w^1$ since here $\|\varphi-\widetilde\varphi\|_{\cal H_w}=\|F-\widetilde F\|_{\cal F_w}$. Hence, the claim (ii) is true.

  (iii) Let now $X^\pm\in\cal A^\Lip_w$. By (v) of Definition \ref{def:boundary fields} the term \termr{ST lipschitz 1} then behaves according to
  \begin{align*}
    \termr{ST lipschitz 1}&\leq T\constr{c strong lipschitz}^{(n)}(|T|,\|\varphi\|_{\cal H_w},\|\widetilde\varphi\|_{\cal H_w})\;\|\varphi-\widetilde\varphi\|_{\cal H_w}
  \end{align*}
  Together with the estimate (\ref{eqn:ST lip est 2}) this proves claim (ii) for
  \begin{align*}
    \constr{ST Lipschitz const res}^{(n)}(T,\|p\|,\|F\|_{\cal F_w},\|\widetilde F\|_{\cal F_w}):=\constr{c strong lipschitz}^{(n)}(|T|,\|\varphi\|_{\cal H_w},\|\widetilde\varphi\|_{\cal H_w})+\constr{apriori lipschitz pre}(T,\|p\|,\|F\|_{\cal F_w^n},\|\widetilde F\|_{\cal F_w^n})
  \end{align*}
  since in our case $\|\varphi-\widetilde\varphi\|_{\cal H_w}=\|F-\widetilde F\|_{\cal F_w}$.
\end{proof}

\begin{remark}\label{rem:uniqueness}
Let $p\in\cal P$, $X^{\pm}\in A^{\Lip}_{w}$. Then claim (iii) of Lemma \ref{lem:estimates for ST} has an immediate consequence:
 For sufficiently small $T$ the mapping $S^{p,X^{\pm}}_{T}$ has a unique fixed point, which follows by Banach's fixed point theorem.
Consider therefore $X^\pm\in\cal A^\Lip_w\subset\cal A^1_w$, then (i) of Lemma \ref{lem:estimates for ST}  states
  \begin{align*}
    \|S_T^{p,X^\pm}[p,F]\|_{\cal F_w^1}\leq \constr{ST radius const}^{(1)}(T,\|p\|)=:r.
  \end{align*}
  Hence, the map $S_T^{p,X^\pm}$ restricted to the ball $B_r(0)\subset\cal F_w^1$ with radius $r$ around the origin is a nonlinear self-mapping. Claim (iii) of Lemma \ref{lem:estimates for ST} states for all $T>0$ and $F,\widetilde F \in B_r(0)\subset D_w(\mA)$ that
  \begin{align*}
    \|S_T^{p,X^\pm}[F]-S_T^{p,X^\pm}[\widetilde F]\|_{\cal F_w^1}&\leq T \constr{ST Lipschitz const res}(T,\|p\|,\|F\|_{\cal F_w},\|\widetilde F\|_{\cal F_w})\|F-\widetilde F\|_{\cal F_w}\\
    &\leq T\constr{ST Lipschitz const res}(T,\|p\|,r,r)\|F-\widetilde F\|_{\cal F_w}.
  \end{align*}
  where we have also used that $\constr{ST Lipschitz const res}\in\bounds$ is a continuous and strictly increasing function of its arguments. Hence, for $T$ sufficiently small we have $T \constr{ST Lipschitz const res}(T,\|p\|,r,r)<1$ such that $S_T^{p,X^\pm}$ is a contraction on $B_r(0)\subset\cal F_w^1$. However, for larger $T$ we loose control on the uniqueness of the fixed point. This highlights an interesting aspect of dynamical systems. E.g. for the $\text{ML-SI}_{\varrho}$ dynamics it means that solutions are still uniquely characterized not only by Newtonian Data and fields $(p,F)\in D_{w}(A)$ at time $t=0$ but also by specifying Newtonian Cauchy data $p\in\cal P$ at time $t=0$ and fields $F$ at a different time $t=T$. The maximal $T$ will in general be inverse proportional to the Lipschitz constant of the vector field.
  \end{remark}

We come to the proof of Theorem \ref{thm:ST has a fixed point} where we shall use the following criterion for precompactness of sequences in $L^2_w$.

\begin{lemma}[Criterion for Precompactness]\label{lem:precompactness}
  Let $(\vect F_n)_{n\in\bb N}$ be a sequence in $L^2_w(\bb R^3,\bb R^3)$ such that
  \begin{enumerate}[(i)]
    \item The sequence $(\vect F_n)_{n\in\bb N}$ is uniformly bounded in $H_w^\triangle$, defined in (\ref{eqn:sobolev spaces}).
    \item $\lim_{\tau\to\infty}\sup_{n\in\bb N}\|\vect F_n\|_{L^2_w(B_\tau^c(0))}=0$.
  \end{enumerate}
  Then the sequence $(F_n)_{n\in\bb N}$ is precompact, i.e. it contains a convergent subsequence.
\end{lemma}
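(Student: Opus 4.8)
The plan is to prove Lemma~\ref{lem:precompactness} by the classical combination of a ``tail smallness'' argument based on (ii) with local compactness on bounded balls based on (i). Since $L^2_w(\bb R^3,\bb R^3)$ is a Hilbert space, hence complete, it suffices to extract from $(\vect F_n)_{n\in\bb N}$ a Cauchy subsequence.

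First I would establish precompactness on a fixed ball $B_\tau=B_\tau(0)$. Because $w\in\cal C^\infty(\bb R^3,\bb R^+\setminus\{0\})$ it attains a strictly positive minimum and a finite maximum on the compact set $\overline{B_{\tau+1}}$, so on $B_{\tau+1}$ the norms $\|\cdot\|_{L^2_w}$ and $\|\cdot\|_{L^2}$ are equivalent; hence the uniform $H_w^\triangle$-bound from (i) gives a uniform bound on $\|\vect F_n\|_{L^2(B_{\tau+1})}+\|\triangle\vect F_n\|_{L^2(B_{\tau+1})}$. Taking a cutoff $\chi\in\cal C^\infty_c(B_{\tau+1})$ with $\chi\equiv 1$ on $B_\tau$, an integration by parts yields a Caccioppoli-type estimate
\[
 \int_{\bb R^3}\chi^2\,\|\nabla\vect F_n\|^2 \le C(\tau)\int_{B_{\tau+1}}\big(\|\vect F_n\|^2+\|\triangle\vect F_n\|^2\big),
\]
so that $(\chi\vect F_n)_{n\in\bb N}$ is bounded in $H^1(\bb R^3)$. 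By the Rellich--Kondrachov theorem a subsequence of $(\chi\vect F_n)_n$ converges in $L^2(B_{\tau+1})$; restricting to $B_\tau$, where $\chi\equiv 1$, and using the norm equivalence, the corresponding subsequence of $(\vect F_n)_n$ converges in $L^2_w(B_\tau)$.

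Next I would run the usual diagonal extraction: apply the previous step successively with $\tau=1,2,3,\dots$, each time passing to a further subsequence, and let $(\vect G_n)_{n\in\bb N}$ be the diagonal sequence, which is a subsequence of $(\vect F_n)_n$ converging in $L^2_w(B_k)$ for every $k\in\bb N$. Finally, given $\varepsilon>0$, assumption (ii) lets me pick $k$ with $\sup_n\|\vect F_n\|_{L^2_w(B_k^c)}<\varepsilon/3$, where $B_k^c:=\bb R^3\setminus B_k$, and since $(\vect G_n)_n$ is Cauchy in $L^2_w(B_k)$ there is $N$ with $\|\vect G_n-\vect G_m\|_{L^2_w(B_k)}<\varepsilon/3$ for $n,m\ge N$; the triangle inequality
\[
 \|\vect G_n-\vect G_m\|_{L^2_w}\le\|\vect G_n-\vect G_m\|_{L^2_w(B_k)}+\|\vect G_n\|_{L^2_w(B_k^c)}+\|\vect G_m\|_{L^2_w(B_k^c)}
\]
then shows $(\vect G_n)_n$ is Cauchy in $L^2_w(\bb R^3,\bb R^3)$, hence convergent, which is the assertion.

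I expect the only non-routine point to be the local estimate that converts control of $\triangle\vect F_n$ into control of $\nabla\vect F_n$ on $B_\tau$; this is the ``main obstacle'', but it is standard: one may either argue by the Caccioppoli identity above (integrating $\chi^2\,\vect F_n\cdot\triangle\vect F_n$ by parts and absorbing the cross term with Young's inequality) or invoke interior elliptic regularity for the Laplacian to get $\vect F_n\in H^2_{\mathrm{loc}}$ with uniform interior $H^2(B_\tau)$-bounds, and then use the compact embedding $H^1\hookrightarrow\hookrightarrow L^2$ (respectively $H^2\hookrightarrow\hookrightarrow L^2$) on the bounded domain. All the remaining steps are the standard tail-plus-diagonal packaging and involve no real difficulty.
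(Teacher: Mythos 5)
Your proof is correct and follows the same route the paper adopts by citation: the cited reference (Lieb--Loss, Theorem~8.6 and its proof) is precisely the ``local Rellich compactness on balls plus uniform tail smallness plus diagonal extraction'' argument that you carry out explicitly. The one step specific to this setting that the paper leaves to the reference — converting the $H_w^\triangle$ bound into a local $H^1$ bound — you handle correctly, either by the Caccioppoli/Young absorption you sketch (which, combined with interior elliptic regularity to justify that $\vect F_n\in H^1_{\mathrm{loc}}$ in the first place, gives $\int\chi^2\|\nabla\vect F_n\|^2\le C\int_{B_{\tau+1}}(\|\vect F_n\|^2+\|\triangle\vect F_n\|^2)$) or directly by interior $H^2$ estimates; the norm equivalence of $L^2_w$ and $L^2$ on compacta because $w$ is continuous and strictly positive is also used correctly. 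The diagonal extraction over $\tau=1,2,\dots$ and the final $\varepsilon/3$ tail argument are standard and sound, so the proposal establishes the lemma as stated.
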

\begin{proof}
  The proof of this claim can be seen as a special case of \cite[Chapter 8, Proof of Theorem 8.6, p.208]{lieb_analysis_2001} and can be found in \cite{deckert_electrodynamic_2010}.
\end{proof}
Of course, one solely needs control on the gradient. However, the Laplace turns out to be more convenient for our later application of this lemma.

\begin{proof}[\textbf{Proof of Theorem \ref{thm:ST has a fixed point} (Existence of Conditional $\text{WF}_\varrho$ Solutions)}]
  Fix $p\in\cal P$.  Given a finite $T>0$, $p\in\cal P$ and $X^\pm\in\widetilde{\cal A}^{3}_w$ claim (i) of Lemma \ref{thm:the map ST} states  for all $F\in\cal F_w^1$
  \begin{align}\label{eqn:a3w radius}
    \|S_T^{p,X^\pm}[p,F]\|_{\cal F_w^1}\leq\|S_T^{p,X^\pm}[p,F]\|_{\cal F_w^3}\leq \constr{ST radius const}^{(3)}(T,\|p\|)=:r.
  \end{align}
  Let $K$ be the closed convex hull of $M:=\{S_T^{p,X^\pm}[F]\;|\; F\in \cal F_w^1\}\subset B_r(0)\subset\cal F_w^1$. By (ii) of Lemma \ref{thm:the map ST} we know that the map $S_T^{p,X^\pm}:K\to K$ is continuous as a map $\cal F_w^1\to\cal F_w^1$. Note that if $M$ were compact so would be $K$ and we could infer by Schauder's fixed point theorem the existence of a fixed point. 
  
  It remains to verify that M is compact. Therefore, let $(G_m)_{m\in\bb N}$ be a sequence in $M$. With the help of Lemma \ref{lem:precompactness} we shall show now that it contains an $\cal F_w^1$ convergent subsequence. By definition there is a sequence $(F_m)_{m\in\bb N}$ in $B_r(0)\subset\cal F_w^1$ such that $G_m:=S_T^{p,X^\pm}[F_m]$, $m\in\bb N$; note that $G_m$ is an element of $D_{w}(\mA^{\infty})$ and therefore also of $\cal F^{n}_{w}$ for any $n\in\bb N$. We define for $m\in\bb N$ the electric and magnetic fields 
  \begin{align*}
      (\vect E_i^{(m)},\vect B_i^{(m)})_{1\leq i\leq N}:=S_T^{p,X^\pm}[F_m].
  \end{align*}
  Recall the definition of the norm of $\cal F_w^n$, cf. Definition \ref{def:Fwn}, for some $(\vect E_i,\vect B_i)_{1\leq i\leq N}=F\in\cal F_w^n$ and $n\in\bb N$
  \begin{align}\label{eqn:hwn norm}
    \|F\|_{\cal F_w^n}^2=\sum_{k=0}^n\|\mA^k F\|_{\cal F_w}^2=\sum_{k=0}^n\sum_{i=1}^N\left(\|(\nabla\wedge)^k\vect E_i\|^2_{L^2_w}+\|(\nabla\wedge)^k\vect B_i\|^2_{L^2_w}\right).
  \end{align}
  Therefore, since $\mA$ on $D_w(\mA)$ is closed, $(G_m)_{m\in\bb N}$ has an $\cal F_w^1$ convergent subsequence if and only if all the sequences $((\nabla\wedge)^k\vect E_i^{(m)})_{m\in\bb N}$, $((\nabla\wedge)^k\vect B_i^{(m)})_{m\in\bb N}$ for $k=0,1$ and $1\leq i\leq N$ have a common convergent subsequence in $L^2_w$.

  To show that this is the case we first provide the bounds needed for condition (i) of Lemma \ref{lem:precompactness}. Estimate (\ref{eqn:a3w radius}) implies that
  \begin{align}\label{eqn:ST first bound}
    \sum_{k=0}^3\sum_{i=1}^N\left(\|(\nabla\wedge)^k\vect E^{(m)}_i\|^2_{L^2_w}+\|(\nabla\wedge)^k\vect B^{(m)}_i\|^2_{L^2_w}\right)=\|G_m\|^2_{\cal F_w^3}\leq r^2
  \end{align}
  for all $m\in\bb N$. Furthermore, by (ii) of Lemma \ref{thm:the map ST} the fields $(\vect E_i^{(m)},\vect B_i^{(m)})_{1\leq i\leq N}$ are the fields of a Maxwell solution at time zero, and hence, by Theorem \ref{thm:maxwell_solutions} fulfill the Maxwell constraints for $(\vect q^0_i,\vect p_i^0)_{1\leq i\leq N}:=p$ which read
  \begin{align*}
    \nabla\cdot\vect E_{i}^{(m)}=4\pi\varrho_{i}(\cdot-\vect q^0_i), && \nabla\cdot\vect B_i^{(m)}=0.
  \end{align*}
  Also by Theorem \ref{thm:maxwell_solutions}, $G_m$ is in $\cal F^{N}$ so that for every $k\in\bb N_0$
  \begin{align*}
    (\nabla\wedge)^{k+2}\vect E_i^{(m)}=4\pi(\nabla\wedge)^{k}\nabla\varrho_{i}(\cdot-\vect q_i^0)-\triangle(\nabla\wedge)^{k}\vect E_i^{(m)}, &&
    (\nabla\wedge)^{k+2}\vect B_i^{(m)}=-\triangle(\nabla\wedge)^{k}\vect B_i^{(m)}.
  \end{align*}
  Estimate (\ref{eqn:ST first bound}) implies for all $m\in\bb N$ that
  \begin{align*}
    &\sum_{k=0}^1\sum_{i=1}^N\left(\|\triangle(\nabla\wedge)^k\vect E^{(m)}_i\|^2_{L^2_w}+\|\triangle(\nabla\wedge)^k\vect B^{(m)}_i\|^2_{L^2_w}\right)\\
    &\leq
    2\sum_{k=0}^1\sum_{i=1}^N\left(\|(\nabla\wedge)^{k+2}\vect E^{(m)}_i\|^2_{L^2_w}+\|(\nabla\wedge)^{k+2}\vect B^{(m)}_i\|^2_{L^2_w}\right)+2\sum_{i=1}^N\|4\pi\nabla\varrho_{i}(\cdot-\vect q_i^0)\|_{L^2_w}\\
    &\leq 2r^2 + 8\pi\sum_{i=1}^N\left(1+\namer{cw}\left\|\vect q_i^0\right\|\right)^{\namer{pw}}\|\nabla\varrho_{i}\|^2_{L^2_w}
  \end{align*}
  where we made use of the properties of the weight $w\in\cal W$. Note that the right-hand does not depend on $m$. Therefore, all the sequences $((\nabla\wedge)^k\vect E_i^{(m)})_{m\in\bb N}$, $(\triangle(\nabla\wedge)^k\vect E_i^{(m)})_{m\in\bb N}$, $((\nabla\wedge)^k\vect B_i^{(m)})_{m\in\bb N}$, $(\triangle(\nabla\wedge)^k\vect B_i^{(m)})_{m\in\bb N}$ for $k=0,1$ and $1\leq i\leq N$ are uniformly bounded in $L^2_w$. This ensures condition (i) of Lemma \ref{lem:precompactness}.

  Second, we need to show that all the sequences $((\nabla\wedge)^k\vect E_i^{(m)})_{m\in\bb N}$, $((\nabla\wedge)^k\vect B_i^{(m)})_{m\in\bb N}$ for $k=0,1$ and $1\leq i\leq N$ decay uniformly at infinity to meet condition (ii) of Lemma \ref{lem:precompactness}. Define \ifarxiv{\linebreak}{}$(\vect E_{i,\pm T}^{(m),\pm},\vect B_{i,\pm T}^{(m),\pm})_{1\leq i\leq N}:=X^\pm_{\pm T}[p,F_m]$ for $m\in\bb N$ and denote the $i$th charge trajectory $t\mapsto(\vect q^{(m)}_{i,t},\vect p^{(m)}_{i,t}):=\pQP M_L[p,F_m](t,0)$ by $(\vect q^{(m)}_i,\vect p^{(m)}_i)$, $1\leq i\leq N$. Using Lemma \ref{thm:the map ST}(ii) and afterwards Theorem \ref{thm:maxwell_solutions} we can write the fields as
  \begin{align*}
    \begin{pmatrix}
      \vect E_i^{(m)}\\
      \vect B_i^{(m)}
    \end{pmatrix}
    &=\frac{1}{2}\sum_\pm M_{\varrho_{i}}[(\vect E^\pm_{i,\pm T},\vect E^\pm_{i,\pm T}),(\vect q^{(m)}_i,\vect p^{(m)}_i)](0,\pm T)\\
    &=\frac{1}{2}\sum_\pm\bigg[\begin{pmatrix}
    \partial_t & \nabla\wedge\\
    -\nabla\wedge & \partial_t
    \end{pmatrix}
    K_{t\mp T}*\begin{pmatrix}
      \vect E^{(m),\pm}_{i,\pm T}\\
      \vect B^{(m),\pm}_{i,\pm T}
    \end{pmatrix}
    + K_{t\mp T}*\begin{pmatrix}
      -4\pi \vect j^{(m)}_{i,\pm T}\\
      0
    \end{pmatrix}
    \\&\quad+ 4\pi \int_{\pm T}^{t} ds\; K_{t-s} * \begin{pmatrix}
      -\nabla & - \partial_s \\
      0 & \nabla\wedge
    \end{pmatrix}
    \begin{pmatrix}
      \rho^{(m)}_{i,s}\\
      \vect j^{(m)}_{i,s}
    \end{pmatrix}
    \bigg]_{t=0}
    =:\terml{uni decay 1}+\terml{uni decay 2}+\terml{uni decay 3}
  \end{align*}
  where $\rho^{(m)}_{i,t}:=\varrho_{i}(\cdot-\vect q^{(m)}_{i,t})$ and $\vect j^{(m)}_{i,t}:=\vect v(\vect p^{(m)}_{i,t})\rho_{i,t}$ for all $t\in\bb R$.

  We shall show that there is a $\tau^*>0$ such that for all $m\in\bb N$ the terms \termr{uni decay 2} and \termr{uni decay 3} are point-wise zero on $B_{\tau^*}^c(0)\subset\bb R^3$. Recalling the computation rules for $K_t$ from Lemma \ref{lem:Greens_function_dalembert} we calculate for term \termr{uni decay 2}
  \begin{align*}
    \|4\pi[K_{\mp T}*\vect j^{(m)}_{i,\pm T}](\vect x)\|_{\bb R^3}\leq 4\pi T\underset{\partial B_T{(\vect x)}}\fint d\sigma(y)\;|\varrho_{i}(\vect y-\vect q^{(m)}_{\pm T})|.
  \end{align*}
  The right-hand side is zero for all $\vect x\in\bb R^3$ such that $\partial B_T(\vect x)\cap\supp\varrho_{i}(\cdot-\vect q_{\pm T})=\emptyset$. Because the charge distributions have compact support there is a $R>0$ such that $\supp\varrho_{i}\subseteq B_R(0)$ for all $1\leq i\leq N$. Now for any $1\leq i\leq N$ and $m\in\bb N$ we have
  \begin{align*}
    \supp\varrho_{i}(\cdot-\vect q^{(m)}_{i,\pm T})\subseteq B_R(\vect q^{(m)}_{i,\pm T})\subseteq B_{R+T}(\vect q_i^0)
  \end{align*}
  since the supremum of the velocities of the charge $\sup_{t\in[-T,T],m\in\bb N}\|\vect v(\vect p_{i,t}^{(m)})\|$ is less than one. Hence, $\partial B_T(\vect x)\cap B_{R+T}(\vect q_i^0)=\emptyset$ for all $\vect x\in B_{\tau}^c(0)$ with $\tau>\|p\|+R+2T$.

  Considering \termr{uni decay 3} we have
  \begin{align}\label{eqn:uni est 2}
    \left\|4\pi \int_{\pm T}^{0} ds\; \left[K_{-s} * \begin{pmatrix}
      -\nabla & - \partial_s \\
      0 & \nabla\wedge
    \end{pmatrix}
    \begin{pmatrix}
      \rho^{(m)}_{i,s}\\
      \vect j^{(m)}_{i,s}
    \end{pmatrix}\right](\vect x)\right\|_{\bb R^6}\leq 4\pi\int_{\pm T}^0ds\;s\underset{\partial B_{|s|}(\vect x)}\fint d\sigma(y)\;\|\vect G(\vect y-\vect q^{(m)}_{s})\|_{\bb R^6}
  \end{align}
  where we used the abbreviation
  \begin{align*}
    \vect G(\vect y-\vect q^{(m)}_{s}):=\begin{pmatrix}
      -\nabla & - \partial_s \\
      0 & \nabla\wedge
    \end{pmatrix}
    \begin{pmatrix}
      \rho^{(m)}_{i,s}\\
      \vect j^{(m)}_{i,s}
    \end{pmatrix}(\vect y)
  \end{align*}
  and the computation rules for $K_t$ given in Lemma \ref{lem:Greens_function_dalembert}. As $\supp\vect G\subseteq\supp\varrho_{i}\subseteq B_R(0)$, the right-hand side of (\ref{eqn:uni est 2}) is zero for all $\vect x\in\bb R$ such that
  \begin{align*}
    \bigcup_{s\in[-T,T]}\left[\partial B_{|s|}(\vect x)\cap B_{R}(\vect q^{(m)}_{i,s})\right]=\emptyset.
  \end{align*}
Now the left-hand side above is a subset of
  \begin{align*}
    \cup_{s\in[-T,T]}\partial B_{|s|}(\vect x)\bigcap \cup_{s\in[-T,T]}B_{R}(\vect q^{(m)}_{i,s})
    \qquad \subseteq \qquad B_T(\vect x)\cap B_{R+T}(\vect q^0_i)
  \end{align*}
  which equals the empty set for all $\vect x\in B_{\tau}^c(0)$ with $\tau>\|p\|+R+2T$.

  Hence, setting $\tau^*:=\|p\|+R+2T$ we conclude that for all $\tau>\tau^*$ the terms \termr{uni decay 2} and \termr{uni decay 3} and all their derivatives are zero on $B_\tau^c(0)\subset\bb R^3$. That means in order to show that all the sequences $((\nabla\wedge)^k\vect E_i^{(m)})_{m\in\bb N}$, $((\nabla\wedge)^k\vect B_i^{(m)})_{m\in\bb N}$ for $k=0,1$ and $1\leq i\leq N$ decay uniformly at spatial infinity, it suffices to show
  \begin{align}\label{eqn:uniform estimate with e and b}
    \lim_{\tau\to\infty}\sup_{m\in\bb N}\sum_{k=0}^1\sum_{i=1}^N\left(\|(\nabla\wedge)^k\vect e_i^{(m)}\|_{L^2_w(B_\tau^c(0))}+\|(\nabla\wedge)^k\vect b_i^{(m)}\|_{L^2_w(B_\tau^c(0))}\right)=0.
  \end{align}
  for
  \begin{align*}
      \begin{pmatrix}
      \vect e^{(m)}_i\\
      \vect b^{(m)}_i
    \end{pmatrix}:=\termr{uni decay 1}=\begin{pmatrix}
    \partial_t & \nabla\wedge\\
    -\nabla\wedge & \partial_t
    \end{pmatrix}
    K_{t\mp T}*\begin{pmatrix}
      \vect E^{(m),\pm}_{i,\pm T}\\
      \vect B^{(m),\pm}_{i,\pm T}
    \end{pmatrix}\bigg|_{t=0}
  \end{align*}
  for $1\leq i\leq N$. Let $\vect F\in\cal C^\infty(\bb R^3,\bb R^3)$ and $\tau>0$. By computation rules for $K_t$ given in Lemma \ref{lem:Greens_function_dalembert}  we get
  \begin{align*}
    &\|\nabla\wedge K_{\mp T}*\vect F\|_{L^2_w(B_{\tau+T}^c(0))}= \|K_{\mp T}*\nabla\wedge\vect F\|_{L^2_w(B_{\tau+T}^c(0))}\leq \left\|T\underset{\partial B_T(0)}\fint d\sigma(y)\;\nabla\wedge\vect F(\cdot-\vect y)\right\|_{L^2_w(B_{\tau+T}^c(0))}\\
    &\leq T\underset{\partial B_T(0)}\fint d\sigma(y)\;\|\nabla\wedge\vect F(\cdot-\vect y)\|_{L^2_w(B_{\tau+T}^c(0))}\leq T\sup_{\vect y\in\partial B_T(0)}\|\nabla\wedge\vect F(\cdot-\vect y)\|_{L^2_w(B_{\tau+T}^c(0))}\\
    &\leq T\sup_{\vect y\in\partial B_T(0)}(1+\namer{cw}\|\vect y\|)^{\frac{\namer{pw}}{2}}\|\nabla\wedge\vect F(\cdot-\vect y)\|_{L^2_w(B_{\tau+T}^c(0))}
    \leq T(1+\namer{cw}T)^{\frac{\namer{pw}}{2}}\|\nabla\wedge\vect F\|_{L^2_w(B_{\tau}^c(0))},
  \end{align*}
  and
  \begin{align*}
    &\|\partial_t K_{t\mp T}|_{t=0}*\vect F\|_{L^2_w(B_{\tau+T}^c(0))}=\left\|\underset{\partial B_T(0)}\fint d\sigma(y)\;\vect F(\cdot-\vect y)+\frac{T^2}{3}\underset{B_T(0)}\fint d^3y\; \triangle\vect F(\cdot-\vect y)\right\|_{L^2_w(B_{\tau+T}^c(0))}\\
    &\leq \underset{\partial B_T(0)}\fint d\sigma(y)\;\|\vect F(\cdot-\vect y)\|_{L^2_w(B_{\tau+T}^c(0))}+\frac{T^2}{3}\underset{B_T(0)}\fint d^3y\; \|\triangle\vect F(\cdot-\vect y)\|_{L^2_w(B_{\tau+T}^c(0))}\\
    &\leq (1+\namer{cw}T)^{\frac{\namer{pw}}{2}}\|\vect F\|_{L^2_w(B_{\tau}^c(0))}+\frac{T^2}{3}(1+\namer{cw}T)^{\frac{\namer{pw}}{2}} \|\triangle\vect F\|_{L^2_w(B_{\tau}^c(0))}.
  \end{align*}
  Substituting $\vect F$ with $(\nabla\wedge)^k\vect E^{(m),\pm}_{i,\pm T}$ and $(\nabla\wedge)^k\vect B^{(m),\pm}_{i,\pm T}$ for $k=0,1$ and $1\leq i\leq N$ in the two estimates above yields
  \begin{align}\label{eqn:little e b}
    \begin{split}
      &\sum_{k=0}^1\sum_{i=1}^N\left(\|(\nabla\wedge)^k\vect e_i^{(m)}\|_{L^2_w(B_{\tau+T}^c(0))}+\|(\nabla\wedge)^k\vect b_i^{(m)}\|_{L^2_w(B_{\tau+T}^c(0))}\right)\\
      &\leq (1+\namer{cw}T)^{\frac{\namer{pw}}{2}}
      \bigg(
        \|(\nabla\wedge)^{k}\vect E^{(m),\pm}_{i,\pm T}\|_{L^2_w(B_{\tau}^c(0))}+\|(\nabla\wedge)^{k}\vect B^{(m),\pm}_{i,\pm T}\|_{L^2_w(B_{\tau}^c(0))}+\\
          &\hskip2.2cm+\frac{T^2}{3}\left(\|(\nabla\wedge)^{k}\triangle\vect E^{(m),\pm}_{i,\pm T}\|_{L^2_w(B_{\tau}^c(0))}+\|(\nabla\wedge)^{k}\triangle\vect B^{(m),\pm}_{i,\pm T}\|_{L^2_w(B_{\tau}^c(0))}\right)+\\
            &\hskip2.2cm+ T\left(\|(\nabla\wedge)^{k+1}\vect E^{(m),\pm}_{i,\pm T}\|_{L^2_w(B_{\tau}^c(0))}+\|(\nabla\wedge)^{k+1}\vect B^{(m),\pm}_{i,\pm T}\|_{L^2_w(B_{\tau}^c(0))}\right)
      \bigg).
    \end{split}
  \end{align}
  Now the boundary fields $X^\pm$ lie in $\widetilde{\cal A}^{3}_w$ which means that the fields $\vect E^{(m),\pm}_{i,\pm T}$ and $\vect B^{(m),\pm}_{i,\pm T}$ for $1\leq i\leq N$ fulfill the Maxwell constraints so that
  \begin{align*}
    \|(\nabla\wedge)^k\triangle\vect E^{(m),\pm}_{i,\pm T}\|_{L^2_w(B_{\tau}^c(0))} &= \|(\nabla\wedge)^{k+2}\vect E^{(m),\pm}_{i,\pm T}\|_{L^2_w(B_{\tau}^c(0))}+
    4\pi\|(\nabla\wedge)^{k}\nabla\varrho_{i}(\cdot-\vect q^{(m)}_{i,\pm T}\|_{L^2_w(B_{\tau}^c(0))}
  \end{align*}
  and
  \begin{align*}
    \|(\nabla\wedge)^k\triangle\vect B^{(m),\pm}_{i,\pm T}\|_{L^2_w(B_{\tau}^c(0))} &= \|(\nabla\wedge)^{k+2}\vect B^{(m),\pm}_{i,\pm T}\|_{L^2_w(B_{\tau}^c(0))}.
  \end{align*}
  Applying (iv) of Definition \ref{def:boundary fields} yields
  \begin{align*}
    \lim_{\tau\to\infty}\sup_{m\in\bb N}\sum_{j=0}^{3}\sum_{i=1}^N\left\|(\nabla\wedge)^{j}\vect E^{(m),\pm}_{i,\pm T}\|^2_{L^2_w(B_{\tau}^c(0))}+\|(\nabla\wedge)^{j}\vect B^{(m),\pm}_{i,\pm T}\|^2_{L^2_w(B_{\tau}^c(0))}\right)=\lim_{\tau\to\infty}\sup_{m\in\bb N}\|\chi^\pm_{\pm T}[p,F_m]\|^2_{\cal F_w^n(B_{\tau}^{c}(0)}=0
  \end{align*}
  because $F_m\in B_r(0)\subset\cal F_w^1$ for all $m\in\bb N$. Hence, (\ref{eqn:uniform estimate with e and b}) holds which, as we have shown, implies the uniform decay at spatial infinity of all the sequences $((\nabla\wedge)^k\vect E_i^{(m)})_{m\in\bb N}$, $((\nabla\wedge)^k\vect B_i^{(m)})_{m\in\bb N}$ for $k=0,1$ and $1\leq i\leq N$. This ensures condition (ii) of Lemma \ref{lem:precompactness}.

  Using the abbreviations $\vect E_i^{(m,k)}:=(\nabla\wedge)^k\vect E_i^{(m)}$ and  $\vect B_i^{(m,k)}:=(\nabla\wedge)^k\vect B_i^{(m)}$ for  $1\leq i\leq N$, $k=0,1$, and $m\in\bb N$ we summarize: The sequences $(\vect E_i^{(m,k)})_{m\in\bb N}$, $(\vect B_i^{(m,k)})_{m\in\bb N}$, $(\triangle\vect E_i^{(m,k)})_{m\in\bb N}$ and $(\triangle\vect B_i^{(m,k)})_{m\in\bb N}$ are all uniformly bounded in $L^2_w$ and decay uniformly at spatial infinity.

  Successively application of Lemma \ref{lem:precompactness} produces the common $\cal F_w^1$ convergent subsequence: Fix $1\leq i\leq N$. Let $(\vect E_i^{(m^0_l,0)})_{l\in\bb N}$ be the $L^2_w$ convergent subsequence of $(\vect E_i^{(m,0)})_{m\in\bb N}$ and $(\vect E_i^{(m^1_l,1)})_{l\in\bb N}$ the $L^2_w$ convergent subsequence of $(\vect E_i^{(m^0_l,1)})_{l\in\bb N}$. In the same way we proceed with the other indices $1\leq i\leq N$ and the magnetic fields, every time choosing a further subsequence of the previous one. Let us denote the final subsequence by $(m_l)_{l\in\bb N}\subset\bb N$. Then we have constructed sequences $(G_{m_l})_{l\in\bb N}$ as well as $(\mA G_{m_l})_{l\in\bb N}$ which are convergent in $\cal F_w^{0}$ and implies the convergence in $\cal F_w^1$. As $(G_m)_{m\in\bb N}$ was arbitrary, we conclude that every sequence in $M$ has an $\cal F_w^1$ convergent subsequence and therefore $M$ is compact which had to be shown. This concludes the proof.
\end{proof}

Having established the existence of a fixed point $F$ for any finite $T>0$, $p\in\cal P$ and $(X^\pm_{i,\pm T})_{1\leq i\leq N}=X^\pm\in\widetilde{\cal A}^3_w$, claim (iii) of Theorem \ref{thm:the map ST} states that the charge trajectories $t\mapsto(\vect q_{i,t},\vect p_{i,t})_{1\leq i\leq N}:=\pQP M_L[p,F](t,0)$ are in $\cal T^{p,X^\pm}_T$, i.e. that they are time-like charge trajectories that solve the conditional $\text{WF}_\varrho$ equations (\ref{eqn:bWF equation written out})-(\ref{eqn:WF with boundary fields}) for all times $t\in\bb R$. As discussed in the introduction it is interesting to verify that among those solutions we see truly advanced and delayed interactions between the charges.  This is the content of Theorem \ref{thm:existence of L} which we prove next. We introduce:

%\begin{align}\label{eqn:bWF field equation}
%  (\vect E^\WF_{i,t},
%      \vect B^\WF_{i,t})&=\frac{1}{2}\sum_\pm M[X^\pm_{i,\pm T},(\vect q_i,\vect p_i)](t,\pm T),
%\end{align}
%$1\leq i\leq N$, and initial conditions (\ref{eqn:newtonian cauchy}).
%
%However, how much do such fixed points tell us about the true $\text{WF}_\varrho$ solution? With regard to this question we shall show in the following that under conditions on the Newtonian Cauchy data $p$ and on the charge densities $\varrho$, $1\leq i\leq N$, one always finds a fixed point $F$ whose corresponding charge trajectories fulfill the $\text{WF}_\varrho$ equations, i.e. the Lorentz force law (\ref{eqn:WF equation written out}) together with the fields (\ref{eqn:WF fields def m})
%\begin{align}\label{eqn:WF field equation}
%    (\vect E^\WF_{i,t},
%      \vect B^\WF_{i,t})
%    = \frac{1}{2}\sum_\pm M[\vect q_i,\vect p_i](t,\pm\infty),
%\end{align}
%$1\leq i\leq N$, for $t$ in a time interval $[-L,L]$ for some $L>0$. Note that for this to be true the right-hand side of equation (\ref{eqn:bWF field equation}) does not have to agree with the right-hand side of (\ref{eqn:WF field equation}) everywhere on $\bb R^3$ but only within the tubes of radius $R$ around the position of the $j\neq i$ charge trajectories as only these values enter the Lorentz force law (\ref{eqn:WF equation written out}).
\begin{definition}[Partial $\text{WF}_\varrho$ solutions]
  For $p\in\cal P$ we define $\cal T_\WF^L$ to be the set of time-like charge trajectories $(\vect q_i,\vect p_i)_{1\leq i\leq N}\in\cal T_{\text{\clock}}^N$ which solve the $\text{WF}_\varrho$ equations (\ref{eqn:WF equation written out})-(\ref{eqn:WF fields def}) for times $t\in[-L,L]$ and have initial conditions $(\vect q_{i,t},\vect p_{i,t})_{1\leq i\leq N}|_{t=0}=p$. We shall call every element of $\cal T^L_\WF$ a partial $\text{WF}_{\varrho}$ solution on $[-L,L]$ for initial value $p$.
\end{definition}
In order to show that a conditional $\text{WF}_\varrho$ solution $(\vect q_i,\vect p_i)_{1\leq i\leq N}\in\cal T^{p,X^\pm}_T$ is also a partial $\text{WF}_\varrho$ solution we have to regard the difference between the WF fields produce by them:
\begin{align}\label{eqn:bwf wf diff}
  \begin{split}
    &M_{\varrho_{i}}[X^\pm_{i,\pm T},(\vect q_i,\vect p_i)](t,\pm T)-M_{\varrho_{i}}[\vect q_i,\vect p_i](t,\pm\infty)\\
    &=
      \begin{pmatrix}
      \partial_t & \nabla\wedge\\
      -\nabla\wedge & \partial_t
      \end{pmatrix}
      K_{t\mp T}*X^\pm_{i,\pm T}
      + K_{t\mp T}*\begin{pmatrix}
        -4\pi \vect v(\vect p_{i,\pm T})\varrho_{i}(\cdot-\vect q_{i,\pm T})\\
        0
      \end{pmatrix}\\
          &\quad- 4\pi \int_{\pm\infty}^{\pm T} ds\; K_{t-s} * \begin{pmatrix}
      -\nabla & - \partial_s \\
      0 & \nabla\wedge
    \end{pmatrix}
    \begin{pmatrix}
      \varrho_{i}(\cdot-\vect q_{i,s})\\
      \vect v(\vect p_{i,s})\varrho_{i}(\cdot-\vect q_{i,s})
    \end{pmatrix}.
  \end{split}
\end{align}
The equality holds by Definition \ref{def:Maxwell time evolution}, Theorem \ref{thm:maxwell_solutions} and (\ref{eqn:LW_fields}) in Theorem \ref{thm:LWfields}. Let $R>0$ be the smallest radius such that $\supp\varrho_{i}\subseteq B_{R}(0)$ for all $1\leq i\leq N$. Whenever there is an $L>0$ such that this difference is zero at least for times $t\in[-L,L]$ and in tubes of radius $R$ around the charge trajectories $(\vect q_i,\vect p_i)_{1\leq i\leq N}$ of a conditional $\text{WF}_\varrho$ solution, these trajectories form already a partial $\text{WF}_{\varrho}$ solution in $\cal T_\WF^L$.

 Suppose that the boundary fields $X^{\pm}_{i,\pm T}$ are given by the advanced and retarded Li\'enard-Wiechert fields of asymptotes which continue the conditional $\text{WF}_\varrho$ solution $(\vect q_i,\vect p_i)_{1\leq i\leq N}$ for times $|t|>T$ into the future and past, respectively. Only by looking at the geometry of the interaction, see Figure \ref{fig:wf diamond}, one may expect that the difference (\ref{eqn:bwf wf diff}) is zero in the intersection of all forward and backward light-cones of the space-time points $(-T,\vect q_{k,-T})$ and $(+T,\vect q_{k,+T})$ for all $1\leq k\leq N$, respectively. However, this intersection might be empty, either if $T$ is chosen too small compared to the maximal distance of the charges at time zero, or if the charges approach the speed of light sufficiently fast; see Figure \ref{fig:wf extreme} for an extreme case. As discussed, the properties of known $\text{WF}_\varrho$ solutions suggest that the latter case will never occur as their velocities are expected to be uniformly bounded away from the speed of light. In the particular case of the Coulomb boundary fields $C$, cf. Definition \ref{def:coulomb field}, we shall now show that, even without having such a uniform velocity estimate, for fixed $T>0$ there is always a suitable choice of Newtonian Cauchy data $p\in\cal P$ and non-trivial charge densities $\varrho_{i}\in\cal C^\infty_c$ such that partial $\text{WF}_\varrho$ solutions exist.

\begin{figure}[ht]
  \begin{center}
    \subfigure[\label{fig:wf diamond}]{
      \if\arxiv 1
        \includegraphics[scale=1]{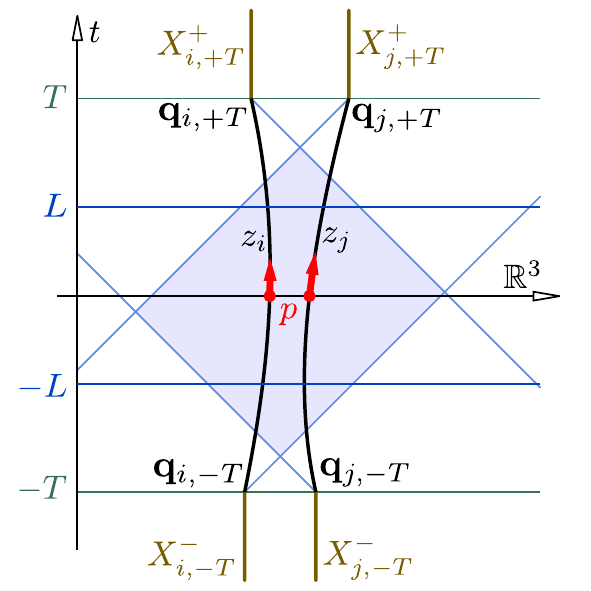}
      \else
        \includegraphics[scale=.7]{figures/wf_diamond}
      \fi
    }
    \subfigure[\label{fig:shadows}]{
      \if\arxiv 1
        \includegraphics[scale=1]{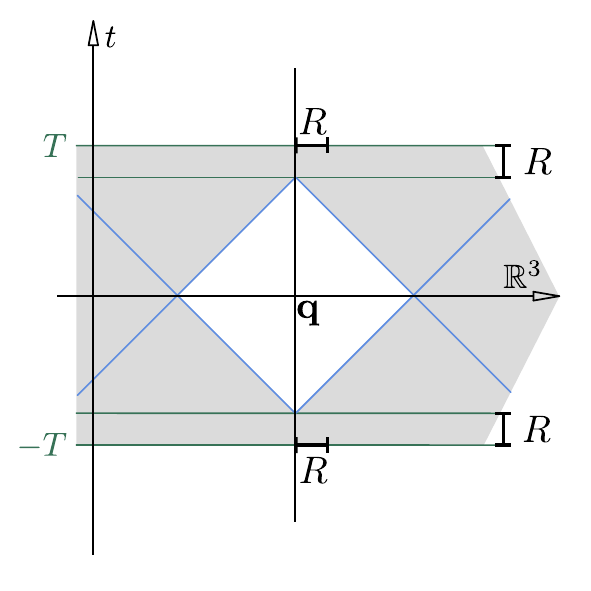}
      \else
        \includegraphics[scale=.7]{figures/shadows}
       \fi
    }
    \caption{(a) Choosing Li\'enard-Wiechert fields for $X^{\pm}_{i,\pm T}$, $1\leq i\leq N$, the difference (\ref{eqn:bwf wf diff}) between the conditional $\text{WF}_\varrho$ and partial $\text{WF}_\varrho$ solution vanishes inside the shaded (sheared diamond shaped) space-time region, which is given by the intersection of the forward and backward light-cones of $\vect q_{k,-T}$ and $\vect q_{k,+T}$ for $1\leq k\leq N$, respectively. (b) The non-shaded region visualizes the set of space-time points $(t,\vect x)$ such that $t\in(-T+R,T-R)$ and $\vect x\in B_{|t\mp T|-R}(\vect q)$ which is used in Lemma \ref{lem:shadows of boundary fields}.}
  \end{center}
\end{figure}

Observe that the difference (\ref{eqn:bwf wf diff}) is given by the free Maxwell time evolution of the boundary fields which at time $\pm T$ carry divergences at $\vect q_{k,\pm T}$ due to the Maxwell constraints. We shall now exploit the remarkable feature of the free Maxwell time evolution that justifies the discussed geometric picture in Figure \ref{fig:wf diamond}: Any initial field with a non-vanishing divergence will be evolved by the free Maxwell time evolution in a way that the forward and backward light-cones of the support of the divergence is cleared to zero. The next Lemma proves this explicitly in the case of the Coulomb field. Using Lorentz boosts the presented proof can easily be generalized to Coulomb fields of a moving charge with constant velocity, and with a bit more work it can be generalized further to Li\'enard-Wiechert fields of any strictly time-like charge trajectory.

\begin{lemma}[Shadows of the Boundary Fields and $\text{WF}_\varrho$ fields]\label{lem:shadows of boundary fields}
  Let $\vect q,\vect v\in\bb R^3$, $\varrho\in\cal C^\infty_c(\bb R^3,\bb R)$ such that $\supp\varrho\subseteq B_R(0)$ for some finite $R>0$, and let $t\mapsto(\vect q_{t},\vect p_{t})\in\cal T_{\text{\clock}}^N$ such that $\vect q_{t}|_{t=0}=\vect q$. Furthermore,$\vect E^C$ be the Coulomb field of a charge at rest at the origin
  \begin{align*}
    \vect E^C:=\int d^3z\; \varrho(\cdot-\vect z)\frac{\vect z}{\|\vect z\|^3}
  \end{align*}
  Then for $T>R$ the expressions
  \begin{align}\label{eqn:coulomb diff}
    \left[\begin{pmatrix}
    \partial_t & \nabla\wedge\\
    -\nabla\wedge & \partial_t
    \end{pmatrix}
    K_{t\mp T}*\begin{pmatrix}
      \vect E^C(\cdot-\vect q)\\
      0
    \end{pmatrix}
    + K_{t\mp T}*\begin{pmatrix}
      -4\pi \vect v\varrho(\cdot-\vect q)\\
      0
    \end{pmatrix}\right](\vect x)
  \end{align}
  and
	\begin{align}\label{eqn:wf relicts}
	  \int_{\pm\infty}^{\pm T} ds\; K_{t-s} * \begin{pmatrix}
      -\nabla & - \partial_s \\
      0 & \nabla\wedge
    \end{pmatrix}
    \begin{pmatrix}
      \varrho(\cdot-\vect q_{s})\\
      \vect v(\vect p_{s})\varrho(\cdot-\vect q_{s})
    \end{pmatrix}\;(\vect x)
	\end{align}
  equal zero
  for $t\in(-T+R,T-R)$ and $\vect x\in B_{|t\mp T|-R}(\vect q)$; see Figure \ref{fig:shadows}.
\end{lemma}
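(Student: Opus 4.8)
The plan is to treat the two expressions separately and show each vanishes on the prescribed spacetime region by a support argument for the wave propagator $K_t$. Recall from Lemma \ref{lem:Greens_function_dalembert} that for $f\in\cal C^\infty(\bb R^3)$ one has $K_{t\mp T}*f(\vect x)$ expressible through averages over $\partial B_{|t\mp T|}(\vect x)$ and over $B_{|t\mp T|}(\vect x)$, and likewise $\partial_t K_{t\mp T}*f$. Hence any such term, evaluated at $\vect x$, depends on $f$ only through its values on the closed ball $\overline{B_{|t\mp T|}(\vect x)}$. The whole point is purely geometric: if $f$ is supported in some ball $B_R(\vect c)$ and $\overline{B_{|t\mp T|}(\vect x)}\cap B_R(\vect c)=\emptyset$, then every term built from $K_{t\mp T}*f$, $\partial_tK_{t\mp T}*f$, $\nabla\wedge K_{t\mp T}*f$ (spatial derivatives commute with $K_{t\mp T}*$, again by Lemma \ref{lem:Greens_function_dalembert}) vanishes at $\vect x$.

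First I would handle (\ref{eqn:coulomb diff}). Both pieces there are of the form $(\text{operator})\,K_{t\mp T}*f$ with $f$ equal to either $\vect E^C(\cdot-\vect q)$ or $\vect v\varrho(\cdot-\vect q)$. The second has support in $B_R(\vect q)$ directly. The first does \emph{not} have compact support, but its divergence does: since $\nabla\cdot\vect E^C=4\pi\varrho$ (a Maxwell constraint, Theorem \ref{thm:maxwell_solutions}), we have $\nabla\cdot[\vect E^C(\cdot-\vect q)]=4\pi\varrho(\cdot-\vect q)$, supported in $B_R(\vect q)$. The combined integrand in (\ref{eqn:coulomb diff}) is designed so that it is actually the free Maxwell time evolution applied to the field $(\vect E^C(\cdot-\vect q),0)$ restricted to its divergence part; concretely, by Corollary \ref{cor:connection m wt} one can rewrite (\ref{eqn:coulomb diff}) in the form $-\int_0^{t\mp T}ds\;K_{t\mp T-s}*(\nabla\nabla\cdot\vect E^C(\cdot-\vect q),0)$ after a partial integration (using $K_{t\mp T}*\vect v\varrho(\cdot-\vect q)$ to absorb the current term, exactly as in the derivation preceding Theorem \ref{thm:maxwell_solutions}). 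Now the integrand has support (in the $\vect y$-variable) in $B_R(\vect q)$, and $s$ ranges over $[0,t\mp T]$, so the relevant averaging balls are $B_{|t\mp T-s|}(\vect x)$ with $|t\mp T-s|\le|t\mp T|$. If $\|\vect x-\vect q\|\le |t\mp T|-R$ — wait, that is the wrong direction — one needs $B_{|t\mp T-s|}(\vect x)\cap B_R(\vect q)=\emptyset$, i.e. $\|\vect x-\vect q\| > |t\mp T-s|+R$, which fails for $\vect x$ close to $\vect q$. So in fact the correct statement is the reverse: for $\vect x\in B_{|t\mp T|-R}(\vect q)$ the ball $B_{|t\mp T|}(\vect x)$ \emph{contains} $B_R(\vect q)$, and one must instead invoke that $K_{t\mp T}*f$ vanishes in the \emph{interior} of the light-cone emanating backward from $\supp f$ — precisely the content of Kirchhoff's formula giving a pure surface (sphere) contribution that has swept past. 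I would make this rigorous by writing each term via the spherical mean over $\partial B_{|t\mp T-s|}(\vect x)$ plus a solid-ball correction $\propto\triangle f$, and noting that for $\vect x\in B_{|t\mp T|-R}(\vect q)$ and all $s\in[0,t\mp T]$ with $t\in(-T+R,T-R)$, either the sphere $\partial B_{|t\mp T-s|}(\vect x)$ misses $B_R(\vect q)$ entirely (large $|t\mp T-s|$) or — and this is the subtle case — after the partial integration the solid-ball terms telescope and cancel. The cleanest route is: the difference (\ref{eqn:bwf wf diff}) solves the homogeneous wave equation componentwise away from the light-cones of the charge (by Lemma \ref{lem:Greens_function_dalembert}, $\square K_t*f=0$), it has zero Cauchy data on the spacelike slice at $t=\pm T$ outside $B_R(\vect q_{\pm T})$, and by finite propagation speed / domain-of-dependence for the wave equation it therefore vanishes in the backward (resp. forward) domain of dependence of $B_R(\vect q_{\pm T})^c$ at time $\pm T$, which is exactly $\{t\in(-T+R,T-R),\ \vect x\in B_{|t\mp T|-R}(\vect q)\}$.

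Second I would handle (\ref{eqn:wf relicts}). Here the $s$-integral runs over $s\in[\pm T,\pm\infty)$, i.e. over the causal past ($-$ case) or future ($+$ case) of time $\pm T$, and the integrand in $\vect y$ is supported in $B_R(\vect q_s)$. For the retarded ($-$) case and $t<T$, the relevant propagator $K_{t-s}$ with $s\le -T$ forces the sphere $\partial B_{|t-s|}(\vect x)$ to have radius $|t-s|\ge t+T$; since the charge trajectory is timelike, $\|\vect q_s-\vect q\|\le |s|$ for... actually $\|\vect q_s-\vect q_0\|\le\int_0^{|s|}\|\vect v\|\,d\tau<|s|$, hence $\supp\varrho(\cdot-\vect q_s)\subseteq B_{R+|s|}(\vect q)$, and one checks $B_{|t-s|}(\vect x)\cap B_{R+|s|}(\vect q)=\emptyset$ whenever $\|\vect x-\vect q\|<|t-s|-R-|s|$. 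For $s\le -T$ and $t>-T+R$ one has $|t-s|-|s|=t-s-|s|=t+T\ge$ (using $s\le -T$, so $-s\ge T$, $|s|=-s$, $|t-s|=t-s$) $=t+T > R$ provided $t>-T+R$... I would carry out this elementary interval arithmetic carefully and symmetrically for the $+$ case to get exactly $\vect x\in B_{|t\mp T|-R}(\vect q)$. The main obstacle — and the place needing genuine care rather than bookkeeping — is the case in (\ref{eqn:coulomb diff}) where the averaging spheres do intersect $\supp\varrho(\cdot-\vect q)$: one must show the Coulomb-field surface term and the current term conspire to cancel, which is cleanest to argue not term-by-term but by recognizing the whole of (\ref{eqn:bwf wf diff}) as a homogeneous-wave-equation solution with compactly-supported (in the complement sense) Cauchy data and then quoting finite propagation speed; I would therefore lead with the domain-of-dependence argument and use the explicit $K_t$ formulas only to verify the Cauchy data and the homogeneity of the equation off the charge world-lines.
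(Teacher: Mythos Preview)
Your treatment of (\ref{eqn:wf relicts}) by a support argument is correct and is exactly what the paper does. The real issue is (\ref{eqn:coulomb diff}), and there your domain-of-dependence argument has a genuine gap --- in fact two.

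First, the Cauchy data at $t=\pm T$ is \emph{not} zero outside $B_R(\vect q)$. Evaluating (\ref{eqn:coulomb diff}) at $t=\pm T$ using $K_0=0$, $\partial_t K_t|_{t=0}=\id$ gives the value $(\vect E^C(\cdot-\vect q),0)$, which is nonzero everywhere. (The same holds for the full difference (\ref{eqn:bwf wf diff}): at $t=\pm T$ it is the Coulomb field minus the Li\'enard--Wiechert field of the actual moving trajectory, and these do not agree outside $B_R(\vect q)$.)

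Second, even if the Cauchy data did vanish on $B_R(\vect q)^c$, the domain of dependence of $B_R(\vect q)^c$ is $\{(t,\vect x):\|\vect x-\vect q\|>|t\mp T|+R\}$, i.e.\ the \emph{exterior} of the light cone of $B_R(\vect q)$. The lemma asserts vanishing in the \emph{interior} $\{\|\vect x-\vect q\|<|t\mp T|-R\}$, which is precisely the region that \emph{is} causally connected to $B_R(\vect q)$. Finite propagation speed says nothing there.

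What is actually going on is a special ``clearing'' property of the wave-propagated Coulomb field: because $\nabla\wedge\vect E^C=0$ the only nontrivial piece is $\partial_t K_{t\mp T}*\vect E^C(\cdot-\vect q)$, and the paper computes this explicitly. Using $\triangle\vect E^C=\nabla(\nabla\cdot\vect E^C)=4\pi\nabla\varrho$ and Stokes' theorem, the spherical mean of $\vect E^C$ over $\partial B_{|t\mp T|}(\vect x)$ reduces to $\int_{B^c_{|t\mp T|}(0)}d^3y\,\varrho(\vect x-\vect y-\vect q)\frac{\vect y}{\|\vect y\|^3}$ --- the Coulomb integral restricted to the \emph{exterior} ball --- plus a surface term; both vanish once $B_R(\vect x)\subset B_{|t\mp T|}(\vect q)$. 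Your first instinct (rewriting via Corollary \ref{cor:connection m wt}) was actually closer: that route gives (\ref{eqn:coulomb diff}) $=(\vect E^C(\cdot-\vect q),0)+4\pi\int_{\pm T}^t ds\,K_{t-s}*(\nabla\varrho(\cdot-\vect q),0)+\text{(current term)}$, and the point you missed is that this integral is \emph{not} zero for $\vect x$ near $\vect q$ --- it exactly equals $-\vect E^C(\vect x-\vect q)$ there, since the full Li\'enard--Wiechert identity $\vect E^C=4\pi\int_{\pm\infty}^t K_{t-s}*(-\nabla\varrho)\,ds$ together with the vanishing of the $|s|>T$ tail (by the same support argument as for (\ref{eqn:wf relicts})) forces the cancellation. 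Either way, an explicit computation using the Coulomb structure is unavoidable; a soft finite-speed argument cannot reach the interior region.
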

\begin{proof}
  Let $t\in(-T+R,T-R)$. With regard to the second term in (\ref{eqn:coulomb diff}) we compute
  \begin{align*}
    \left\|-4\pi \vect v\left[K_{t\mp T}*\varrho(\cdot-\vect q)\right](\vect x)\right\|&=4\pi\|\vect v\|\left|(t\mp T)\underset{\partial B_{|t\mp T|}(0)}\fint d\sigma(y)\;\varrho(\vect x-\vect y-\vect q)\right|\\
    &\leq4\pi\|\vect v\||t\mp T|\sup|\varrho|\underset{\partial B_{|t\mp T|}(\vect q)}\fint d\sigma(y)\;\charf{B_R(\vect x)}(\vect y)
  \end{align*}
  where we used Definition \ref{def:greens_dalembert} for $K_{t\mp T}$. Now $\vect x\in B_{|t\mp T|-R}(\vect q)$ implies $\partial B_{|t\mp T|}(\vect q)\cap B_R(\vect x)=\emptyset$, and hence, that the term above is zero.

  With regard to the first term we note that the only non-zero contribution is $\partial_t K_{t\mp T}*\vect E^C_i$ since $\nabla\wedge\vect E^C=0$. Lemma \ref{lem:Greens_function_dalembert} and in particular equation (\ref{eqn:dtKt_f}) give
  \begin{align}\label{eqn:coulomb derivative}
    &\left[\partial_t K_{t\mp T}*\vect E^C(\cdot-\vect q)\right](\vect x)=\underset{\partial{B_{|t\mp T|}}(0)}\fint d\sigma(y)\; \vect E^C(\vect x-\vect y-\vect q)+(t\mp T)\partial_t\underset{\partial{B_{|t\mp T|}}(0)}\fint d\sigma(y)\; \vect E^C(\vect x-\vect y-\vect q)\\
    &=\underset{\partial{B_{|t\mp T|}}(0)}\fint d\sigma(y)\; \vect E^C(\vect x-\vect y-\vect q) + \frac{(t\mp T)^2}{3}\underset{{B_{|t\mp T|}}(0)}\fint d^3y\; \triangle \vect E^C(\vect x-\vect y)=:\terml{coulomb 1}+\terml{coulomb 2}.
  \end{align}
  Using Lebesgue's theorem we start with
  \begin{align*}
    \termr{coulomb 1}&=\vect E^C(\vect x-\vect q)+\int_0^{|t\mp T|}ds\;\partial_s \underset{\partial B_s(0)}\fint d\sigma(y)\; \vect E^C(\vect x-\vect y-\vect q)\\
    &=\vect E^C(\vect x-\vect q)+\int_0^{|t\mp T|}dr\;\frac{r}{3} \underset{B_r(0)}\fint d^3y\; \triangle\vect E^C(\vect x-\vect y-\vect q).
  \end{align*}
  Furthermore, we know that $0=(\nabla\wedge)^2\vect E^C=\nabla(\nabla\cdot\vect E^C)-\triangle\vect E^C$ and $\nabla\cdot\vect E^C=4\pi\varrho$. So we continue the computation with
  \begin{align*}
    \termr{coulomb 1}&=\vect E^C(\vect x-\vect q)+\int_0^{|t\mp T|}dr\;\frac{r}{3} \underset{B_r(0)}\fint d^3y\; 4\pi\nabla\varrho(\vect x-\vect y-\vect q)\\
    &=\vect E^C(\vect x-\vect q)-\int_0^{|t\mp T|}dr\;\frac{1}{r^2} \underset{\partial B_r(0)}\int d\sigma(y)\; \frac{\vect y}{r}\varrho(\vect x-\vect y-\vect q)
  \end{align*}
  where we have used (\ref{eqn:coulomb derivative}) to evaluate the derivative and in addition used Stoke's Theorem. Note that the minus sign in the last line is due to the fact that $\nabla$ acts on $\vect x$ and not on $\vect y$. Inserting the definition of the Coulomb field $\vect E^C$ we finally get
  \begin{align*}
    \termr{coulomb 1}=\underset{B^c_{|t\mp T|}(0)}\int d^3y\; \varrho(\vect x-\vect y-\vect q)\frac{\vect y}{\|\vect y\|^3}.
  \end{align*}
  This integral is zero if, for example, $B^c_{|t\mp T|}(\vect q)\cap B_R(\vect x)=\emptyset$ and this is the case for $\vect x\in B_{|t\mp T|-R}(\vect q)$. So it remains to show that \termr{coulomb 2} also vanishes. Therefore, using $\triangle\vect E^C=4\pi\nabla\varrho$ as before, we get
  \begin{align*}
    \termr{coulomb 2}=-\underset{{\partial B_{|t\mp T|}}(0)}\int d\sigma(y)\; \frac{\vect y}{(t\mp T)^2}\varrho(\vect x-\vect y-\vect q).
  \end{align*}
  This expression is zero, for example, when $\partial B_{|t\mp T|}(\vect q)\cap B_R(\vect x)=\emptyset$ which is true for $\vect x\in B_{|t\mp T|-R}(\vect q)$. Hence, we have shown that for $t\in(-T+R,T-R)$ and $\vect x\in B_{|t\mp T|}(\vect q)$ the term (\ref{eqn:coulomb diff}) is zero.

Looking at the support of the integrand and the integration domain in term (\ref{eqn:wf relicts}) we find that for all $t\in(-T+R,T-R)$ it is zero for all $\vect x\in\bb R^{3}$ such that
\begin{align}
	\bigcup_{|s|>T}\left(\partial B_{|t-s|}(\vect x)\cap B_{R}(\vect q_{s})\right)=\emptyset. 
\end{align}
Hence, for $t\in(-T+R,T-R)$ and $\vect x\in B_{|t\mp T|}(\vect q)$ the term (\ref{eqn:wf relicts}) is also zero which concludes the proof.
\end{proof}

It remains to get a bound on the velocities of the charge trajectories within $[-T,T]$.

\begin{lemma}[Uniform Velocity Bound]\label{lem:uni vel bound}
  For finite $a,b$ there is a continuous and strictly increasing map $v^{a,b}:\bb R^+\to[0,1)$, $T\mapsto v_T^{a,b}$ such that
  \begin{align*}
    \sup\bigg\{\|\vect v(\vect p_{i,t})\|_{\bb R^3}\;\bigg|\;&t\in[-T,T],\|p\|\leq a,F\in\Ran S_T^{p,C},\|\varrho_{i}\|_{L^2_w}+\|w^{-1/2}\varrho_{i}\|_{L^2}\leq b,1\leq i\leq N\bigg\}\\
    \leq v_T^{a,b}<1.
  \end{align*}
  for $(\vect p_{i,t})_{1\leq i\leq N}:=\pP M_L[p,F](t,0)$ for all $t\in\bb R$.
\end{lemma}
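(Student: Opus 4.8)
Since $\vect v(\vect p)=\vect p/\sqrt{m^{2}+\vect p^{2}}$, the function $\phi(x):=x/\sqrt{m^{2}+x^{2}}$ is a continuous, strictly increasing bijection $[0,\infty)\to[0,1)$, and $\|\vect v(\vect p_{i,t})\|_{\bb R^{3}}=\phi(\|\vect p_{i,t}\|)$. Hence it is enough to produce a continuous, strictly increasing map $P:\bb R^{+}\to\bb R^{+}$ with $\|\vect p_{i,t}\|\leq P(T)$ for all $t\in[-T,T]$, all $1\leq i\leq N$, and all data admissible in the supremum; then $v_{T}^{a,b}:=\phi(P(T))$ is continuous, strictly increasing, valued in $[0,1)$ (because $P(T)<\infty$), and dominates the supremum. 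Any merely non-decreasing $P$ obtained below is upgraded to a strictly increasing one by passing to $P(T)+T$.

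\textbf{Linear a priori bound from $\text{ML-SI}_{\varrho}$.} Fix admissible $p,F$, set $\varphi^{0}:=(p,F)$ and $\varphi_{t}:=M_{L}[p,F](t,0)$; this is well defined since $F\in\Ran S_{T}^{p,C}\subset D_{w}(\mA^{\infty})$ by Theorem \ref{thm:the map ST}(ii), so $\varphi^{0}\in D_{w}(A^{\infty})\subset D_{w}(A)$. As $\vect p_{i,t}$ is one of the phase-space components of $\varphi_{t}$, we have $\|\vect p_{i,t}\|\leq\|\varphi_{t}\|_{\cal H_{w}}$. Now I invoke the a priori estimate (\ref{eqn:apriori lipschitz no diff}) of Theorem \ref{thm:globalexistenceanduniqueness}, which is crucially \emph{linear} in the initial datum; together with $\|w^{-1/2}\varrho_{i}\|_{L^{2}},\|\varrho_{i}\|_{L^{2}_{w}}\leq b$ and the monotonicity of $\bounds$-functions it furnishes a continuous non-decreasing $C_{1}(\,\cdot\,;b)$ with
\[
  \sup_{t\in[-T,T]}\|\varphi_{t}\|_{\cal H_{w}}\ \leq\ C_{1}(T;b)\,\|\varphi^{0}\|_{\cal H_{w}}.
\]

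\textbf{Uniform bound on the initial data.} We estimate $\|\varphi^{0}\|_{\cal H_{w}}\leq\|p\|+\|\pF\varphi^{0}\|_{\cal F_{w}}\leq a+\|F\|_{\cal F_{w}}$. Since $C\in\widetilde{\cal A}^{3}_{w}\subset\cal A^{1}_{w}$ by Lemma \ref{lem:boundary fields not empty}, Lemma \ref{lem:estimates for ST}(i) applies to the map $S^{p,C}_{T}$ and gives, for every $F\in\Ran S^{p,C}_{T}$,
\[
  \|F\|_{\cal F_{w}}\ \leq\ \|F\|_{\cal F^{1}_{w}}\ \leq\ \constr{ST radius const}^{(1)}(T,\|p\|)\ \leq\ C_{2}(T,a),
\]
with $C_{2}$ continuous and non-decreasing (using $\|p\|\leq a$). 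Combining the three displays yields, uniformly over all admissible $(p,F)$ and all $\varrho_{i}$ with $\|\varrho_{i}\|_{L^{2}_{w}}+\|w^{-1/2}\varrho_{i}\|_{L^{2}}\leq b$,
\[
  \sup_{t\in[-T,T]}\|\vect p_{i,t}\|\ \leq\ C_{1}(T;b)\,(a+C_{2}(T,a))\ =:\ P_{0}(T).
\]
Setting $P(T):=P_{0}(T)+T$ and $v_{T}^{a,b}:=\phi(P(T))$ completes the proof by the reduction step.

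\textbf{Where the difficulty lies.} Conceptually this is exactly the ``very rough Gr\"onwall estimate'' advertised in the introduction: $\text{ML-SI}_{\varrho}$ obeys a bound that is \emph{linear} in the initial phase-space point (the Lorentz force is affine in the fields and the ML-SI velocities never reach the speed of light), and the Coulomb boundary fields only contribute a source of size controlled by Lemma \ref{lem:estimates for ST}. The one point demanding care is the bookkeeping that the constant $C_{2}$ bounding $\Ran S^{p,C}_{T}$ — unwound through Definition \ref{def:STX}, Lemma \ref{lem:boundary fields not empty} and Lemma \ref{lem:estimates for ST} — depends on the remaining data only through $a$, $b$, $T$; here one repeatedly uses that the ML-SI velocities are below one (so $\|\vect q_{i,t}\|\leq\|p\|+T$ on $[-T,T]$), the elementary bound on $\|\mJ(\varphi_{s})\|_{\cal F_{w}}$ in terms of $\|\varrho_{i}\|_{L^{2}_{w}}$ coming from Definition \ref{def:operator_J}, and the Coulomb-field estimate of Corollary \ref{cor:LW_estimate}.
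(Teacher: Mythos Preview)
Your proof is correct and follows essentially the same route as the paper: bound $\|F\|_{\cal F_w}$ for $F\in\Ran S_T^{p,C}$ via Lemma \ref{lem:estimates for ST}(i), feed $\|\varphi^0\|_{\cal H_w}\leq a+\|F\|_{\cal F_w}$ into the linear a~priori estimate (\ref{eqn:apriori lipschitz no diff}), and then convert the resulting momentum bound into a velocity bound via $\phi$. Your explicit ``$+T$'' adjustment to force strict monotonicity is a small improvement over the paper, which simply asserts strict increase from the $\bounds$ property without justification.
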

\begin{proof}
  Recall the estimate (\ref{eqn:apriori lipschitz  no diff}) from the $\text{ML-SI}_\varrho$ existence and uniqueness Theorem \ref{thm:globalexistenceanduniqueness} which gives the following $T$ dependent upper bounds on these $\text{ML-SI}_\varrho$ solutions for all $\varphi\in D_w(A)$:
  \begin{align}
     \sup_{t\in[-T,T]}\|M_L[\varphi](t,0)\|_{\cal H_w}\leq \constr{apriori ml rho}\left(T,\|\varrho_{i}\|_{L^2_w},\|w^{-1/2}\varrho_{i}\|_{L^2}; 1\leq i\leq N\right)\; \|\varphi\|_{\cal H_w}.
  \end{align}
  Note further that by Lemma \ref{lem:estimates for ST} since $C\in\cal A^1_w$, there is a $\constr{ST radius const}^{(1)}\in\bounds$ such that fields $F\in\Ran S_T^{p,C}\in D_w(\mA^\infty)$ fulfill
  \begin{align*}
    \|F\|_{\cal F_w}\leq \constr{ST radius const}^{(1)}(T,\|p\|)\leq\constr{ST radius const}^{(1)}(T,a).
  \end{align*}
  Therefore, setting $c:=a+\constr{ST radius const}^{(1)}(T,a)$ we estimate the maximal momentum of the charges by
  \begin{align*}
    &\sup\bigg\{\|\vect p_{i,t}\|_{\bb R^3}\;\bigg|\;t\in[-T,T],\|p\|\leq a,F\in\Ran S_T^{p,C},\|\varrho_{i}\|_{L^2_w}+\|w^{-1/2}\varrho_{i}\|_{L^2}\leq b,1\leq i\leq N\bigg\}\\
    &\leq\sup\bigg\{\|\vect p_{i,t}\|_{\bb R^3}\;\bigg|\;t\in[-T,T],\varphi\in D_w(A),\|\varphi\|_{\cal H_w}\leq c,\|\varrho_{i}\|_{L^2_w}+\|w^{-1/2}\varrho_{i}\|_{L^2}\leq b,1\leq i\leq N\bigg\}\\
    &\leq \constr{apriori ml rho}\left(T,b,b,\right)c=:p_T^{a,b}<\infty.
  \end{align*}
  Now, since $\constr{apriori lipschitz}$ as well as $\constr{ST radius const}^{(1)}$ are in $\bounds$ the map $T\mapsto p_T^{a,b}$ as $\bb R^+\to\bb R^+$ is continuous and strictly increasing. We conclude that claim is fulfilled for the choice
  \begin{align*}
    v_T^{a,b}:=\frac{p_T^{a,b}}{\sqrt{m^2+(p_T^{a,b})^2}}<1.
  \end{align*}
\end{proof}

With this we can prove our last result, i.e. Theorem \ref{thm:existence of L}.

\begin{proof}[\textbf{Proof of Theorem \ref{thm:existence of L} (True $\text{WF}_{\varrho}$ Interaction)}]
   Let $F^{*}$ be a fixed point $F^{*}=S_T^{p,C}[F^{*}]$ which exists by Theorem \ref{thm:ST has a fixed point}. Define the charge trajectories $(\vect q_i,\vect p_i)_{1\leq i\leq N}$ by \ifarxiv{\linebreak}{}$t\mapsto(\vect q_{i,t},\vect p_{i,t})_{1\leq i\leq N}:=\pQP M_L[p,F^{*}](t,0)$. By the fixed point properties of $F^{*}$ we know that $(\vect q_i,\vect p_i)_{1\leq i\leq N}\in\cal T^{p,C}_T$ and therefore solve the conditional $\text{WF}_\varrho$ equations (\ref{eqn:bWF equation written out})-(\ref{eqn:WF with boundary fields}) for Newtonian Cauchy data $p$ and boundary fields $C$. In order to show that the charge trajectories $(\vect q_i,\vect p_i)_{1\leq i\leq N}$ are also in $\cal T^L_\WF$ for the given $L$ we need to show that the difference (\ref{eqn:bwf wf diff}), which equals
\begin{align*}
  \begin{split}
    &M_{\varrho_{i}}[X^\pm_{i,\pm T},(\vect q_i,\vect p_i)](t,\pm T)-M_{\varrho_{i}}[\vect q_i,\vect p_i](t,\pm\infty)\\
    &=
      \begin{pmatrix}
      \partial_t & \nabla\wedge\\
      -\nabla\wedge & \partial_t
      \end{pmatrix}
      K_{t\mp T}*X^\pm_{i,\pm T}
      + K_{t\mp T}*\begin{pmatrix}
        -4\pi \vect v(\vect p_{i,\pm T})\varrho_{i}(\cdot-\vect q_{i,\pm T})\\
        0
      \end{pmatrix}\\
          &\quad- 4\pi \int_{\pm\infty}^{\pm T} ds\; K_{t-s} * \begin{pmatrix}
      -\nabla & - \partial_s \\
      0 & \nabla\wedge
    \end{pmatrix}
    \begin{pmatrix}
      \varrho_{i}(\cdot-\vect q_{i,s})\\
      \vect v(\vect p_{i,s})\varrho_{i}(\cdot-\vect q_{i,s})
    \end{pmatrix},
  \end{split}
\end{align*}
   is zero for times $t\in[-L,L]$ at least for all points $\vect x$ in a tube around the position of the $j\neq i$ charge trajectories. Lemma \ref{lem:shadows of boundary fields} states that this expression is zero for all $t\in(-T+R,T-R)$ and $\vect x\in B_{|t\mp T|-R}(\vect q_{i,{\pm T}})$. So it is sufficient to show that the charge trajectories spend the time interval $[-L,L]$ in this particular space-time region. Clearly, the position $\vect q_{i}^0$ at time zero is in $B_{T-R}(\vect q_{i,\pm T})$. Hence, we estimate the earliest exit time $L$ of this space-time region of a charge trajectory $j$, i.e. the time when the $j$th charge trajectory leaves the region $B_{|L\mp T|-R}(\vect q_{i,\pm T})$.   By Lemma \ref{lem:uni vel bound} the charges can in the worst case move apart from each other with velocity $v^{a,b}_T$ during the time interval $[-T,T]$. Putting the origin at $\vect q_i^0$ we can compute the exit time $L$ by
   \begin{align*}
     -v^{a,b}_T T=\|\vect q_j^0-\vect q_i^0\|+2R+v^{a,b}_T L -(T-L)
   \end{align*}

   This gives the lower bound $L:=\frac{(1-v^{a,b}_T)T-\namer{qmax}-2R}{1+v_T^{a,b}}>0$.
   Since $v_{T}^{a,b}<1$ we have $(1-v^{a,b}_T)T>0$. Hence, for sufficiently small $R$ and $\namer{qmax}$ it is true that $L>0$ which concludes the proof.
\end{proof}
 
\addcontentsline{toc}{section}{References}

\vskip1cm

\noindent\emph{G. Bauer}\\ FH M\"unster\\
Bismarckstra\ss e 11, 48565 Steinfurt, Germany\\

\noindent\emph{D.-A. Deckert}\\
Department of Mathematics, University of California Davis\\
One Shields Avenue, Davis, California 95616, USA\\

\noindent \emph{D. D\"urr}\\
Mathematisches Institut der LMU M\"unchen\\
Theresienstra\ss e 39, 80333 M\"unchen, Germany

\end{document}